\documentclass[11pt,letterpaper]{article}

\usepackage[T1]{fontenc}
\usepackage[utf8]{inputenc}
\usepackage[letterpaper,margin=1in]{geometry}
\IfFileExists{lmodern.sty}{\usepackage{lmodern}\usepackage{microtype}}{\usepackage[expansion=false]{microtype}}
\usepackage{amsmath,amssymb,amsthm,mathtools}
\usepackage{enumitem}
\usepackage{booktabs}
\usepackage{tabularx}
\usepackage{graphicx}
\usepackage{float}
\usepackage{caption}
\usepackage{xcolor}
\usepackage{hyperref}
\usepackage[nameinlink,capitalise,noabbrev]{cleveref}
\usepackage{indentfirst}

\hypersetup{
  colorlinks=true,
  linkcolor=blue,
  citecolor=blue,
  urlcolor=blue,
  pdftitle={Improved Depth-2 Linear Circuits for Disjointness via Quenched Lyapunov Exponents},
  pdfauthor={Lixi Ye}
}

\setlist{noitemsep,topsep=0.35em}
\newtheorem{theorem}{Theorem}[section]
\newtheorem{lemma}[theorem]{Lemma}
\newtheorem{proposition}[theorem]{Proposition}
\newtheorem{corollary}[theorem]{Corollary}
\theoremstyle{definition}
\newtheorem{definition}[theorem]{Definition}
\theoremstyle{plain}
\newtheorem*{remark}{Remark}

\newtheorem{innerlemma}{Lemma}
\newenvironment{manuallemma}[1]{\renewcommand\theinnerlemma{#1}\innerlemma}{\endinnerlemma}
\newtheorem{innerfact}{Fact}
\newenvironment{manualfact}[1]{\renewcommand\theinnerfact{#1}\innerfact}{\endinnerfact}
\newtheorem*{namedlemma}{Lemma}

\DeclareMathOperator{\nnz}{nnz}

\DeclareMathOperator{\clamp}{clamp}
\DeclareMathOperator{\freq}{freq}
\DeclareMathOperator{\FC}{FC}
\DeclareMathOperator{\conv}{conv}
\DeclareMathOperator{\Ber}{Ber}
\DeclareMathOperator{\Bin}{Bin}
\DeclareMathOperator{\wnnz}{wnnz}
\DeclareMathOperator{\sgn}{sgn}
\DeclareMathOperator{\Lip}{Lip}
\DeclareMathOperator{\Ch}{Ch}
\DeclareMathOperator{\Mat}{Mat}

\newcommand{\one}{\mathbf{1}}
\newcommand{\T}{\mathsf{T}}
\newcommand{\eps}{\varepsilon}

\newcommand{\paperfigure}[4]{%
\begin{figure}[H]
\centering
\IfFileExists{#1}{\includegraphics[width=0.95\textwidth,height=#3,keepaspectratio]{#1}}{\fbox{\parbox[c][#3][c]{0.88\textwidth}{\centering Image file \texttt{\detokenize{#1}} will be inserted here.}}}
\caption{#2}
\label{#4}
\end{figure}%
}

\title{Improved Depth-2 Linear Circuits for Disjointness\\ via Quenched Lyapunov Exponents}
\author{Lixi Ye\thanks{Institute for Interdisciplinary Information Sciences, Tsinghua University, Beijing, China.}\\
{\small\nolinkurl{ylx25@mails.tsinghua.edu.cn}}}
\date{\today}

\begin{document}

\renewcommand{\thefootnote}{\fnsymbol{footnote}}
\setcounter{footnote}{1}
\maketitle
\setcounter{footnote}{0}
\renewcommand{\thefootnote}{\arabic{footnote}}

\begin{abstract}
Let \(R_1 := \begin{pmatrix}1&1\\1&0\end{pmatrix}\), and write \(R_n := R_1^{\otimes n}\) for the \(N \times N\) disjointness matrix, where \(N = 2^n\). We construct new depth-2 linear circuits for \(R_n\): one of size \(O(N^{1.2449})\) and one of maximum input-output degree \(O(N^{0.3199})\), improving upon the results of Alman and Li \cite{AL25} (FOCS 2025), who achieved size \(O(N^{1.2495})\) and degree \(O(N^{1/3})\).

In this paper, we develop a number of mechanisms for designing depth-2 linear circuits that compute linear transforms represented by Kronecker powers of a general base matrix. For \(R_1\) in particular, we obtain our refined bounds by expressing a certain recursive circuit construction as a random matrix product and analyzing its quenched Lyapunov exponent, a quantity studied in the theory of random dynamical systems. Our new upper bounds imply faster deterministic algorithms for \#OV, more efficient online data structures for the Subset Query and Partial Match primitives, and smaller low-depth linear circuits for both disjointness and a class of related non-Kronecker-power matrices.
\end{abstract}

\newpage

\tableofcontents

\newpage

\section{Introduction}\label{sec:1}

The Kronecker product of two matrices \(A\in\Mat_{n_0\times m_0}\) and \(B\in\Mat_{n_1\times m_1}\), denoted by \(A\otimes B\), is defined by
\[
(A\otimes B)[(i_0,i_1),(j_0,j_1)]:=A[i_0,j_0]\cdot B[i_1,j_1].
\]
For \(k\geq1\), the \(k\)-th Kronecker power of a matrix \(A\) is \(A^{\otimes k}:=\underbrace{A\otimes\cdots\otimes A}_{k\text{ times}}\), with the convention \(A^{\otimes 0}:=[1]\).

We study \emph{linear circuits}: circuits in which each gate computes a fixed linear combination of its input gates, so that the circuit as a whole computes a linear transform. We focus on the following question:

\begin{itemize}
\item Fix a base matrix \(A\in\Mat_{\mathfrak{c}_0\times\mathfrak{c}_1}\). As \(k\to\infty\), how can one design depth-2 linear circuits that compute the linear transform represented by \(A^{\otimes k}\), with asymptotically small \emph{size} or \emph{maximum input-output degree}?
\end{itemize}

\subsection{Problem, Results, and Related Work}\label{sec:1.1}

\subsubsection{Depth-2 Linear Circuits}\label{sec:1.1.1}

The main object of study in this paper is the depth-2 linear circuit. A depth-2 linear circuit is a linear circuit with \(2+1=3\) layers: an \emph{input layer}, a \emph{middle layer}, and an \emph{output layer}. Each middle gate computes a fixed linear combination of the input gates, and each output gate computes a fixed linear combination of the middle gates.

If the three layers of a depth-2 linear circuit \(C\) have \(n\), \(r\), and \(m\) gates, respectively, then \(C\) can be written as an ordered pair of matrices \(C=(A,B)\), where \(A\in\Mat_{n\times r}\) and \(B\in\Mat_{r\times m}\) record the weights from the input gates to the middle gates and from the middle gates to the output gates, respectively.

Interpreting the input as a row vector, an input \(x\) is transformed as \(x\mapsto xA\mapsto xAB\). The input-output behavior of the circuit is therefore exactly the linear transformation represented by the matrix \(M=AB\), and we say that \(C\) \emph{computes} \(M\).

For a circuit \(C=(A,B)\), its \emph{size} \cite{JS13} is defined as
\[
\mathrm{S}(C):=\nnz(A)+\nnz(B),
\]
and its \emph{maximum input-output degree} \cite[Section 3.2]{AL25}, abbreviated as ``degree,'' is defined as
\[
\mathrm{D}(C):=\max\left\{\max_x\nnz(A[x,*]),\,\max_y\nnz(B[*,y])\right\}.
\]
Here \(\nnz(\cdot)\) denotes the number of nonzero entries of a matrix or vector; only edges with nonzero weight need to be present in the circuit. A detailed discussion of the circuit model appears in \cref{sec:A.1}.

\subsubsection{Asymptotic Size and Degree Constants}\label{sec:1.1.2}

Let \(\mathcal{R}\) be a commutative semiring containing \(0\) and \(1\), and let \(\Gamma\subseteq\mathcal{R}\) be a multiplicatively closed subset containing \(0\) and \(1\). Let \(\mathsf{Circ}_{\mathcal{R},\Gamma}^{(2)}\) denote the set of all depth-2 linear circuits whose weights belong to \(\Gamma\).

For a matrix \(M\in\mathcal{R}^{n\times m}\), we define its \emph{size cost} \(\mathrm{S}(M)\) and its \emph{degree cost} \(\mathrm{D}(M)\) as the minimum size and the minimum degree, respectively, over all circuits \(C\in\mathsf{Circ}_{\mathcal{R},\Gamma}^{(2)}\) computing \(M\). Formally,
\[
\mathrm{S}_{\mathcal{R},\Gamma}(M):=\min_{\substack{C=(A,B)\in\mathsf{Circ}_{\mathcal{R},\Gamma}^{(2)}\\ AB=M}}\mathrm{S}(C),
\qquad
\mathrm{D}_{\mathcal{R},\Gamma}(M):=\min_{\substack{C=(A,B)\in\mathsf{Circ}_{\mathcal{R},\Gamma}^{(2)}\\ AB=M}}\mathrm{D}(C).
\]
We further define the \emph{asymptotic size constant} and the \emph{asymptotic degree constant} \cite[Section 3.2]{AL25} of \(M\) as
\[
\sigma_{\mathcal{R},\Gamma}(M):=\lim_{n\to\infty}\mathrm{S}_{\mathcal{R},\Gamma}(M^{\otimes n})^{1/n},
\qquad
\delta_{\mathcal{R},\Gamma}(M):=\lim_{n\to\infty}\mathrm{D}_{\mathcal{R},\Gamma}(M^{\otimes n})^{1/n}.
\]

Throughout this paper, we deliberately work in the restrictive model \(\mathcal{R}=\mathbb{Z}\) and \(\Gamma=\{0,\pm1\}\). Since \(\mathcal{R}\) and \(\Gamma\) are fixed, we suppress them from the notation and abbreviate \(\mathrm{S}(M)\), \(\mathrm{D}(M)\), \(\sigma(M)\), and \(\delta(M)\) for \(\mathrm{S}_{\mathcal{R},\Gamma}(M)\), \(\mathrm{D}_{\mathcal{R},\Gamma}(M)\), \(\sigma_{\mathcal{R},\Gamma}(M)\), and \(\delta_{\mathcal{R},\Gamma}(M)\), respectively\footnote{Because the elements \(0\), \(1\), and \(-1\) can be identified in any field, we have \(\sigma_{\mathbb{F},\mathbb{F}}(M)\leq\sigma_{\mathbb{Z},\{0,\pm1\}}(M)\) and \(\delta_{\mathbb{F},\mathbb{F}}(M)\leq\delta_{\mathbb{Z},\{0,\pm1\}}(M)\) for every integer matrix \(M\). Thus our upper bounds for \(R_1\) and for \(P_1^{(c)}\) hold verbatim, with the same numerical constants, over an arbitrary field \(\mathbb{F}\).}.

In this convention, every integer matrix \(M\) admits at least one depth-2 linear circuit, and the limits defining \(\sigma(M)\) and \(\delta(M)\) exist because both circuit costs are submultiplicative under Kronecker products. Indeed, setting
\[
s_n:=\log_2\mathrm{S}(M^{\otimes n}),
\qquad
d_n:=\log_2\mathrm{D}(M^{\otimes n}),
\]
the sequences \(\{s_n\}_{n\ge1}\) and \(\{d_n\}_{n\ge1}\) are subadditive, so Fekete's lemma \cite{Fek23} applies to them\footnote{This phenomenon occurs widely in theoretical computer science and related fields: many important parameters are defined by limits of the form \(\lim_{n\to\infty}(a_n)^{1/n}\), where \(a_n\) is associated with the \(n\)-th power of an underlying object. Classic examples include the asymptotic tensor rank \(\widetilde{\mathrm{R}}(T):=\lim_{n\to\infty}\mathrm{R}(T^{\otimes n})^{1/n}\), which captures the matrix multiplication exponent through the identity \(\widetilde{\mathrm{R}}(\langle2,2,2\rangle)=2^\omega\) \cite{Str69, CW82}, and the Shannon capacity \(\Theta(G):=\lim_{n\to\infty}\alpha(G^{\boxtimes n})^{1/n}\) of a graph \(G\) \cite{Sha56}. Submultiplicativity or supermultiplicativity guarantees the existence of these limits. Determining their values, however, is notoriously difficult, and many such parameters are not even known to be computable.}. Further details appear in \cref{sec:A.1}.

\subsubsection{Main Theorem and Related Work}\label{sec:1.1.3}

Our main theorem is the following.

\begin{theorem}[see \cref{cor:sigma,cor:delta,thm:pm}]\label{thm:main}
For \(R_1=\begin{pmatrix}1&1\\1&0\end{pmatrix}\), we have
\[
\sigma(R_1)<2^{1.244844}<2.37,\qquad\delta(R_1)<2^{0.319866}<1.25.
\]
Moreover, for every integer \(c\ge2\), let
\[
P_1^{(c)}:=\begin{pmatrix}I_c&\one_c\end{pmatrix}=
\begin{pmatrix}
1&0&\cdots&0&1\\
0&1&\cdots&0&1\\
\vdots&\vdots&\ddots&\vdots&\vdots\\
0&0&\cdots&1&1
\end{pmatrix}
\]
denote the \(c\)-ary partial-match incidence matrix. Then
\[
\sigma(P_1^{(c)})\le c\cdot\psi(\log_2c)=c\left(1+\Theta\!\left(\frac{\log\log c}{\log c}\right)\right),
\]
where \(\psi(r)\) denotes the unique root \(x>1\) of the equation \(x=1+\frac{1}{x^r}\), and
\[
\delta(P_1^{(c)})\le c^{\frac{c-1}{(c-1)+c\log_2 c}}=2-\Theta\!\left(\frac{1}{\log c}\right).
\]
In particular, for \(P_1:=P_1^{(2)}\),
\[
\sigma(P_1)\le 1+\sqrt{5}\approx3.236,
\qquad
\delta(P_1)\le 2^{1/3}<1.26.
\]
\end{theorem}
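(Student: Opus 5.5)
This theorem is a summary of later results, so I would assemble it from three independent constructions. All of them rest on the submultiplicativity already noted: a depth-2 circuit for \(M^{\otimes k}\) of size \(S\) (respectively degree \(D\)) gives \(\sigma(M)\le S^{1/k}\) (respectively \(\delta(M)\le D^{1/k}\)). It therefore suffices to build good circuits for a sequence of Kronecker powers and take the limit.

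\emph{The \(P_1^{(c)}\) bounds.} These are the most elementary, so I would do them first. The key observation is that \((P_1^{(c)})^{\otimes n}\) has a row for each word \(x\in[c]^n\) and a column for each word \(y\in([c]\cup\{*\})^n\). The entry is \(1\) exactly when \(y\) agrees with \(x\) outside the wildcard positions.

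I would route each pair \((x,y)\) through middle gates indexed by pairs \((S,z)\). Here \(S\subseteq[n]\) is a set of coordinates and \(z\) is a partial word on \(S\). The input \(x\) connects to \((S,z)\) iff \(x|_S=z\), and the gate \((S,z)\) feeds the output \(y\) iff \(y\) is wildcard-free on \(S\), agrees with \(z\) there, and \(S\) is chosen canonically from \(y\). For example, \(S\) could be a threshold rule on the number of wildcards of \(y\), applied recursively on prefixes.

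Counting edges then gives a linear recurrence \(a_n = c\,a_{n-1}+ (\text{term})\,a_{n-r}\) with \(r\approx\log_2 c\). Its growth rate is \(c\) times the root of \(x=1+x^{-r}\), which is \(\psi(\log_2c)\). For the degree bound I would instead balance the fan-in and fan-out. Choosing the split point at a fraction \(\alpha\) of coordinates equalises \(c^{\alpha n}\) against the binomial-type count \(2^{(1-\alpha)n}\)-ish, and optimising \(\alpha\) yields the exponent \(\frac{c-1}{(c-1)+c\log_2c}\). Setting \(c=2\) gives \(1+\sqrt5\) (since \(\psi(1)=\varphi\)) and \(2^{1/3}\).

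\emph{The \(R_1\) bounds.} Here I would follow the Alman–Li style recursion. Write \(R_1^{\otimes k}\) as a sum of rectangle-like pieces, each of which is a Kronecker product of small blocks \(R_1\), \(P_1\), \(P_1^\T\), \(J\), and so on. Each piece is computed by a depth-2 circuit whose cost multiplies across coordinates.

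The improvement comes from letting the decomposition choice depend on the current "type" of each coordinate block. The total cost of the recursion then becomes a product of \(2\times2\) (or small) nonnegative matrices applied along a random path. The path is random because summing over all index patterns is equivalent to taking an expectation over uniformly random types.

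By the Furstenberg–Kesten theorem, \(\frac1n\log\|M_{\omega_n}\cdots M_{\omega_1}\|\) converges almost surely to the quenched Lyapunov exponent \(\lambda\). A concentration or typical-sequence argument then shows that the overall size is \(2^{(\lambda+o(1))n}\) times the annealed counting factor.

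\emph{Main obstacle.} The hard step is certifying the numerical constants \(1.244844\) and \(0.319866\). A Lyapunov exponent is not computable in closed form, so I would bound it rigorously from above. The plan is to use \(\lambda\le\frac1m\mathbb{E}\log\|M_{\omega_m}\cdots M_{\omega_1}\|\) for a finite block length \(m\) and a well-chosen submultiplicative norm, such as a weighted \(\ell_1\) norm adapted to an approximate invariant cone vector. The finite expectation is then evaluated exactly by enumeration, with rounding errors controlled. I expect this to be the main obstacle, together with verifying that the distributed recursion really yields a valid \(\{0,\pm1\}\)-weight circuit, with no overlaps in the decomposition.
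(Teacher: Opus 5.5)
\paragraph{The missing step for \(R_1\).} Your plan has no mechanism that turns a Lyapunov exponent into a circuit bound. The quenched exponent controls the \emph{typical} log-degree \(\mathbb{E}_{X\sim p^n}[\log_2 L_C(X)]\). The size of your rebalanced circuit, however, is \(\sum_x L_C(x)+\sum_y R_C(y)\), which is an annealed quantity: it is governed by \(\log_2\mathbb{E}\), and by Jensen this is generally exponentially larger. The maximum degree is a worst case over gates. No ``concentration or typical-sequence argument'' closes this gap, because the exponentially rare gates of exponentially large degree cannot just be dropped: their rows and columns must still be computed.

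Rebalancing alone optimizes exactly this annealed quantity, and that is what the Alman--Li duality theorem already characterizes. The improvement to \(2^{1.244844}\) comes from the step you are missing, the size and degree transfer theorems:
\begin{itemize}
\item split \(M^{\otimes n}\) into the polynomially many blocks \(M^{\otimes n}[\FC_n(p),\FC_n(q)]\);
\item in each block, realize a point of the convex hull of the manifestations \((f_C(p),g_C(q))\) by a Kronecker product of circuits;
\item delete the gates of degree above \(2^{(a+o(1))n}\), using Hoeffding;
\item repair the resulting broken copy with the hole-fixing lemma, which uses the \(\mathfrak{S}_n\)-symmetry of the block at only a \(2^{o(n)}\) cost.
\end{itemize}
Only this yields \(\sigma(R_1)\le2^{\max_p\{\ell_{\mathrm{Jessica}}(p)+\mathrm{h}(p)\}}\). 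That bound needs the exponent for every law \(\Bin(7,p)\), since the bottleneck is near \(p\approx0.37\), not at uniform types. Likewise \(\delta(R_1)\le2^{\max_{p,q}\mathbf{z}(p,q)}\) needs a whole collection \(C_0,\dots,C_4,C_{\mathrm{S}},C_{\mathrm{R}},C_{\mathrm{R}^\T}\), and your proposal contains no argument for \(\delta(R_1)\) at all. Your block-norm certification of \(\lambda\) is secondary by comparison. It would still have to be uniform in \(p\) and apply to the finite-depth circuits, which the paper secures with toll-and-potential certificates, lattice envelopes, and a Lipschitz grid check.

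\paragraph{The \(P_1^{(c)}\) bounds.} The same omission breaks your bound on \(\delta(P_1^{(c)})\), and here the arithmetic shows it. Balancing \(c^{\alpha n}\) against \(2^{(1-\alpha)n}\) gives \(\alpha=1/(1+\log_2c)\) and only \(\delta(P_1^{(c)})\le c^{1/(1+\log_2c)}\), i.e.\ \(\sqrt2\) rather than \(2^{1/3}\) for \(c=2\). Pull and push cannot do better even inside the degree transfer theorem: at \(q_c=1\) their manifestations are \((0,\log_2c)\) and \((1,0)\). The paper adds the circuits \(C'_{(i)}\), with manifestations \((1-p_i,q_i\log_2c)\). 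It puts weight \(\frac{c-1}{(c-1)+c\log_2c}\) on \(C'_{\mathsf{pull}}\) and \(\frac{\log_2c}{(c-1)+c\log_2c}\) on each \(C'_{(i)}\), exploiting \(\sum_i(1-p_i)=c-1\). That is an averaging only the transfer theorem can cash in.

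Your size bound for \(P_1^{(c)}\), by contrast, can be made rigorous by a route more elementary than the paper's (size transfer theorem plus a KL-divergence estimate). Rebalance pull/push on the tilt \(s=\log_2(r/\ell)\), pulling iff \(s<0\). With \(\mu=\psi(\log_2c)\), the children's potentials \(\ell\mu^{s}\) sum to exactly \(c\mu\) times the parent's under either choice. Since \(s\) stays in \([-\log_2c,1)\), the size is \(O((c\mu)^n)\), with no rounding of \(\log_2c\) needed.

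As written, though, your gate rule is wrong. If \(y\) must be wildcard-free on \(S\) and read a single gate, the circuit ignores the non-wildcard positions of \(y\) outside \(S\). If instead \(S\) is all of them, you just get the push circuit, of size about \((2c)^n\). The correct rule is that \(S\) contains every non-wildcard position of \(y\), and \(y\) sums over the values of \(z\) on the pulled wildcard positions.
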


There is a rich history of work on upper bounds for \(\sigma(R_1)\) and \(\delta(R_1)\). The starting point is Yates' algorithm \cite{Yat37}, which in our language gives \(\sigma(R_1)\leq\sqrt{6}\) and \(\delta(R_1)\leq\sqrt{2}\). Subsequently, for \(\sigma(R_1)\), Jukna and Sergeev \cite{JS13}, Alman, Guan, and Padaki \cite{AGP23}, Sergeev \cite{Ser22}, and Alman and Li \cite{AL25} gave successively better upper bounds. For \(\delta(R_1)\), Williams \cite{Wil24} gave an upper bound on the \emph{asymptotic equality rank constant} \(\theta(R_1)\)\footnote{The asymptotic equality rank constant \(\theta(M)\) is another asymptotic constant, defined as \(\lim_{n\to\infty}\mathrm{Q}(M^{\otimes n})^{1/n}\), where \(\mathrm{Q}(M)\) denotes the equality rank of \(M\), namely the minimum \(q\) admitting a decomposition \(M=\sum_{t=1}^qa_tE^{(t)}\) in which each \(E^{(t)}_{ij}=\one[f_t(i)=g_t(j)]\). It is easy to verify that \(\mathrm{D}(M)\leq\mathrm{Q}(M)\), and hence \(\delta(M)\leq\theta(M)\); see \cite[Propositions 3.1 and 3.2]{AL25}. The quantity \(\theta(R_1)\) is used in \cite{Wil24} not only to obtain algorithms for \#OV, but also to connect equality-rank lower bounds with exact-threshold circuit lower bounds. As pointed out in \cite{AL25}, however, if the goal is only to obtain algorithms for \#OV, then one may instead work with the smaller quantity \(\delta(R_1)\).}, and hence also on \(\delta(R_1)\), while Alman and Li \cite{AL25} gave a much better upper bound on \(\delta(R_1)\). \Cref{tab:sigma-history,tab:delta-history} summarize this history.

\begin{table}[H]
\centering
\small
\begin{tabular}{@{}lll@{}}
\toprule
Upper bound & Source/Formula & Reference \\
\midrule
\(2^{1.293}\) & \(\sqrt{6}\) & \cite{Yat37} \\
\(2^{1.272}\) & \(1+\sqrt{2}\) & \cite{JS13} \\
\(2^{1.258}\) & order-\(3\) Sergeev & \cite{AGP23} \\
\(2^{1.2504}\) & order-\(15\) Sergeev & \cite{Ser22} \\
\(2^{1.2495}\) & asymptotic spectrum & \cite{AL25} \\
\(\mathbf{2^{1.2449}}\) & Lyapunov exponent & \textbf{this paper} \\
\bottomrule
\end{tabular}
\caption{Historical progression of upper bounds on \(\sigma(R_1)\).}
\label{tab:sigma-history}
\end{table}

\begin{table}[H]
\centering
\small
\begin{tabular}{@{}lll@{}}
\toprule
Upper bound & Source/Formula & Reference \\
\midrule
\(2^{0.5}\) & \(\sqrt{2}\) & \cite{Yat37} \\
\(2^{0.4644}\) & \(5^{1/5}\) & \cite{Wil24} \\
\(2^{0.4151}\) & \(\frac{4}{3}\) & \cite{AL25}, without hole-fixing \\
\(2^{0.333\overline{3}}\) & \(2^{1/3}\) & \cite{AL25} \\
\(\mathbf{2^{0.3199}}\) & Lyapunov exponent + degree landscape & \textbf{this paper} \\
\bottomrule
\end{tabular}
\caption{Historical progression of upper bounds on \(\delta(R_1)\).}
\label{tab:delta-history}
\end{table}

Write \(R_k:=R_1^{\otimes k}\) for the \(k\)-th Kronecker power of \(R_1\), equivalently the disjointness matrix on \(k\)-bit vectors.

In the remainder of this section, we explain the intuition behind our paper: the limitations we observe in prior work, and the new ideas we introduce to address them. Since \cite{AL25} gives the best previous upper bounds on both size and degree, we compare our approach with that of \cite{AL25}.

A direct way to obtain a depth-2 linear circuit computing \(R_n\) is to handpick depth-2 linear circuits for constant Kronecker powers, which we call \emph{base circuits}, and simply take their Kronecker products.

(a) For optimizing size, starting in fact with \cite{JS13}, prior works observed that there is a construction better than directly taking Kronecker products of base circuits. In the first step, using a base circuit for \(R_d\), equivalently a rank-\(1\) decomposition \(R_d=\sum_i U_iV_i^\T\), one rewrites \(R_n\) as \(R_n=\sum_i U_iV_i^\T\otimes R_{n-d}\). If for every \(i\) we replace \(R_{n-d}\) by the same circuit, this amounts to directly taking Kronecker products of several circuits. However, for each \(i\), one can instead choose a different circuit for \(R_{n-d}\), based on the data of \(U_i\) and \(V_i\). This selection rule is called \emph{rebalancing}. The works \cite{JS13, AGP23, Ser22} each proposed heuristic selection rules, while the \emph{duality theorem} of \cite{AL25} completely solves the circuit-selection problem.

(b) For optimizing degree, when one takes Kronecker products of several base circuits, the degrees of most gates in the resulting product circuit concentrate, by Hoeffding's inequality, around the geometric mean of their degrees. However, an exponentially small fraction of gates may have degrees exceeding this geometric mean by an exponential factor. The work \cite{AL25} exploits the bit-permutation symmetry of the Kronecker-power matrices themselves to compensate for this loss, by removing the part lying far outside the concentration window; they call this procedure \emph{degree uniformization}.

The reason the present paper can improve on (a) is that prior works did not combine the advantages of these two routes. A circuit constructed via the selection rule in (a) may still contain, as in (b), many gates whose degrees are far above the median. If degree uniformization can be applied to such a circuit, it may noticeably improve both size and degree. However, analyzing the degrees of circuits constructed by rebalancing is much harder than for plain Kronecker-product circuits, and this degree analysis is one of the core technical contributions of our paper. Since the duality theorem of \cite{AL25} characterizes the infimum of the final size over circuits constructed ``only through rebalancing,'' whereas our paper essentially applies degree uniformization once more to the final constructed circuit, our construction has the potential to break the lower bound that the duality theorem of \cite{AL25} imposes on rebalancing constructions.

\subsection{Technical Overview}\label{sec:1.2}

\paragraph{The size and degree transfer theorems.}
We give a unified approach to the size and degree questions via a pair of \emph{size and degree transfer theorems}. Given a matrix \(M\in\mathbb{Z}^{\mathfrak{c}_0\times\mathfrak{c}_1}\) with no identically zero row or column, and a collection \(\mathcal{C}\) of depth-2 linear circuits, each computing some positive Kronecker power of \(M\), these theorems output upper bounds on \(\sigma(M)\) and \(\delta(M)\). Specifically, for a circuit \(C=(A,B)\) computing \(M^{\otimes k}\), let \(L_C(x):=\nnz(A[x,*])\) denote the degree of input gate \(x\), and let \(R_C(y):=\nnz(B[*,y])\) denote the degree of output gate \(y\). Writing \(\Delta_{[N]}\) for the set of probability distributions on \([N]:=\{0,\dots,N-1\}\), define
\[
\begin{aligned}
f_C(p)&:=\frac1k\,\mathbb{E}_{X\sim p^k}\bigl[\log_2 L_C(X)\bigr],&p\in\Delta_{[\mathfrak{c}_0]},\\[4pt]
g_C(q)&:=\frac1k\,\mathbb{E}_{Y\sim q^k}\bigl[\log_2 R_C(Y)\bigr],&q\in\Delta_{[\mathfrak{c}_1]}.
\end{aligned}
\]
For a pair of probability distributions \((p,q)\in\Delta_{[\mathfrak{c}_0]}\times\Delta_{[\mathfrak{c}_1]}\), we define the \emph{manifestation} of \(C\) at \((p,q)\) as \(P_C(p,q):=(f_C(p),g_C(q))\), viewed as a point in the first quadrant of the Cartesian plane. Fixing a pair \((p,q)\), let \(\conv\{P_C(p,q):C\in\mathcal{C}\}\) denote the convex hull of the points \(P_C(p,q)\) over \(C\in\mathcal{C}\). The size transfer theorem then states that
\[
\sigma(M)\le2^{\xi(\mathcal{C})},\qquad\xi(\mathcal{C}):=
\max_{(p,q)\in\Delta_{[\mathfrak{c}_0]}\times\Delta_{[\mathfrak{c}_1]}}\left\{
\min_{(a,b)\in\conv\{P_C(p,q):C\in\mathcal{C}\}}
\max\{a+\mathrm{H}_{\mathfrak{c}_0}(p),\,b+\mathrm{H}_{\mathfrak{c}_1}(q)\}\right\},
\]
where \(\mathrm{H}_{\mathfrak{c}_0}(p)\) and \(\mathrm{H}_{\mathfrak{c}_1}(q)\) denote the entropies of \(p\) and \(q\) in bits. The degree transfer theorem states that
\[
\delta(M)\le2^{\phi(\mathcal{C})},\qquad\phi(\mathcal{C}):=
\max_{(p,q)\in\Delta_{[\mathfrak{c}_0]}\times\Delta_{[\mathfrak{c}_1]}}\left\{
\min_{(a,b)\in\conv\{P_C(p,q):C\in\mathcal{C}\}}\max\{a,b\}\right\}.
\]
These two theorems are developed in detail in \cref{sec:2.3} (\cref{thm:size-transfer,thm:degree-transfer}).

\paragraph{Nine circuits for \(R_1\), and \(c+2\) circuits for \(P_1^{(c)}\).}
All of our size and degree results are obtained through the size and degree transfer theorems. Specifically, for \(R_1\), we construct nine circuits: \(C_0,C_1,C_2,C_3,C_4,C_{\mathrm{J}},C_{\mathrm{S}},C_{\mathrm{R}},C_{\mathrm{R}^\T}\). Plugging the singleton collection \(\{C_{\mathrm{J}}\}\) into the size transfer theorem gives \(\sigma(R_1)<2^{1.244844}\), and plugging the collection \(\mathcal{C}_8:=\{C_0,C_1,C_2,C_3,C_4,C_{\mathrm{S}},C_{\mathrm{R}},C_{\mathrm{R}^\T}\}\) into the degree transfer theorem gives \(\delta(R_1)<2^{0.319866}\). For \(P_1^{(c)}\), we construct \(c+2\) circuits \(C'_{\mathsf{pull}},C'_{\mathsf{push}},C'_{(0)},\ldots,C'_{(c-1)}\). Plugging \(\{C'_{\mathsf{pull}},C'_{\mathsf{push}}\}\) into the size transfer theorem gives \(\sigma(P_1^{(c)})\le c\cdot\psi(\log_2 c)\), and plugging \(\{C'_{\mathsf{pull}},C'_{(0)},\ldots,C'_{(c-1)}\}\) into the degree transfer theorem gives \(\delta(P_1^{(c)})\le c^{\frac{c-1}{(c-1)+c\log_2 c}}\). That these two collections yield our claimed upper bounds for \(P_1^{(c)}\) is proved in \cref{thm:pm}. By contrast, proving that the two collections above yield our claimed upper bounds for \(R_1\) is much more difficult. Among the circuits above, \(C_{\mathrm{J}},C_{\mathrm{S}},C_{\mathrm{R}},C_{\mathrm{R}^\T}\) have especially involved constructions: they compute \(R_{7\times10^{11}}\), while \(C_0,C_1,C_2\) compute \(R_1\), the circuits \(C_3,C_4\) compute \(R_2\), and \(C'_{\mathsf{pull}},C'_{\mathsf{push}},C'_{(0)},\ldots,C'_{(c-1)}\) all compute \(P_1^{(c)}\). For details, see \cref{sec:2.1,sec:2.2}.

\paragraph{Computation trees and scripted rebalancing.}
Recall the first step of rebalancing from \cref{sec:1.1.3}: if a base circuit gives a rank-\(1\) decomposition \(R_d=\sum_i U_iV_i^{\T}\), then \(R_n=\sum_i U_iV_i^{\T}\otimes R_{n-d}\), and the circuit chosen for the remaining factor \(R_{n-d}\) depends on the leading term \(U_iV_i^{\T}\). Iterating this rule expands the one-step identity into a rooted tree: at an internal node \(x\), one attaches a depth-2 linear circuit \(C(x)\) computing \(R_k\); each rank-\(1\) term of \(C(x)\) contributes one edge to a child; and the subtree below that child defines a circuit for the remaining Kronecker power. The circuit defined by the whole tree is therefore the sum, over all root-to-leaf paths, of the Kronecker products of the rank-\(1\) terms attached to the edges along that path. This is exactly the computation-tree formulation of \cite[Appendix B]{AL25}. The circuits \(C_{\mathrm{J}},C_{\mathrm{S}},C_{\mathrm{R}},C_{\mathrm{R}^{\T}}\) are all defined by such computation trees with \(10^{11}+1\) node layers: each internal node uses a circuit computing \(R_k\), uniformly with \(k=7\), and each root-to-leaf path has \(10^{11}\) edges, so the resulting root circuit computes \(R_{7\times10^{11}}\).

In \cite[Appendix B]{AL25}, the choice of \(C(x)\) depends only on the accumulated tilt between the left and right sides of the rank-\(1\) terms along the root-to-\(x\) path, and this tilt is real-valued. In this paper, we instead propose \emph{scripted rebalancing}, which replaces the real-valued tilt with a finite state: each node carries a state \(h\in[H]\), and the circuit attached to the node is \(I(h)\), a prescribed circuit computing \(R_k\). For a rank-\(1\) term \((U,V^{\T})\), the state offset is the nearest integer to the base-\(Z\) logarithm of the ratio of the \((1-\beta,\beta)\)-Bernoulli weight of the nonzero support of \(V^{\T}\) to the \((1-\alpha,\alpha)\)-Bernoulli weight of the nonzero support of \(U\). The child state is obtained by adding this offset to the current state and clamping the result to \([0,H-1]\). Thus a parameter set \((k,H,I,Z,\alpha,\beta)\) specifies a finite-state script for building a computation tree. We handpick four such scripts with \(k=7\), named Jessica, Sonetto, Regulus, and Regulus Transpose; starting each script from root state \(0\) and running it for \(10^{11}\) steps produces \(C_{\mathrm{J}},C_{\mathrm{S}},C_{\mathrm{R}},C_{\mathrm{R}^{\T}}\), respectively. For details, see \cref{sec:2.1,sec:2.2}.

\paragraph{Reducing mean log-degree to a Lyapunov exponent.}
It remains to prove that plugging \(\{C_\mathrm{J}\}\) and \(\mathcal{C}_8\) into the respective transfer theorems indeed yields the claimed upper bounds on \(\sigma(R_1)\) and \(\delta(R_1)\).

Fix \(p\in[0,1]\). To compute \(f_{C_{\mathrm{J}}}(\Ber(p))\), recall what this quantity means: choose an input-gate index of \(C_{\mathrm{J}}\) whose \(7\times10^{11}\) bits are i.i.d.\ \(\Ber(p)\), take the base-\(2\) logarithm of its degree, and take the expectation. Recall that \(C_{\mathrm{J}}\) is defined by a computation tree with \(10^{11}+1\) layers, each step consuming a block of \(7\) bits. We reveal the random input-gate index one block at a time. After revealing the first several blocks, we count the nodes in the corresponding layer that are \emph{compatible} with the revealed prefix: namely, those nodes for which the product of the rank-\(1\) terms along the root-to-node path has a left vector that is nonzero at the revealed prefix index. Since the subtree below any node is determined solely by the node's state, it suffices to maintain a vector indexed by \([H]\), recording, for each state, the number of compatible nodes in the current layer with that state.

The layer-to-layer update is therefore linear: the next vector is the current vector multiplied on the right by a matrix. Moreover, this matrix depends only on how many of the newly revealed \(7\) bits equal \(1\). Under the i.i.d.\ \(\Ber(p)\) input distribution, this number has distribution \(\Bin(7,p)\). Hence \(f_{C_{\mathrm{J}}}(\Ber(p))\) is obtained as follows: start from \([1,0,\dots,0]\); at each of the \(10^{11}\) steps, independently choose a matrix from \(\mathsf{A}_0,\dots,\mathsf{A}_7\) according to \(\Bin(7,p)\) and multiply the current vector on the right by that matrix; finally, take the expected base-\(2\) logarithm of the \(\ell_1\)-norm of the resulting vector and divide by \(7\times10^{11}\). The matrices \(\mathsf{A}_0,\dots,\mathsf{A}_7\) are determined by the parameter set alone and can be computed from it. For details, see \cref{sec:3.1}.

\paragraph{Toll-and-potential certificates and machine verification.}
Consider a collection of matrices \(\mathsf{A}_0,\dots,\mathsf{A}_7\) and the distribution \(\Bin(7,p)\) on them. For the corresponding random matrix product, choose a large number of steps, randomly select one matrix at each step to multiply the current vector, and take the logarithm of the resulting vector's \(\ell_1\)-norm divided by the number of steps. As the number of steps tends to infinity, this quantity converges almost surely to a fixed value: the \emph{quenched Lyapunov exponent}, a quantity studied in random dynamical systems, especially in the multiplicative ergodic theory of random matrix products.

Because \(10^{11}\) is so large, the numerical value of \(f_{C_{\mathrm{J}}}(\Ber(p))\) is extremely close to this exponent divided by \(7\). We show that any pair of arrays \((\mathfrak{a},\mathfrak{b})\in\mathbb{R}^{(k+1)\times(k+1)}\times\mathbb{R}^{(k+1)\times H}\) satisfying certain conditions determined by \(\mathsf{A}_0,\dots,\mathsf{A}_7\), called a \emph{valid certificate}, gives rise to a polynomial \(\widetilde\Phi_\mathfrak{a}(p)\) that upper-bounds the Lyapunov exponent simultaneously for every distribution \(\Bin(7,p)\). This certificate theorem is derived from the Lagrangian dual of the convex program underlying the upper bound of Gharavi and Anantharam \cite{GA05}.

Accordingly, for each circuit family \(\mathrm{X}\), we choose a lattice on \([0,1]\), obtain optimal certificates at the lattice points, and assemble a global upper bound \(\ell_{\mathrm{X}}:[0,1]\to\mathbb{R}_{\ge0}\). For each \(\mathrm{X}\), the inequality \(f_{C_{\mathrm{X}}(7m,h)}(\Ber(p))\le\ell_{\mathrm{X}}(p)\) holds for all \(m\geq10^{11}\), all \(h\in[H]\), and all \(p\in[0,1]\); moreover, \(\ell_\mathrm{X}\) is \(13\)-Lipschitz on \([0,1]\). These two facts allow us to write two computer programs whose acceptance is a sufficient (though not necessary) condition for the desired numerical bounds: \(\xi(\{C_\mathrm{J}\})<1.244844\) after plugging \(\{C_\mathrm{J}\}\) into the corresponding formula, and \(\phi(\mathcal{C}_8)<0.319866\) after plugging \(\mathcal{C}_8\) into the corresponding formula. The implementation is available at \url{https://osf.io/yz4fr/overview?view_only=2f77f3a2c69e44a2bb57738e01223149}. For details, see \cref{sec:3.2,sec:3.3}.

\subsection{Applications in Algorithm Design}\label{sec:1.3}

In this section, we use our upper bounds on \(\sigma(R_1)\), \(\delta(R_1)\), and \(\delta(P_1^{(c)})\) to derive algorithms and data structures for a variety of problems. We work in a word-RAM model\footnote{In our model, only additions, subtractions, and multiplications of \(w\)-bit integers, as well as memory accesses by \(w\)-bit addresses, are assumed to take unit time.} in which each word is a \(w\)-bit integer and \(w\geq 2\log_2(nd)\).

Our first application is a deterministic offline algorithm for \#OV.

\begin{theorem}[Deterministic \#OV from \(\delta(R_1)\)]\label{thm:ov}
For every \(\eps>0\), there is a deterministic offline algorithm that counts the orthogonal pairs between two sets \(U,V\subseteq\{0,1\}^d\), each of size at most \(n\), in \(O(n\cdot(\delta(R_1)+\eps)^d)\) time using only \(O(nd)\) space. In particular, by \cref{thm:main}, \#OV can be solved in \(O(n\cdot2^{0.319866d})=o(n\cdot2^{d/3.126})=o(n\cdot1.25^d)\) time and \(O(nd)\) space.
\end{theorem}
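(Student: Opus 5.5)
Fix \(\eps>0\). By the definition of \(\delta(R_1)\) as a limit, there is a constant \(k\) (depending only on \(\eps\)) and a depth-2 circuit \(C=(A,B)\) computing \(R_k\) with weights in \(\{0,\pm1\}\) such that \(\mathrm{D}(C)\le(\delta(R_1)+\eps/2)^k\). This circuit is a fixed, finite object: it can be found by brute force, or hardcoded, in time depending only on \(\eps\). Write \(d=km+r\) with \(0\le r<k\), and pad the vectors so that the last \(r\) coordinates are handled by the trivial circuit for \(R_r\), whose degree is \(O(1)\). The Kronecker product \(C^{\otimes m}\) then computes \(R_{km}\), and its degree is at most \(\mathrm{D}(C)^m\), since degrees multiply under Kronecker products. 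The identity \(R_d=A'B'\) with \(A'=A^{\otimes m}\otimes(\cdot)\) and \(B'=B^{\otimes m}\otimes(\cdot)\) yields the count
\[
\#\{(u,v):\langle u,v\rangle=0\}=\sum_{u\in U}\sum_{v\in V}R_d[u,v]=\sum_{z}\Bigl(\sum_{u\in U}A'[u,z]\Bigr)\Bigl(\sum_{v\in V}B'[z,v]\Bigr).
\]

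For the algorithm, each \(u\in U\) touches at most \(\mathrm{D}(C)^m\cdot O(1)\) middle gates \(z\). The key step is enumerating the nonzero entries of row \(u\) of \(A'\) directly from the Kronecker structure: for each block of \(k\) bits, look up the precomputed list of nonzero entries of the corresponding row of \(A\), then take a product via nested iteration (a mixed-radix counter). Each entry is produced in amortized \(O(1)\) time, with \(O(m)\) work per entry for computing the index and sign, and this can be amortized with the counter. We need to accumulate \(a_z=\sum_u A'[u,z]\) over the middle gates, but storing all \(a_z\) is impossible within \(O(nd)\) space, because the middle layer has size exponential in \(d\). So instead of accumulating, we generate the sparse list of pairs \((z,\pm1)\) for all \(u\), giving \(n\cdot\mathrm{D}(C)^m\) pairs, and do the same for the columns of \(B'\) for each \(v\).

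The space requirement is the main obstacle, since this list has size \(n(\delta+\eps)^d\) and would have to be sorted and joined on \(z\). To stay in \(O(nd)\) space, I would process the middle layer in lexicographic order of \(z\) via a merge-style traversal. The index \(z\) is a \(k\)-ary tuple over blocks, and for each \(u\) the set of compatible \(z\) is a product set \(S_1(u)\times\cdots\times S_m(u)\). Processing the block positions recursively, we maintain the list of \(u\) (and \(v\)) whose block-\(j\) sets contain the current partial prefix. At depth \(j\), for each possible value \(z_j\) of the \(j\)-th middle coordinate, we filter the current lists to those \(u,v\) with \(A[u_j,z_j]\ne0\) or \(B[z_j,v_j]\ne0\) respectively, track the signs, and recurse. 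At the leaves we add \((\sum \text{signs}_u)(\sum\text{signs}_v)\). This depth-first search uses only the stack of filtered lists. With in-place partitioning these stay at \(O(n)\) indices per level, and we use an explicit in-place reordering trick to avoid \(O(nm)\) extra words; this gives \(O(nd)\) total space, counting the input. The time is the sum over all visited tree nodes of the list lengths, which is at most \(\sum_u\#\{\text{compatible prefixes}\}+\) the same for \(v\), i.e.\ \(O(n\,\mathrm{D}(C)^m\cdot\text{const})\) plus the cost of empty branches. The empty branches are avoided by pruning whenever both lists are empty, and since each node has at most \(k\)-dependent branching, they are bounded by a constant factor. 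Word size \(w\ge2\log_2(nd)\) suffices because counts are at most \(n^2\).

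The total running time is \(O_\eps(n(\delta(R_1)+\eps/2)^d)=O(n(\delta(R_1)+\eps)^d)\). The subtlest point is that a visited node with a nonempty \(U\)-list but an empty \(V\)-list must also be pruned, and the remaining nodes then charge to a nonzero \(A'\) or \(B'\) entry prefix. Plugging in \cref{thm:main} gives the stated numeric bound \(O(n\cdot2^{0.319866d})\).
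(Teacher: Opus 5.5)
Your argument is correct, but your algorithm differs genuinely from the paper's.

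The paper splits each middle-gate index \(g\in[G]^m\) into a prefix of length \(r\) and a suffix of length \(s\), with \(G^s\le n\) and \(\Delta^s\ge\log_2 n\) (or \(r=0\)). Prefixes are streamed in lexicographic order by one iterator per vector, and the two streams are merged with min-heaps. For each common prefix, the suffix contributions are aggregated in two tries with at most \(G^s\le n\) keys. This gives \(O(nd\Delta^m)\) time. The condition \(\Delta^s\ge\log_2 n\) is exactly what pays for the \(O(\log n)\) heap operations, and it forces a separate brute-force case for small \(n\).

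Your depth-first traversal of \([G]^m\), carrying the filtered lists of compatible \(u\) and \(v\) with their accumulated signs, avoids all of this. A vector \(u\) lies in the list of at most \(\Delta^j\) visited nodes at depth \(j\), so the total work is \(O(G\,n\sum_{j\le m}\Delta^j)=O_\eps(n\Delta^m)\). There are no heaps, no tries, no small-\(n\) case, and not even the factor \(d\) that the paper has to absorb through the \(\eps/2\) slack. The stack of lists costs \(O(nm)\) words, which is already within \(O(nd)\).

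A few points should be stated explicitly:
\begin{enumerate}
\item The geometric sum uses \(\Delta\ge2\). This holds because \(R_1\) has no circuit of degree \(1\); even otherwise, the extra factor \(m\) would be absorbed by the slack.
\item The child lists are not disjoint pieces of the parent list. Any in-place version must redo a survivor/non-survivor partition for each child and undo the \(\pm1\) sign updates on return. The trick is unnecessary anyway, since \(O(nm)\) words suffice.
\item The middle index is a \(G\)-ary \(m\)-tuple, not a ``\(k\)-ary tuple''.
\item Intermediate values of the global sum can exceed \(n^2\), so the word-size claim is best justified by computing modulo \(2^w\). The paper's accumulation has the same issue.
\end{enumerate}
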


The proof of \cref{thm:ov} appears in \cref{sec:B.1}. This improves on previously known algorithms in the regime \(1.47\log_2n\lesssim d\lesssim3.12\log_2n\); see \cref{tab:ov-regimes}.

\begin{table}[H]
\centering
\small
\renewcommand{\arraystretch}{1.15}
\begin{tabularx}{\textwidth}{@{}>{\raggedright\arraybackslash}p{0.235\textwidth}>{\raggedright\arraybackslash}p{0.155\textwidth}cc>{\raggedright\arraybackslash}X@{}}
\toprule
& & \multicolumn{2}{c}{Up to \(n^{o(1)}\) factors} & \\
\cmidrule(lr){3-4}
Dimension regime & Descriptive label & Time & Space & Main method \\
\midrule
\(d=c\log_2n\) with \(c\lesssim1.47\) & Logarithmic-dimensional I & \(O(n+2^d)\) & \(O(n+2^d)\) & Yates' algorithm \\
\addlinespace
\(d=c\log_2n\) with \(1.47\lesssim c\lesssim3.12\) & Logarithmic-dimensional II & \(O(n\cdot2^{d/3.126})\) & \(O(n)\) & depth-2 linear circuits (this paper) \\
\addlinespace
\(d=c\log_2n\) with \(c\gtrsim3.12\) & Logarithmic-dimensional III & \(n^{2-1/O(\log c)}\) & \(n^{2-1/O(\log c)}\) & polynomial method + rectangular matrix multiplication \cite{AWY14, CW20} \\
\addlinespace
\(d=\omega(\log n)\cap n^{o(1)}\) & OVC barrier regime & \(O(n^2)\) & \(O(n)\) & brute-force pair enumeration \\
\bottomrule
\end{tabularx}
\caption{Complexity landscape for \#OV, using algorithms known to the authors as of June 2026.}
\label{tab:ov-regimes}
\end{table}

The reduction from \#OV to \(\delta(R_1)\) without the linear-space requirement is given in \cite[Theorem 7.2]{AL25}, and \cite{Wil24} gives a linear-space reduction from \#OV to the asymptotic equality rank constant \(\theta(R_1)\). Because our reduction targets degree rather than equality rank, ensuring linear space requires substantially more technical work.

As a second application, we give deterministic online data structures for the Subset Query and Partial Match primitives introduced by Charikar, Indyk, and Panigrahy \cite{CIP02}, as well as for a generalized \(c\)-ary Partial Match primitive.

In the online \textbf{Subset Query} problem, a database stores binary strings in \(\{0,1\}^d\) and supports a stream of insertion and query operations. An insertion adds a string to the database. A query specifies a pattern in \(\{0,*\}^d\) and asks for the number of currently stored strings matching the pattern; here a \(0\) in the query matches only a \(0\) in the database string, whereas a \(*\) matches any single symbol.

In the online \textbf{\(c\)-ary Partial Match} problem, a database stores strings in \(\{0,1,\ldots,c-1\}^d\) and supports a stream of insertion and query operations. The setup is the same as for Subset Query, except that a query specifies a pattern in \(\{0,1,\ldots,c-1,*\}^d\). The binary case \(c=2\) is the Partial Match problem defined in \cite{CIP02}.

Indeed, if we label the \(c\) rows of the \(c\times(c+1)\) matrix \(P_1^{(c)}=\begin{pmatrix} I_c&\one_c\end{pmatrix}\) by the database symbols \(0,1,\ldots,c-1\), and its \(c+1\) columns by the query symbols \(0,1,\ldots,c-1,*\), then \(P_1^{(c)}\) is precisely the incidence matrix of the problem.

\begin{theorem}[Deterministic Subset Query and Partial Match from \(\delta(R_1)\) and \(\delta(P_1^{(c)})\)]\label{thm:ds}
For every \(\eps>0\):
\begin{enumerate}
\item Subset Query in \(d\) dimensions admits an online deterministic data structure in which each insertion or query takes time
\[
O((\delta(R_1)+\eps)^d);
\]
\item for every fixed integer \(c\ge2\), \(c\)-ary Partial Match in \(d\) dimensions admits an online deterministic data structure in which each insertion or query takes time
\[
O((\delta(P_1^{(c)})+\eps)^d)\le O\!\left(\left(c^{\frac{c-1}{(c-1)+c\log_2 c}}+\eps\right)^d\right).
\]
\end{enumerate}
In particular, by \cref{thm:main}, these upper bounds are \(o(1.25^d)\) for Subset Query, \(o(1.26^d)\) for Partial Match, \(o(1.39^d)\) for \(3\)-ary Partial Match, and \(o(1.46^d)\) for \(4\)-ary Partial Match.
\end{theorem}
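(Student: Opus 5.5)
Given \(\eps>0\), I first fix a constant \(k\) and a depth-2 circuit \(C_k=(A,B)\) computing \(M^{\otimes k}\), where \(M=R_1\) or \(M=P_1^{(c)}\), with \(\mathrm{D}(C_k)^{1/k}\le\delta(M)+\eps/2\). Such a \(k\) exists by the definition of \(\delta\) as a limit. For a general dimension \(d\), write \(d=qk+r\) with \(0\le r<k\), and use the Kronecker product circuit \(C_k^{\otimes q}\otimes(\text{trivial circuit for }M^{\otimes r})\). Degrees multiply under Kronecker products, so this circuit computes \(M^{\otimes d}\) with degree at most \(O_k(1)\cdot(\delta(M)+\eps/2)^{d}\). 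The structure I want is the factorization: an input index splits into \(q\) blocks of \(k\) symbols plus a remainder, and its neighbourhood in the middle layer is the product of the per-block neighbourhoods. I will precompute and store lists of the nonzero entries of each row of \(A\) and each column of \(B\), with their \(\pm1\) weights. This is a constant-size table.

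The data structure maintains an array \(W\), indexed by middle-layer gates of the product circuit, storing \(x\cdot(A^{\otimes})\), where \(x\) is the indicator or multiplicity vector of the database strings. To keep memory small, I make \(W\) a hash map or a trie keyed by the tuple of per-block middle indices, so that only touched entries are stored. That gives deterministic worst-case time if I use a trie of depth \(q+1\) with constant branching.

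For an \textbf{insertion} of a string \(u\) (a row index of \(M^{\otimes d}\)), I enumerate all middle gates \(z\) with \(A^{\otimes}[u,z]\neq0\). This enumeration is a nested product over the blocks, so it is done recursively along the trie, and for each such \(z\) I add the weight (the product of block weights) to \(W[z]\). The number of such \(z\) is at most the degree, and the trie recursion shares prefixes. So the total work is \(O\bigl(\sum_{j\le q}\prod_{i\le j}\deg_i\bigr)=O((\delta(M)+\eps/2)^d)\) up to a constant factor, because the degrees are at least \(1\) and the geometric-type sum is dominated by its last term up to a factor depending only on \(k\). I also need the per-block degrees to be at least \(2\), or else I bound the sum by \(q\) times the product; the factor \(q=O(d)\) is absorbed into \(\eps/2\).

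For a \textbf{query} \(v\) (a column index of \(M^{\otimes d}\), i.e.\ a pattern in \(\{0,*\}^d\) for \(R_1\) or in \(\{0,\dots,c-1,*\}^d\) for \(P_1^{(c)}\)), the answer is \(\sum_u x_u M^{\otimes d}[u,v]=\sum_z W[z]\,B^{\otimes}[z,v]\). I enumerate the nonzero entries of column \(v\) of \(B^{\otimes}\) in the same recursive way, with the same cost bound. This is correct because \(AB=M\) blockwise implies \(A^{\otimes}B^{\otimes}=M^{\otimes d}\) by the mixed-product property. For Subset Query, the incidence matrix of "query symbol \(0\) matches only \(0\), \(*\) matches anything" is exactly \(R_1\) after labelling, and for \(c\)-ary Partial Match it is \(P_1^{(c)}\), as noted above.

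The final bounds then follow from \cref{thm:main}. The main obstacle is the bookkeeping of the word-RAM costs, namely keeping the per-operation time worst-case (not amortized) while avoiding \(2^{\Theta(d)}\) preallocation. I would handle this with the depth-\(q\) trie of constant fan-out, where each operation costs \(O(1)\) per trie node visited, together with the observation that counts fit in \(O(1)\) words as long as the number of operations is polynomially bounded by \(2^{w}\).
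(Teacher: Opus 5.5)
Your proposal is correct and is essentially the paper's own proof. Both fix a circuit for \(M^{\otimes k}\) of degree at most \((\delta(M)+\eps/2)^k\), keep a trie-keyed sparse map of the middle-gate values of the Kronecker-power circuit summed over the stored strings, enumerate the product of per-block neighbourhoods on each insertion or query, and absorb the polynomial-in-\(d\) overhead into the gap between \(\eps/2\) and \(\eps\).

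The differences are cosmetic: you use a trivial remainder circuit for \(M^{\otimes r}\) where the paper pads with trailing \(0\)'s and \(*\)'s, and you use prefix-shared trie recursion where the paper charges \(O(d)\) per trie access. Your geometric-sum step really needs \(\mathrm{D}(C_k)\ge2\) (bound each prefix product by \(\mathrm{D}(C_k)^j\)) rather than block degrees at least \(1\), but your fallback factor \(q+1\) covers it anyway.
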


The proof of \cref{thm:ds} appears in \cref{sec:B.1}.

Finally, via the reductions established in \cite{AL25}, we derive smaller low-depth linear circuits for disjointness and for a certain class of non-Kronecker-power matrices.

For disjointness, \cref{thm:main} implies that the \(N\times N\) matrix \(R_n\) has a depth-2 linear circuit of size \(O(N^{1.244844})\), where \(N=2^n\). More generally, applying the reduction of \cite[Corollary 5.2]{AL25}, for every integer \(h\ge 1\), the matrix \(R_n\) has a depth-\(2h\) linear circuit of size \(O(N^{1+0.244844/h})\).

Moreover, for every function \(f:\{0,1\}^n\to\mathbb{R}\), define
\[
M_f[x,y]:=f(x_1\vee y_1,\ldots,x_n\vee y_n),
\qquad x,y\in\{0,1\}^n .
\]
Applying the reduction of \cite[Corollary 5.3]{AL25}, for every integer \(h\ge 1\), the matrix \(M_f\) has a depth-\(4h\) linear circuit over \(\mathbb{R}\), with arbitrary real weights allowed, of size \(O(N^{1+0.244844/h})\), where \(N=2^n\)\footnote{Indeed, let \(g\) be the upper M\"obius transform of \(f\), defined by \(g(z):=\sum_{w\ge z}(-1)^{\|w-z\|_1}f(w)\). Then
\[
M_f=A\Lambda_g A^{\T},
\qquad
A:=[\one_{x\le z}]_{x,z}=\begin{pmatrix}1 & 1\\0 & 1\end{pmatrix}^{\otimes n},
\]
where \(\Lambda_g\) is the diagonal matrix with entries \(g(z)\). The matrix \(A\) coincides, up to a permutation of rows and columns, with the Kronecker-power matrix \(R_n\). One may therefore compute the two factors \(A\) and \(A^{\T}\) using the circuits for \(R_n\), and absorb the diagonal factor \(\Lambda_g\) into an adjacent linear layer. This yields the claimed depth-\(4h\) real-weight linear circuit for \(M_f\) with the same size upper bound.}.

\section{Transfer Theorems and Circuit Instantiations}\label{sec:2}

Throughout the paper, for every positive integer \(N\ge 1\), we write \([N]:=\{0,\dots,N-1\}\), and we let \(\Delta_{[N]}\) denote the set of all probability distributions on \([N]\). All entropies, KL divergences, Lyapunov exponents, growth rates, and related quantities are measured in bits, that is, using base-\(2\) logarithms.

\subsection{Computation Trees and Scripted Rebalancing}\label{sec:2.1}

In this section, we formalize the scripted rebalancing described in \cref{sec:1.2}. Its input is a parameter set, and its output is a family of depth-2 linear circuits computing \(R_n\), one for each multiple \(n\) of \(k\) and each initial state.

Let \(k\geq1\) and \(H\geq1\) be integers, let \(I\) be a map from \([H]\) to circuits computing \(R_k\), let \(Z>1\) be a real number, and let \(\alpha,\beta\in(0,1)\) be real numbers. We call the tuple \(\mathrm{X}=(k,H,I,Z,\alpha,\beta)\) a \emph{parameter set}.

In our parameter sets, the map \(I\) is specified by an ordered pair \((\mathbf{c},\boldsymbol{\iota})\), where \(\mathbf{c}=(c_0,\ldots,c_{L-1})\) is an integer array satisfying \(0\leq c_0<\dots<c_{L-1}\leq2^k-1\), and \(\boldsymbol{\iota}:[H]\to[L]\) is a map. This specification means \(I(h)=S_k(c_{\boldsymbol{\iota}(h)})\), the \emph{Sergeev circuit} of order \(k\) with identifier \(c_{\boldsymbol{\iota}(h)}\) (\cite{Ser22}; \cite[Section 5.1]{AL25}).

For each identifier \(c\in\{0,\ldots,2^k-1\}\), \cref{sec:A.2} defines a depth-2 linear circuit \(S_k(c)\) computing \(R_k\) with exactly \(2^k\) middle gates. Equivalently, \(S_k(c)\) provides a rank-\(1\) decomposition of \(R_k\):
\[
S_k(c)=\sum_{r=0}^{2^k-1}(U_{c,r},V_{c,r}^\T),\qquad\text{so that }R_k=\sum_{r=0}^{2^k-1}U_{c,r}V_{c,r}^\T,
\]
where \(U_{c,r}\) and \(V_{c,r}\) are vectors indexed by \(\{0,1\}^k\). For a vector \(w\) indexed by \(\{0,1\}^k\), set
\[
\wnnz_{\gamma_0,\gamma_1}(w):=\sum_{x\in\{0,1\}^k}\one[w_x\neq0]\,(\gamma_0)^{k-\|x\|_1}(\gamma_1)^{\|x\|_1},
\]
where \(\|x\|_1\) denotes the number of \(1\)-bits of the index \(x\in\{0,1\}^k\). This is the total Bernoulli weight of the indices of the nonzero entries of \(w\), with weights \(\gamma_0\) and \(\gamma_1\) assigned to \(0\)-bits and \(1\)-bits, respectively.

For a rank-\(1\) term \((U_{c,r},V_{c,r}^\T)\), define its \emph{raw tilt} \(\theta_{c,r}\) as
\[
\theta_{c,r}:=\log_Z\left(\frac{\wnnz_{1-\beta,\beta}(V_{c,r})}{\wnnz_{1-\alpha,\alpha}(U_{c,r})}\right),
\]
and define its \emph{state offset} \(\Delta_{c,r}\) as the nearest integer \(\Delta_{c,r}:=\left\lfloor \theta_{c,r}+\frac12\right\rfloor\). Let
\[
\clamp_H(t):=
\begin{cases}
0,&t<0,\\
t,&0\le t\le H-1,\\
H-1,&t>H-1.
\end{cases}
\]
The circuits \(C_{\mathrm{X}}(n,h)\), for multiples \(n\) of \(k\) and states \(h\in[H]\), are defined recursively by
\[
C_{\mathrm{X}}(0,h):=([1],[1]),
\]
and, for \(n\ge k\),
\[
C_{\mathrm{X}}(n,h):=\sum_{r=0}^{2^k-1}(U_{c_i,r},V_{c_i,r}^\T)\otimes C_{\mathrm{X}}\!\left(n-k,\,\clamp_H(h+\Delta_{c_i,r})\right),\qquad\text{where }i:=\boldsymbol{\iota}(h).
\]

\begin{proposition}\label{prop:tree}
For every parameter set \(\mathrm{X}\) in the above format, every multiple \(n\) of \(k\), and every state \(h\in[H]\), the circuit \(C_{\mathrm{X}}(n,h)\) is a depth-2 linear circuit computing \(R_n\).
\end{proposition}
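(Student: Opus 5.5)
The plan is to induct on \(n\), over multiples of \(k\), and prove the statement for all states \(h\in[H]\) at once. The first step is to pin down what the sums and Kronecker products of circuits in the recursion mean. For circuits \(C=(A,B)\) and \(C'=(A',B')\), the product is \(C\otimes C'=(A\otimes A',\,B\otimes B')\). The sum \(\sum_r C_r\) of circuits \(C_r=(A_r,B_r)\) with common input and output sets is the circuit obtained by placing their middle layers side by side, that is, \(\bigl(\begin{pmatrix}A_0&A_1&\cdots\end{pmatrix},\ \begin{pmatrix}B_0\\B_1\\\vdots\end{pmatrix}\bigr)\). A rank-\(1\) term \((U,V^\T)\) is the circuit with one middle gate, with \(U\) as its input column and \(V^\T\) as its output row.

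With these conventions, two facts follow at once. First, the mixed-product property gives \((A\otimes A')(B\otimes B')=(AB)\otimes(A'B')\), so \(C\otimes C'\) computes the Kronecker product of the matrices computed by \(C\) and \(C'\). Second, block multiplication shows that \(\sum_r C_r\) computes \(\sum_r A_rB_r\). Both constructions keep every weight inside \(\Gamma=\{0,\pm1\}\), because \(\Gamma\) is closed under multiplication and concatenation introduces no new weights. The Sergeev vectors \(U_{c,r},V_{c,r}\) have entries in \(\{0,\pm1\}\) by the construction in \cref{sec:A.2}.

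The base case is \(n=0\). Here \(C_{\mathrm X}(0,h)=([1],[1])\) computes \([1]=R_0\) for every \(h\).

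For the inductive step, fix \(n\ge k\) and \(h\in[H]\), and set \(i=\boldsymbol\iota(h)\). For each \(r\), let \(h_r:=\clamp_H(h+\Delta_{c_i,r})\). This state lies in \([H]\), so by the induction hypothesis \(C_{\mathrm X}(n-k,h_r)\) is a valid depth-2 circuit computing \(R_{n-k}\). The important point is that the induction hypothesis is applied to several different states at once, even though each of those circuits computes the same matrix \(R_{n-k}\). By the two facts above, \(C_{\mathrm X}(n,h)\) computes
\[
\sum_{r=0}^{2^k-1} U_{c_i,r}V_{c_i,r}^\T\otimes R_{n-k}=\Bigl(\sum_{r}U_{c_i,r}V_{c_i,r}^\T\Bigr)\otimes R_{n-k}=R_k\otimes R_{n-k}=R_n.
\]
The first equality uses bilinearity of \(\otimes\), and the second uses the fact that \(S_k(c_i)\) computes \(R_k\).

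The only step needing real care is the indexing convention. The rows of \(R_n=R_1^{\otimes n}\) are indexed by \((x',x'')\in\{0,1\}^k\times\{0,1\}^{n-k}\), and this must agree with the index order used by \(\otimes\) in the recursion, so that \(R_k\otimes R_{n-k}=R_n\) holds as an identity of matrices and not merely up to a permutation. This holds by associativity of the Kronecker product, \(R_1^{\otimes n}=R_1^{\otimes k}\otimes R_1^{\otimes(n-k)}\), given the lexicographic indexing fixed in the definition of \(\otimes\). Once that is checked, the induction closes.
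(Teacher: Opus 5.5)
Your proposal is correct and is essentially the paper's argument. The paper routes it through the computation-tree formalism of \cref{sec:A.3}, but the proof of Lemma~A.3.1 is exactly your induction specialized to the recurrence for $C_{\mathrm{X}}$: each child circuit computes $R_{n-k}$, and bilinearity of $\otimes$ together with Fact~A.2.1, i.e.\ $\sum_r U_{c_i,r}V_{c_i,r}^{\T}=R_k$, gives $R_k\otimes R_{n-k}=R_n$. Your remarks on the weights staying in $\{0,\pm1\}$ and on the Kronecker index order are harmless additions that the paper leaves implicit.
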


\begin{proof}[Proof sketch]
The circuit \(C_{\mathrm{X}}(n,h)\) is the root circuit of a computation tree \cite[Appendix B]{AL25} for the base matrix \(R_1\) with total weighted depth \(n\). A detailed proof appears in \cref{sec:A.3}.
\end{proof}

For a \(k\)-bit identifier \(c\in\{0,\ldots,2^k-1\}\), put \(\overline{c}:=2^k-1-c\), the \(k\)-bit complement of \(c\).

We next define a \emph{transposition} operation on parameter sets. If \(\mathrm{X}=(k,H,I,Z,\alpha,\beta)\) with \(I=(\mathbf{c},\boldsymbol{\iota})\), define \(\mathbf{c}^\T:=(\overline{c_{L-1}},\ldots,\overline{c_0})\), \(\boldsymbol{\iota}^\T(h):=L-1-\boldsymbol{\iota}(H-1-h)\), \(I^\T:=(\mathbf{c}^\T,\boldsymbol{\iota}^\T)\), and \(\mathrm{X}^\T:=(k,H,I^\T,Z,\beta,\alpha)\).

It is immediate from the definitions that applying this operation twice returns the original parameter set, \((\mathrm{X}^\T)^\T=\mathrm{X}\); the operation is involutive, as one expects of a transposition.

We now claim that the circuit family generated by the transposed parameter set is indeed isomorphic to the transpose of the original circuit family. The key ingredient is that \(S_k(c)^\T\cong S_k(\overline{c})\); see Fact~A.2.2.

\begin{proposition}\label{prop:transpose}
Suppose that no raw tilt \(\theta_{c,r}\) appearing in \(\mathrm{X}\) lies in \(\mathbb{Z}+\frac12\). Then
\[
\begin{aligned}
C_{\mathrm{X}}(n,h)^\T&\cong C_{\mathrm{X}^\T}(n,H-1-h),\\
C_{\mathrm{X}^\T}(n,h)^\T&\cong C_{\mathrm{X}}(n,H-1-h),
\end{aligned}
\]
for every multiple \(n\) of \(k\) and every \(h\in[H]\).
\end{proposition}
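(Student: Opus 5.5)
We prove the first isomorphism by induction on \(n\) (in steps of \(k\)), simultaneously for all \(h\in[H]\). The second isomorphism then follows by applying the first to \(\mathrm{X}^\T\), using \((\mathrm{X}^\T)^\T=\mathrm{X}\), and replacing \(h\) by \(H-1-h\). This is legitimate because the raw tilts of \(\mathrm{X}^\T\) are exactly the negatives of those of \(\mathrm{X}\) (see below), so the half-integer hypothesis transfers to \(\mathrm{X}^\T\).

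The base case \(n=0\) is trivial: \(([1],[1])^\T=([1],[1])\). For the inductive step, transposing a circuit \((A,B)\) gives \((B^\T,A^\T)\), so transposition commutes with sums of rank-\(1\) terms and with Kronecker products. Hence
\[
C_{\mathrm{X}}(n,h)^\T\cong\sum_{r}(V_{c_i,r},U_{c_i,r}^\T)\otimes C_{\mathrm{X}}\bigl(n-k,\clamp_H(h+\Delta_{c_i,r})\bigr)^\T .
\]
Put \(h':=H-1-h\). By definition, \(\boldsymbol{\iota}^\T(h')=L-1-\boldsymbol{\iota}(h)=L-1-i\), and \(\mathbf{c}^\T_{L-1-i}=\overline{c_i}\). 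So the root circuit used by \(\mathrm{X}^\T\) at state \(h'\) is \(S_k(\overline{c_i})\). By Fact~A.2.2, \(S_k(c_i)^\T\cong S_k(\overline{c_i})\). This gives a bijection \(r\mapsto r'\) between the middle gates of the two circuits such that \((V_{c_i,r},U_{c_i,r}^\T)\) matches \((U_{\overline{c_i},r'},V_{\overline{c_i},r'}^\T)\), up to the row/column relabelings that define \(\cong\). I must check that these relabelings are the identity on index sets, or at least preserve the Hamming weight \(\|x\|_1\), since \(\wnnz\) depends only on Hamming weights. This is the point where I need the precise content of Fact~A.2.2, and I expect it to state exact equality of the swapped vectors.

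The main step is matching the state offsets. In \(\mathrm{X}^\T\) the roles of \(\alpha\) and \(\beta\) are swapped, so
\[
\theta^{\mathrm{X}^\T}_{\overline{c_i},r'}=\log_Z\frac{\wnnz_{1-\alpha,\alpha}(U_{c_i,r})}{\wnnz_{1-\beta,\beta}(V_{c_i,r})}=-\theta_{c_i,r}.
\]
Because \(\theta_{c_i,r}\notin\mathbb{Z}+\frac12\), rounding to the nearest integer commutes with negation: \(\lfloor-\theta+\tfrac12\rfloor=-\lfloor\theta+\tfrac12\rfloor\). This is exactly where the hypothesis is used; at half-integers the two roundings would disagree by \(1\). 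Hence \(\Delta'=-\Delta_{c_i,r}\). The child state in \(\mathrm{X}^\T\) is then
\[
\clamp_H(H-1-h-\Delta_{c_i,r})=H-1-\clamp_H(h+\Delta_{c_i,r}),
\]
using the symmetry \(\clamp_H(H-1-t)=H-1-\clamp_H(t)\), which is checked case by case for integer \(t\).

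Applying the induction hypothesis at \(n-k\), each summand \(C_{\mathrm{X}}(n-k,\clamp_H(h+\Delta))^\T\) is isomorphic to \(C_{\mathrm{X}^\T}(n-k,H-1-\clamp_H(h+\Delta))\). This is precisely the child circuit attached to gate \(r'\) in the recursion for \(C_{\mathrm{X}^\T}(n,h')\). Summing over \(r\) with the bijection \(r\mapsto r'\) completes the induction. Finally, I need that the isomorphisms compose consistently across Kronecker products and sums. That should be routine provided \(\cong\) means equality up to permutation of middle gates, with input and output indices fixed.
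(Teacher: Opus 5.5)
Your proposal is correct and matches the paper's proof essentially step for step: induction on \(n/k\), using Fact~A.2.2 to identify the circuit at state \(H-1-h\) in \(\mathrm{X}^\T\) with the transpose of \(I(h)\); negating the raw tilts, where the non-half-integer hypothesis makes nearest-integer rounding commute with negation; the clamp symmetry \(H-1-\clamp_H(h+\Delta)=\clamp_H(H-1-h-\Delta)\); and obtaining the second identity by applying the first to \(\mathrm{X}^\T\). The point you flagged resolves as you expected: \(\cong\) only permutes middle gates, and the proof of Fact~A.2.2 shows the transposed rank-\(1\) terms coincide exactly as vectors, which is what the paper uses when it relabels \(S_k(\overline{c_i})=\sum_r(V_{c_i,r},U_{c_i,r}^\T)\).
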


The proof of \cref{prop:transpose} appears in \cref{sec:B.2}.

\subsection{Nine Circuits for \texorpdfstring{\(R_1\)}{R1}, and \texorpdfstring{\(c+2\)}{c+2} Circuits for \texorpdfstring{\(P_1^{(c)}\)}{P1(c)}}\label{sec:2.2}

We now specify the nine circuits \(C_0,C_1,C_2,C_3,C_4,C_{\mathrm{J}},C_{\mathrm{S}},C_{\mathrm{R}},C_{\mathrm{R}^\T}\) announced in \cref{sec:1.2}.

The first five circuits \(C_0,C_1,C_2,C_3,C_4\) are defined directly, each as an ordered pair of matrices:

\begin{itemize}
\item \(C_0:=(A_0,B_0)\), where
\[
A_0=\begin{pmatrix}1&0\\0&1\end{pmatrix},\qquad B_0=\begin{pmatrix}1&1\\1&0\end{pmatrix};
\]
\item \(C_1:=(A_1,B_1)\), where
\[
A_1=\begin{pmatrix}1&1\\1&0\end{pmatrix},\qquad B_1=\begin{pmatrix}1&0\\0&1\end{pmatrix};
\]
\item \(C_2:=(A_2,B_2)\), where
\[
A_2=\begin{pmatrix}1&0\\1&1\end{pmatrix},\qquad B_2=\begin{pmatrix}1&1\\0&-1\end{pmatrix};
\]
\item \(C_3:=(A_3,B_3)\), where
\[
A_3=\begin{pmatrix}1&0&0&0\\0&1&0&1\\0&1&1&0\\0&1&0&0\end{pmatrix},\qquad B_3=\begin{pmatrix}1&1&1&1\\1&0&0&0\\0&1&0&0\\0&0&1&0\end{pmatrix};
\]
\item \(C_4:=(A_4,B_4)\), where
\[
A_4=\begin{pmatrix}1&1&0&0\\1&0&1&0\\1&0&0&1\\1&0&0&0\end{pmatrix},\qquad B_4=\begin{pmatrix}1&0&0&0\\0&1&1&1\\0&0&1&0\\0&1&0&0\end{pmatrix}.
\]
\end{itemize}

One checks directly that
\[
\begin{aligned}
A_0B_0=A_1B_1=A_2B_2&=\begin{pmatrix}1&1\\1&0\end{pmatrix}=R_1,\\
A_3B_3=A_4B_4&=\begin{pmatrix}1&1&1&1\\1&0&1&0\\1&1&0&0\\1&0&0&0\end{pmatrix}=R_2.
\end{aligned}
\]
Hence \(C_0,C_1,C_2\) are depth-2 linear circuits computing \(R_1\), and \(C_3,C_4\) are depth-2 linear circuits computing \(R_2\).

The remaining four circuits \(C_{\mathrm{J}},C_{\mathrm{S}},C_{\mathrm{R}},C_{\mathrm{R}^\T}\) are obtained by applying scripted rebalancing to four parameter sets, named Jessica, Sonetto, Regulus, and Regulus Transpose, which are specified in \cref{tab:parameter-sets}\footnote{How did we discover these parameter sets? At a high level, we first fix \(\alpha,\beta\) and choose a moderately sized \(H\). We then run a bottom-up dynamic program, inspired by \cite[Appendix B]{AL25}, whose purpose is to choose, for each state \(h\in[H]\), which Sergeev circuit identifier should be used at that state. We run this dynamic program for increasing tree depths until the resulting state-to-identifier map stops changing. We then discard the dynamic program itself and record the stabilized map \(I\), equivalently the multiplicity array shown in the table, as the parameter set. We do not present the definition or technical details of this dynamic program, since they are not relevant to the correctness of this paper.}. In \cref{tab:parameter-sets}, the map \(\boldsymbol{\iota}\) is specified by its \emph{multiplicity array}: if the multiplicity array is \((m_0,\ldots,m_{L-1})\), then \(\boldsymbol{\iota}(h)=i\) for the unique \(i\in[L]\) satisfying \(m_0+\cdots+m_{i-1}\le h<m_0+\cdots+m_i\).

\begin{table}[H]
\centering
\small
\begin{tabular}{@{}lllll@{}}
\toprule
Parameter & Jessica & Sonetto & Regulus & Regulus Transpose \\
\midrule
\(k\) & \(7\) & \(7\) & \(7\) & \(7\) \\
\(H\) & \(612\) & \(360\) & \(360\) & \(360\) \\
\(\mathbf{c}\) & \((0,20,42,85,107,127)\) & \((0,21,42,85,106,127)\) & \((0,42,43,45,\) & \((0,41,42,50,\) \\
 & & & \(\ \ 77,85,86,127)\) & \(\ \ 82,84,85,127)\) \\
multiplicities & \((14,22,270,270,22,14)\) & \((8,22,150,150,22,8)\) & \((8,36,20,20,\) & \((8,44,64,160,\) \\
 & & & \(\ \ 160,64,44,8)\) & \(\ \ 20,20,36,8)\) \\
\(Z\) & \(2^{1/3}\) & \(2^{1/3}\) & \(2^{1/3}\) & \(2^{1/3}\) \\
\(\alpha\) & \(0.5\) & \(0.38\) & \(0.336\) & \(0.414\) \\
\(\beta\) & \(0.5\) & \(0.38\) & \(0.414\) & \(0.336\) \\
\bottomrule
\end{tabular}
\caption{The four parameter sets used in this paper.}
\label{tab:parameter-sets}
\end{table}

One checks that the transpose of Jessica is Jessica itself, the transpose of Sonetto is Sonetto itself, and the transposes of Regulus and Regulus Transpose are each other.

Moreover, for these four parameter sets, all raw tilts are bounded away from half-integers: the smallest absolute distance to the nearest half-integer is approximately \(1.33\cdot10^{-3}\) for Jessica, \(1.71\cdot10^{-2}\) for Sonetto, and \(4.70\cdot10^{-3}\) for Regulus and Regulus Transpose. Hence \cref{prop:transpose} gives
\[
\begin{aligned}
C_{\mathrm{Jessica}}(n,h)^\T&\cong C_{\mathrm{Jessica}}(n,611-h),\\
C_{\mathrm{Sonetto}}(n,h)^\T&\cong C_{\mathrm{Sonetto}}(n,359-h),\\
C_{\mathrm{Regulus}}(n,h)^\T&\cong C_{\mathrm{Regulus}^{\T}}(n,359-h).
\end{aligned}
\]

We now fix \(m_*:=10^{11}\) and define the four circuits \(C_{\mathrm{J}},C_{\mathrm{S}},C_{\mathrm{R}},C_{\mathrm{R}^\T}\) by
\[
\begin{aligned}
C_{\mathrm{J}}&:=C_{\mathrm{Jessica}}(7m_*,0),\\
C_{\mathrm{S}}&:=C_{\mathrm{Sonetto}}(7m_*,0),\\
C_{\mathrm{R}}&:=C_{\mathrm{Regulus}}(7m_*,0),\\
C_{\mathrm{R}^\T}&:=C_{\mathrm{Regulus}^{\T}}(7m_*,0).
\end{aligned}
\]
By \cref{prop:tree}, the circuits \(C_{\mathrm{J}},C_{\mathrm{S}},C_{\mathrm{R}},C_{\mathrm{R}^\T}\) are depth-2 linear circuits computing \(R_{7m_*}=R_{7\cdot10^{11}}\).

Finally, recall that for an integer \(c\ge2\), the \(c\)-ary partial-match incidence matrix is
\[
P_1^{(c)}=\begin{pmatrix}I_c&\one_c\end{pmatrix}=
\begin{pmatrix}
1&0&\cdots&0&1\\
0&1&\cdots&0&1\\
\vdots&\vdots&\ddots&\vdots&\vdots\\
0&0&\cdots&1&1
\end{pmatrix}.
\]
We specify the \(c+2\) circuits \(C'_{\mathsf{pull}},C'_{\mathsf{push}},C'_{(0)},\ldots,C'_{(c-1)}\) announced in \cref{sec:1.2} in \cref{tab:pm-circuits}, where \(A'_{(i)}:=I_c+(\one_c-e_i)e_i^{\T}\) and \(B'_{(i)}:=\begin{pmatrix}I_c-(\one_c-e_i)e_i^{\T}&e_i\end{pmatrix}\).

\begin{table}[H]
\centering
\small
\begin{tabular}{@{}lccccc@{}}
\toprule
Circuit & \# middle gates & \(L_C(x)\) & \(R_C(y)\) & \(f_C(p)\) & \(g_C(q)\) \\
\midrule
\(C'_{\mathsf{pull}}:=(I_c,P_1^{(c)})\) & \(c\) & \(1\) & \(\begin{cases}1,&y\neq c\\ c,&y=c\end{cases}\) & \(0\) & \(q_c\log_2 c\) \\
\addlinespace
\(C'_{\mathsf{push}}:=(P_1^{(c)},I_{c+1})\) & \(c+1\) & \(2\) & \(1\) & \(1\) & \(0\) \\
\addlinespace
\(C'_{(i)}:=(A'_{(i)},B'_{(i)})\) & \(c\) & \(\begin{cases}2,&x\neq i\\ 1,&x=i\end{cases}\) & \(\begin{cases}1,&y\neq i\\ c,&y=i\end{cases}\) & \(1-p_i\) & \(q_i\log_2 c\) \\
\bottomrule
\end{tabular}
\caption{The \(c+2\) circuits for \(P_1^{(c)}\), together with their gate degrees and normalized log-degree density functions.}
\label{tab:pm-circuits}
\end{table}

Equivalently, in the language of gates and wires:
\begin{itemize}
\item in \(C'_{\mathsf{pull}}\), middle gate \(j\) receives weight \(1\) from input gate \(j\) and sends weight \(1\) to output gates \(j\) and \(c\);
\item in \(C'_{\mathsf{push}}\), for \(j<c\), middle gate \(j\) receives weight \(1\) from input gate \(j\) and sends weight \(1\) to output gate \(j\), while middle gate \(c\) receives weight \(1\) from every input gate and sends weight \(1\) to output gate \(c\);
\item in \(C'_{(i)}\), for \(j\neq i\), middle gate \(j\) receives weight \(1\) from input gate \(j\), sends weight \(1\) to output gate \(j\), and sends weight \(-1\) to output gate \(i\), while middle gate \(i\) receives weight \(1\) from every input gate and sends weight \(1\) to output gates \(i\) and \(c\).
\end{itemize}

One checks that \(I_cP_1^{(c)}=P_1^{(c)}\) and \(P_1^{(c)}I_{c+1}=P_1^{(c)}\). Also, since \(e_i^{\T}(\one_c-e_i)=0\) and \(e_i^{\T}e_i=1\),
\[
\begin{aligned}
A'_{(i)}B'_{(i)}
&=\left(I_c+(\one_c-e_i)e_i^{\T}\right)\begin{pmatrix}I_c-(\one_c-e_i)e_i^{\T}&e_i\end{pmatrix} \\
&=\begin{pmatrix}I_c-(\one_c-e_i)e_i^{\T}+(\one_c-e_i)e_i^{\T}-(\one_c-e_i)\bigl(e_i^{\T}(\one_c-e_i)\bigr)e_i^{\T}&\ \ e_i+(\one_c-e_i)(e_i^{\T}e_i)\end{pmatrix} \\
&=\begin{pmatrix}I_c&\one_c\end{pmatrix}=P_1^{(c)}.
\end{aligned}
\]
Hence all \(c+2\) circuits indeed compute \(P_1^{(c)}\).

\subsection{The Size and Degree Transfer Theorems}\label{sec:2.3}

For an integer \(\mathfrak{c}\ge2\) and an index \(x\in[\mathfrak{c}]^n\), define its \emph{normalized frequency array} by
\[
\freq(x)_i:=\frac1n\cdot|\{j\in[n]:x_j=i\}|,\qquad i=0,\ldots,\mathfrak{c}-1,
\]
so that \(\freq(x)\in\Delta_{[\mathfrak{c}]}\). For \(p\in\freq([\mathfrak{c}]^n)\), define the \emph{frequency class} \(\FC_n(p):=\{x\in[\mathfrak{c}]^n:\freq(x)=p\}\).

For an integer \(\mathfrak{c}\ge2\) and a probability distribution \(p=(p_0,\ldots,p_{\mathfrak{c}-1})\in\Delta_{[\mathfrak{c}]}\), write
\[
\mathrm{H}_\mathfrak{c}(p):=-\sum_{i=0}^{\mathfrak{c}-1}p_i\log_2 p_i,\qquad 0\log_2 0:=0.
\]
For \(p\in[0,1]\), we also write \(\mathrm{h}(p):=\mathrm{H}_2(\Ber(p))=-p\log_2 p-(1-p)\log_2(1-p)\).

\begin{definition}[Normalized log-degree density functions]\label{def:density}
Fix a matrix \(M\in\mathbb{Z}^{\mathfrak{c}_0\times\mathfrak{c}_1}\) with no identically zero row or column, and index its rows by \([\mathfrak{c}_0]=\{0,\ldots,\mathfrak{c}_0-1\}\) and its columns by \([\mathfrak{c}_1]=\{0,\ldots,\mathfrak{c}_1-1\}\).

If \(C=(A,B)\) is a depth-2 linear circuit computing \(M^{\otimes k}\) for some \(k\ge1\), let \(L_C(x):=\nnz(A[x,*])\) denote the degree of input gate \(x\) in \(C\), and let \(R_C(y):=\nnz(B[*,y])\) denote the degree of output gate \(y\) in \(C\). Define
\[
\begin{aligned}
f_C(p)&:=\frac1k\,\mathbb{E}_{X\sim p^k}\bigl[\log_2 L_C(X)\bigr],&p\in\Delta_{[\mathfrak{c}_0]},\\[4pt]
g_C(q)&:=\frac1k\,\mathbb{E}_{Y\sim q^k}\bigl[\log_2 R_C(Y)\bigr],&q\in\Delta_{[\mathfrak{c}_1]}.
\end{aligned}
\]
We call \(f_C\) and \(g_C\) the left and right \emph{normalized log-degree density functions} of \(C\), respectively.
\end{definition}

\begin{definition}[Manifestations of circuits and of finite circuit collections]\label{def:manifest}
Let \(M\in\mathbb{Z}^{\mathfrak{c}_0\times \mathfrak{c}_1}\) be as in \cref{def:density}.

If \(C\) is a circuit computing \(M^{\otimes k}\) for some \(k\ge1\), define a map \(P_C\) from \(\Delta_{[\mathfrak{c}_0]}\times\Delta_{[\mathfrak{c}_1]}\) to the Cartesian plane by
\[
P_C(p,q):=(f_C(p),g_C(q)).
\]
If \(\mathcal{C}\) is a finite collection of depth-2 linear circuits, each computing \(M^{\otimes k}\) for some possibly different \(k\ge1\), define a map \(P_\mathcal{C}\) from \(\Delta_{[\mathfrak{c}_0]}\times\Delta_{[\mathfrak{c}_1]}\) to convex polygons\footnote{Including degenerate polygons, such as a single point or a line segment.} in the Cartesian plane by
\[
P_\mathcal{C}(p,q):=\conv\{P_C(p,q):C\in\mathcal{C}\},
\]
the convex hull of the points \(P_C(p,q)\) over all \(C\in\mathcal{C}\).

We call \(P_C(p,q)\) and \(P_\mathcal{C}(p,q)\) the \emph{manifestations} of \(C\) and of \(\mathcal{C}\) at \((p,q)\), respectively.
\end{definition}

\begin{theorem}[Realization of convex-hull points]\label{thm:realization}
Let \(M\in\mathbb{Z}^{\mathfrak{c}_0\times \mathfrak{c}_1}\) be as in \cref{def:density}, and let \(\mathcal{C}\) be a finite circuit collection as in \cref{def:manifest}.

For every \(\eps>0\), there exists \(n_0\geq1\), depending only on \(M\), \(\mathcal{C}\), and \(\eps\), such that the following holds for every \(n\geq n_0\). For every \((p,q)\in\freq([\mathfrak{c}_0]^n)\times\freq([\mathfrak{c}_1]^n)\) and every \((a,b)\in P_\mathcal{C}(p,q)\), there exists a depth-2 linear circuit \(E_n\) that computes the submatrix \(M^{\otimes n}[\FC_n(p),\FC_n(q)]\) and satisfies
\[
L_{E_n}(x)\le 2^{(a+\eps)n},
\qquad
R_{E_n}(y)\le 2^{(b+\eps)n},
\]
for every \(x\in\FC_n(p)\) and every \(y\in\FC_n(q)\).
\end{theorem}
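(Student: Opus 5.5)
We build \(E_n\) by taking Kronecker products of the circuits in \(\mathcal{C}\) (padded by a trivial circuit for a short tail), restricted to rows \(\FC_n(p)\) and columns \(\FC_n(q)\), and then removing a small bad set of input and output gates by a permutation-symmetry (degree uniformization) argument.

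First we write \((a,b)=\sum_{C\in\mathcal{C}}\lambda_C P_C(p,q)\) as a convex combination. Let \(C\in\mathcal{C}\) compute \(M^{\otimes k_C}\). We choose nonnegative integers \(t_C\) with \(\sum_C t_Ck_C\le n\) and \(t_Ck_C/n\) within \(O(1/n)\) of \(\lambda_C\), using at most \(n_1:=\max_C k_C\cdot|\mathcal{C}|\) leftover coordinates. On these coordinates we use any fixed circuit for \(M^{\otimes r}\) with \(r<n_1\); its degrees are bounded by a constant depending only on \(M,\mathcal{C}\). Let \(D:=\bigotimes_C C^{\otimes t_C}\otimes D_{\mathrm{tail}}\), which computes \(M^{\otimes n}\). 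Its degrees factor multiplicatively across blocks: \(L_D(x)=\prod_{\text{blocks}}L_{C}(x_{\text{block}})\), and similarly for \(R_D\). Restricting \(D\)'s input rows to \(\FC_n(p)\) and output columns to \(\FC_n(q)\) gives a circuit for the submatrix, with degrees no larger.

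Next we apply symmetry. For any permutation \(\pi\) of the \(n\) coordinates, conjugating \(D\) by \(\pi\) yields a circuit \(D^\pi\) computing \(M^{\otimes n}\), since \(M^{\otimes n}\) is invariant under simultaneous coordinate permutation. Call \(x\in\FC_n(p)\) good for \(D\) if \(\log_2 L_D(x)\le(a+\eps/2)n\), and define good \(y\) analogously. For uniform random \(x\in\FC_n(p)\), the restriction to a block of length \(k_C\) is close to \(p^{k_C}\) when \(n\) is large. More precisely, sampling without replacement, the block-wise log-degrees form a sum over \(\sum_C t_C\) blocks of bounded terms (each \(\le\log_2\) of the maximum degree). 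By a Hoeffding-type bound for sampling without replacement (or a comparison with i.i.d.\ sampling), the sum concentrates around \(\sum_C t_Ck_C f_C(p)+o(n)=an+o(n)\). The only place continuity in \(p\) is needed is the \(O(1/n)\) discrepancy between block marginals and \(p^{k_C}\). Hence all but a fraction \(2^{-\Omega(n)}\) of \(x\) are good, uniformly in \((p,q)\) and in the convex combination. The same holds for \(y\).

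The main obstacle is that a single circuit cannot simply drop its bad gates, since every \(x\) must be computed. We handle this by covering. Choose \(T=\mathrm{poly}(n)\) random permutations \(\pi_1,\dots,\pi_T\). Then w.h.p.\ every \(x\) is good for some \(D^{\pi_i}\), because the bad fraction is \(2^{-\Omega(n)}\) and \(|\FC_n(p)|\le 2^{O(n)}\). We must ensure each pair \((x,y)\) is computed exactly once with good \(x\) and good \(y\) simultaneously. The natural approach is to assign each \(x\) to the first \(i\) that is good for it, and each \(y\) to the first \(j\) that is good for it, but the assignment of \((x,y)\) must agree. The fix I would try first is the hole-fixing idea of \cite{AL25}. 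For each \(i\), let \(D^{\pi_i}\) restricted to good \(x\) and good \(y\) compute the submatrix on \(G_i\times G'_i\). We take an inclusion–exclusion sum over sets of indices, i.e.\ circuits for \(M^{\otimes n}[S,S']\) with \(S=G_i\setminus\bigcup_{i'<i}G_{i'}\), and treat the remaining pairs recursively. Alternatively, and more simply, we exploit that the entries on bad rows can be corrected by a further circuit of the same form, recursing about \(O(1)\) times since each level removes a \(2^{-\Omega(n)}\) fraction.

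Whichever patching is used, the degree of each \(x\) grows by at most a factor \(\mathrm{poly}(n)\cdot O(1)\), absorbed into \(2^{\eps n/2}\) for \(n\ge n_0\). Since all estimates depend only on \(M\), \(\mathcal{C}\), and \(\eps\), the threshold \(n_0\) is uniform.
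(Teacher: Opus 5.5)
Your architecture matches the paper's: a Kronecker product of the circuits of \(\mathcal{C}\) in proportion to the convex weights (handling a general convex combination directly is fine, and it makes the paper's reduction to an edge of \(P_{\mathcal{C}}(p,q)\) unnecessary), Hoeffding concentration transferred to \(\FC_n(p)\) by comparison with i.i.d.\ sampling (this needs \(\Pr_{X\sim p^n}[X\in\FC_n(p)]\ge(n+1)^{-\mathfrak{c}_0}\)), deletion of the exceptional gates, and repair with permuted copies \(D^\pi\). The gap is the repair step. It is the real content of the theorem, and you only gesture at it.

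After deletion, \(D^{\pi}\) computes \(N:=M^{\otimes n}[X,Y]\), with \(X:=\FC_n(p)\) and \(Y:=\FC_n(q)\), only on \(G_\pi\times G'_\pi\). So you must write \(N\) as a signed sum of restrictions \(N[S,S']\), with each \(S\times S'\) inside a single \(G_\pi\times G'_\pi\) and every row and column lying in only \(2^{o(n)}\) of them.

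Your first scheme does not achieve this. Take \(S_i:=G_i\setminus\bigcup_{i'<i}G_{i'}\) and define \(S'_j\) likewise. An off-diagonal block \(S_i\times S'_j\) with \(i\ne j\) lies in no \(G_\ell\times G'_\ell\) in general: for \(i<j\), every \(y\in S'_j\) is bad for \(\pi_i\), and nothing is known about \(S_i\) under \(\pi_j\). So ``treat the remaining pairs recursively'' is the original problem again on \(T^2-T\) rectangles, with no bound on depth or multiplicity.

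Your second scheme has the right shape, but it omits the fact that makes it work. For an \emph{arbitrary} residual pair \((S,S')\) (for example, \(S\) the bad rows of \(\pi_1\), which is not permutation-invariant), you need one \(\pi\) with \(|S\cap\pi^{-1}X_{\mathrm{bad}}|\le4\eta|S|\) and \(|S'\cap\pi^{-1}Y_{\mathrm{bad}}|\le4\eta|S'|\) \emph{simultaneously}. The paper gets this from transitivity of \(\mathfrak{S}_n\) on frequency classes, \(\mathbb{E}_\pi|S\cap\pi^{-1}X_{\mathrm{bad}}|=|S|\,|X_{\mathrm{bad}}|/|X|\), together with Markov's inequality and a union bound. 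It then splits \(N[S,S']=N[S_0,S'_0]+N[S_0,S'_1]+N[S_1,S']\) and bounds the number of pieces by \(2(|X||Y|)^{\boldsymbol{\alpha}(\eta)}\). Your element-wise covering (``every \(x\) is good for some \(\pi_i\)'') does not give this subset-wise statement. Also, ``each level removes a \(2^{-\Omega(n)}\) fraction'' is stated backwards: each level must \emph{keep} at most a \(4\eta\) fraction.

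If you prefer fixed random permutations, there is a clean repair, but it needs pair-level covering and \(T=O(1)\), not \(\mathrm{poly}(n)\).
\begin{itemize}
\item Since \(\eta\le2^{-cn}\), a fixed pair \((x,y)\) fails to be jointly good for a uniform \(\pi\) with probability at most \(2\eta\).
\item For a large constant \(T\), a union bound over the at most \((\mathfrak{c}_0\mathfrak{c}_1)^n\) pairs gives \(\pi_1,\ldots,\pi_T\) such that every pair is jointly good for some \(\pi_i\).
\item Then
\[
N=\sum_{\emptyset\ne A\subseteq\{1,\ldots,T\}}(-1)^{|A|+1}\,N\Bigl[\textstyle\bigcap_{i\in A}G_i,\ \bigcap_{i\in A}G'_i\Bigr],
\]
with each summand extended by zeros to \(X\times Y\) and realized by a restriction of \(D^{\pi_{\min A}}\), negating its \(B\)-matrix when the sign is negative.
\end{itemize}
This costs only a factor \(2^T-1=O(1)\) in degree. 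With \(T=\mathrm{poly}(n)\) the expansion would have \(2^{\mathrm{poly}(n)}\) terms, which is why the number of permutations matters.

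Either this argument or the paper's recursive hole-fixing lemma must actually be carried out. As written, your proposal asserts the conclusion of the repair step rather than proving it.
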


\begin{proof}[Proof sketch]
First suppose \(|\mathcal{C}|=2\), say \(\mathcal{C}=\{C_{\mathrm{hi}},C_{\mathrm{lo}}\}\). In this case, \((a,b)\) is a convex combination of \(P_{C_{\mathrm{hi}}}(p,q)\) and \(P_{C_{\mathrm{lo}}}(p,q)\). We form a Kronecker product of copies of \(C_{\mathrm{hi}}\) and \(C_{\mathrm{lo}}\), assigned to the corresponding fractions of tensor coordinates, and restrict the resulting circuit to \(M^{\otimes n}[\FC_n(p),\FC_n(q)]\). By Hoeffding's inequality, except for a negligible fraction of the input and output gates, the input-gate degrees are at most \(2^{an+n^{2/3}}\) and the output-gate degrees are at most \(2^{bn+n^{2/3}}\). Removing the exceptional gates yields a circuit computing a ``broken copy'' of \(M^{\otimes n}[\FC_n(p),\FC_n(q)]\). Finally, the hole-fixing lemma of \cite{AL25} converts this broken copy into a full circuit, increasing the degrees by only a subexponential multiplicative factor.

For an arbitrary finite collection \(\mathcal{C}\), start from \((a,b)\) and move along the ray \((a-t,b-t)\) until the boundary of the convex polygon \(P_{\mathcal{C}}(p,q)\) is reached. The resulting boundary point is a convex combination of \(P_{C_{\mathrm{I}}}(p,q)\) and \(P_{C_{\mathrm{II}}}(p,q)\) for some \(C_{\mathrm{I}},C_{\mathrm{II}}\in\mathcal{C}\). Applying the two-circuit case to this boundary point gives upper bounds at least as strong as those required by the theorem. A detailed proof appears in \cref{sec:B.2}.
\end{proof}

\subsubsection{The Size Transfer Theorem}\label{sec:2.3.1}

\begin{theorem}[Size transfer theorem]\label{thm:size-transfer}
Let \(M\in\mathbb{Z}^{\mathfrak{c}_0\times \mathfrak{c}_1}\) be as in \cref{def:density}, and let \(\mathcal{C}\) be any finite circuit collection as in \cref{def:manifest}. Then
\[
\sigma(M)\le2^{\xi(\mathcal{C})},\qquad\xi(\mathcal{C}):=
\max_{(p,q)\in\Delta_{[\mathfrak{c}_0]}\times\Delta_{[\mathfrak{c}_1]}}\left\{
\min_{(a,b)\in\conv\{P_C(p,q):C\in\mathcal{C}\}}
\max\{a+\mathrm{H}_{\mathfrak{c}_0}(p),\,b+\mathrm{H}_{\mathfrak{c}_1}(q)\}\right\}.
\]
\end{theorem}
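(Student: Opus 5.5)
We decompose \(M^{\otimes n}\) into blocks indexed by pairs of frequency classes, realize each block with \cref{thm:realization}, and glue the block circuits in parallel. Fix \(\eps>0\) and a large \(n\ge n_0(\eps)\). The rows of \(M^{\otimes n}\) are partitioned as \([\mathfrak{c}_0]^n=\bigsqcup_{p}\FC_n(p)\) and the columns as \([\mathfrak{c}_1]^n=\bigsqcup_q\FC_n(q)\). There are at most \((n+1)^{\mathfrak{c}_0}\) row classes and \((n+1)^{\mathfrak{c}_1}\) column classes, so only polynomially many pairs \((p,q)\) occur.

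Hence
\[
M^{\otimes n}=\sum_{(p,q)}\iota_p\,M^{\otimes n}[\FC_n(p),\FC_n(q)]\,\iota_q^{\T},
\]
where \(\iota_p\) and \(\iota_q\) are the coordinate inclusion matrices. For each pair we choose \((a,b)\in P_{\mathcal{C}}(p,q)\) attaining the inner minimum in the definition of \(\xi\). The minimum exists because the polygon is compact and the objective is continuous, so
\[
\max\{a+\mathrm{H}_{\mathfrak{c}_0}(p),\,b+\mathrm{H}_{\mathfrak{c}_1}(q)\}\le\xi(\mathcal{C}).
\]
Then \cref{thm:realization} gives a circuit \(E_n^{(p,q)}\) computing the block with input degrees at most \(2^{(a+\eps)n}\) and output degrees at most \(2^{(b+\eps)n}\).

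We take the disjoint union of the middle layers of all the \(E_n^{(p,q)}\). Input gate \(x\in\FC_n(p)\) is wired into every block \((p,q)\) according to \(E_n^{(p,q)}\), and output gate \(y\in\FC_n(q)\) reads from every block \((p,q)\). The resulting depth-2 circuit computes \(M^{\otimes n}\) exactly, since distinct blocks have disjoint supports in the sum, and all weights remain in \(\{0,\pm1\}\).

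Size is bounded by counting edges through input and output degrees:
\[
\mathrm{S}\le\sum_{(p,q)}\Bigl(|\FC_n(p)|\,2^{(a+\eps)n}+|\FC_n(q)|\,2^{(b+\eps)n}\Bigr).
\]
Using \(|\FC_n(p)|\le2^{n\mathrm{H}_{\mathfrak{c}_0}(p)}\) and \(|\FC_n(q)|\le 2^{n\mathrm{H}_{\mathfrak{c}_1}(q)}\), each term is at most \(2\cdot2^{(\xi(\mathcal{C})+\eps)n}\). With polynomially many terms this gives
\[
\mathrm{S}(M^{\otimes n})\le \mathrm{poly}(n)\,2^{(\xi+\eps)n}.
\]
Taking \(n\)-th roots and letting \(n\to\infty\) yields \(\sigma(M)\le2^{\xi+\eps}\), and since \(\eps\) is arbitrary, \(\sigma(M)\le2^{\xi(\mathcal{C})}\).

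The main subtlety is the uniformity of \(n_0\). \Cref{thm:realization} supplies a single \(n_0\) valid for all \((p,q)\) and all \((a,b)\) in the polygon, which is exactly what allows every block to be handled at the same \(n\). All the real difficulty (Hoeffding concentration and hole-fixing) is already contained in \cref{thm:realization}. One should also observe that the maximum over \((p,q)\) in \(\xi\) is taken over the full simplex, which contains every rational frequency vector, so the bound applies to each block.
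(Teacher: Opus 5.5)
Your proposal is correct and takes essentially the same route as the paper's proof. You decompose \(M^{\otimes n}\) into frequency-class blocks, realize each block with \cref{thm:realization} at a minimizing point of \(P_{\mathcal{C}}(p,q)\), glue the blocks by concatenation sum, and bound the size using \(|\FC_n(p)|\le 2^{\mathrm{H}_{\mathfrak{c}_0}(p)n}\) together with the polynomial number of class pairs. The only cosmetic difference is that the paper applies the realization theorem with slack \(\eps/2\) and absorbs the polynomial factor into \(2^{(\xi(\mathcal{C})+\eps)n}\) for large \(n\), whereas you take \(n\)-th roots directly; both work.
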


\begin{proof}[Proof sketch]
Build the circuit piece by piece. For each pair \((p,q)\in\freq([\mathfrak{c}_0]^n)\times\freq([\mathfrak{c}_1]^n)\), choose \((a,b)\in\conv\{P_C(p,q):C\in\mathcal{C}\}\) minimizing \(\max\{a+\mathrm{H}_{\mathfrak{c}_0}(p),b+\mathrm{H}_{\mathfrak{c}_1}(q)\}\). By \cref{thm:realization}, there is a circuit computing \(M^{\otimes n}[\FC_n(p),\FC_n(q)]\) with input-gate degrees at most \(2^{(a+\eps)n}\) and output-gate degrees at most \(2^{(b+\eps)n}\). Glue these circuits together by taking their concatenation sum. There are only polynomially many possible pairs of frequency arrays \((p,q)\), and the frequency classes satisfy \(|\FC_n(p)|\le 2^{\mathrm{H}_{\mathfrak{c}_0}(p)n}\) and \(|\FC_n(q)|\le 2^{\mathrm{H}_{\mathfrak{c}_1}(q)n}\). Hence the resulting circuit has size at most \(2^{(\xi(\mathcal{C})+\eps)n+o(n)}\). Letting \(n\to\infty\) and then \(\eps\downarrow0\) gives \(\sigma(M)\le 2^{\xi(\mathcal{C})}\). A full proof appears in \cref{sec:B.2}.
\end{proof}

To see how \(\mathcal{C}\) performs at different pairs of probability distributions \((p,q)\), define
\[
\xi(\mathcal{C},p,q):=\min_{(a,b)\in\conv\{P_C(p,q):C\in\mathcal{C}\}}\max\{a+\mathrm{H}_{\mathfrak{c}_0}(p),\,b+\mathrm{H}_{\mathfrak{c}_1}(q)\}.
\]
We call the map \((p,q)\mapsto \xi(\mathcal{C},p,q)\) from \(\Delta_{[\mathfrak{c}_0]}\times\Delta_{[\mathfrak{c}_1]}\) to \(\mathbb{R}_{\geq0}\) the \emph{size landscape} of \(\mathcal{C}\).

\subsubsection{The Degree Transfer Theorem}\label{sec:2.3.2}

\begin{theorem}[Degree transfer theorem]\label{thm:degree-transfer}
Let \(M\in\mathbb{Z}^{\mathfrak{c}_0\times \mathfrak{c}_1}\) be as in \cref{def:density}, and let \(\mathcal{C}\) be any finite circuit collection as in \cref{def:manifest}. Then
\[
\delta(M)\le2^{\phi(\mathcal{C})},\qquad\phi(\mathcal{C}):=
\max_{(p,q)\in\Delta_{[\mathfrak{c}_0]}\times\Delta_{[\mathfrak{c}_1]}}\left\{
\min_{(a,b)\in\conv\{P_C(p,q):C\in\mathcal{C}\}}\max\{a,b\}\right\}.
\]
\end{theorem}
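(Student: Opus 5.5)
We prove the degree transfer theorem by the same piece-by-piece construction as the size transfer theorem, replacing the size accounting with a degree accounting. Fix \(\eps>0\) and take \(n\ge n_0\) from \cref{thm:realization}. The rows of \(M^{\otimes n}\) are partitioned into the frequency classes \(\FC_n(p)\) for \(p\in\freq([\mathfrak{c}_0]^n)\), and the columns into the classes \(\FC_n(q)\) for \(q\in\freq([\mathfrak{c}_1]^n)\). Hence
\[
M^{\otimes n}=\sum_{(p,q)}\iota_p\,M^{\otimes n}[\FC_n(p),\FC_n(q)]\,\iota_q^{\T},
\]
where \(\iota_p,\iota_q\) are the 0/1 embedding matrices. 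For each block \((p,q)\), we choose \((a,b)\in P_{\mathcal C}(p,q)\) minimizing \(\max\{a,b\}\), so that \(\max\{a,b\}\le\phi(\mathcal C)\). \Cref{thm:realization} then gives a circuit \(E_n^{(p,q)}\) for the block with all input degrees at most \(2^{(a+\eps)n}\) and all output degrees at most \(2^{(b+\eps)n}\). Both bounds are at most \(2^{(\phi(\mathcal C)+\eps)n}\).

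Next we glue these circuits by a concatenation sum. The middle layer is the disjoint union of the middle layers of all \(E_n^{(p,q)}\). An input gate \(x\in\FC_n(p)\) connects, inside the block circuit for \((p,q)\), to the middle gates it used there, for every \(q\). Symmetrically, an output gate \(y\in\FC_n(q)\) collects from the block circuits for \((p,q)\) over all \(p\). Because weights are only copied, the weights remain in \(\{0,\pm1\}\), and the product of the glued matrices is exactly the block sum, i.e. \(M^{\otimes n}\). The degree of input \(x\) is the sum over the at most \(|\freq([\mathfrak{c}_1]^n)|\le(n+1)^{\mathfrak{c}_1}\) blocks in its row-class. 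So it is at most \((n+1)^{\mathfrak{c}_1}2^{(\phi(\mathcal C)+\eps)n}\). The analogous bound holds for every output gate, with \((n+1)^{\mathfrak{c}_0}\).

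Therefore \(\mathrm D(M^{\otimes n})\le \mathrm{poly}(n)\,2^{(\phi(\mathcal C)+\eps)n}\) for all \(n\ge n_0\). Taking \(n\)-th roots and using that the limit defining \(\delta(M)\) exists, we get \(\delta(M)\le 2^{\phi(\mathcal C)+\eps}\). Letting \(\eps\downarrow 0\) gives the theorem.

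All real difficulty sits inside \cref{thm:realization}, which we assume, so the argument above is essentially bookkeeping. The one point needing care is that the maximum in \(\phi(\mathcal C)\) is over all of \(\Delta\times\Delta\), which contains every rational frequency pair. Attainment of the inner minimum is immediate because \(P_{\mathcal C}(p,q)\) is compact. No entropy terms appear, unlike in the size theorem, because degree counts only the polynomially many blocks touching each gate, not the sizes of the classes.
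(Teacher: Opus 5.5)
Your proposal is correct and follows the paper's proof essentially verbatim. The paper likewise decomposes \(M^{\otimes n}\) into frequency-class blocks, feeds a \(\max\{a,b\}\)-minimizing point of \(P_{\mathcal C}(p,q)\) into \cref{thm:realization}, glues the block circuits by a concatenation sum in which each gate meets at most polynomially many blocks, and lets \(\eps\downarrow0\). The only cosmetic difference is in absorbing the \((n+1)^{\mathfrak{c}_1}\) and \((n+1)^{\mathfrak{c}_0}\) factors: the paper applies \cref{thm:realization} with slack \(\eps/2\) and chooses an explicit \(n_5\), whereas you use slack \(\eps\) and absorb them when taking \(n\)-th roots in the limit defining \(\delta(M)\), which is equally valid.
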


\begin{proof}[Proof sketch]
Build the circuit piece by piece. For each pair \((p,q)\in\freq([\mathfrak{c}_0]^n)\times\freq([\mathfrak{c}_1]^n)\), choose \((a,b)\in\conv\{P_C(p,q):C\in\mathcal{C}\}\) minimizing \(\max\{a,b\}\). By \cref{thm:realization}, there is a circuit computing \(M^{\otimes n}[\FC_n(p),\FC_n(q)]\) with input-gate degrees at most \(2^{(a+\eps)n}\) and output-gate degrees at most \(2^{(b+\eps)n}\). Glue these circuits together by taking their concatenation sum. A fixed input or output gate is involved in only polynomially many summands of the concatenation sum, so the resulting circuit has maximum input-output degree at most \(2^{(\phi(\mathcal{C})+\eps)n+o(n)}\). Letting \(n\to\infty\) and then \(\eps\downarrow0\) gives \(\delta(M)\le 2^{\phi(\mathcal{C})}\). A full proof appears in \cref{sec:B.2}.
\end{proof}

To see how \(\mathcal{C}\) performs at different pairs of probability distributions \((p,q)\), define
\[
\phi(\mathcal{C},p,q):=\min_{(a,b)\in\conv\{P_C(p,q):C\in\mathcal{C}\}}\max\{a,b\}.
\]
We call the map \((p,q)\mapsto \phi(\mathcal{C},p,q)\) from \(\Delta_{[\mathfrak{c}_0]}\times\Delta_{[\mathfrak{c}_1]}\) to \(\mathbb{R}_{\geq0}\) the \emph{degree landscape} of \(\mathcal{C}\).

\subsubsection{Instantiation of the Transfer Theorems}\label{sec:2.3.3}

For each of the four circuit families \(\mathrm{X}\in\{\mathrm{Jessica},\mathrm{Sonetto},\mathrm{Regulus},\mathrm{Regulus}^{\T}\}\), \cref{sec:3} will define a function \(\ell_{\mathrm{X}}:[0,1]\to\mathbb{R}_{\ge0}\).

By \cref{prop:envelope-props}, with \(m_*=10^{11}\), for every \(m\ge m_*\), every \(h\in[H]\), and every \(p\in[0,1]\), we have
\[
\begin{aligned}
f_{C_{\mathrm{Jessica}}(7m,h)}(\Ber(p))&\le\ell_{\mathrm{Jessica}}(p),\\
f_{C_{\mathrm{Sonetto}}(7m,h)}(\Ber(p))&\le\ell_{\mathrm{Sonetto}}(p),\\
f_{C_{\mathrm{Regulus}}(7m,h)}(\Ber(p))&\le\ell_{\mathrm{Regulus}}(p),\\
f_{C_{\mathrm{Regulus}^{\T}}(7m,h)}(\Ber(p))&\le\ell_{\mathrm{Regulus}^{\T}}(p).
\end{aligned}
\]
Combining these bounds with the definitions of \(C_{\mathrm{J}},C_{\mathrm{S}},C_{\mathrm{R}},C_{\mathrm{R}^{\T}}\) and \cref{prop:transpose}, we obtain, for all \(p,q\in[0,1]\),
\[
\begin{aligned}
f_{C_{\mathrm{J}}}(\Ber(p))&\le\ell_{\mathrm{Jessica}}(p),&
g_{C_{\mathrm{J}}}(\Ber(q))&\le\ell_{\mathrm{Jessica}}(q),\\
f_{C_{\mathrm{S}}}(\Ber(p))&\le\ell_{\mathrm{Sonetto}}(p),&
g_{C_{\mathrm{S}}}(\Ber(q))&\le\ell_{\mathrm{Sonetto}}(q),\\
f_{C_{\mathrm{R}}}(\Ber(p))&\le\ell_{\mathrm{Regulus}}(p),&
g_{C_{\mathrm{R}}}(\Ber(q))&\le\ell_{\mathrm{Regulus}^{\T}}(q),\\
f_{C_{\mathrm{R}^{\T}}}(\Ber(p))&\le\ell_{\mathrm{Regulus}^{\T}}(p),&
g_{C_{\mathrm{R}^{\T}}}(\Ber(q))&\le\ell_{\mathrm{Regulus}}(q).
\end{aligned}
\]

We first instantiate the size transfer theorem on the singleton collection \(\{C_{\mathrm{J}}\}\). By \cref{thm:size-transfer},
\[
\begin{aligned}
\sigma(R_1)\le2^{\xi(\{C_{\mathrm{J}}\})},\qquad\xi(\{C_{\mathrm{J}}\})&=
\max_{(p,q)\in\Delta_{\{0,1\}}\times\Delta_{\{0,1\}}}\left\{
\min_{(a,b)\in\{P_{C_{\mathrm{J}}}(p,q)\}}
\max\{a+\mathrm{H}_2(p),\,b+\mathrm{H}_2(q)\}\right\}\\
&=\max\left\{\max_{p\in[0,1]}\{f_{C_{\mathrm{J}}}(\Ber(p))+\mathrm{h}(p)\},\ \max_{q\in[0,1]}\{g_{C_{\mathrm{J}}}(\Ber(q))+\mathrm{h}(q)\}\right\}\\
&\le\max_{p\in[0,1]}\{\ell_{\mathrm{Jessica}}(p)+\mathrm{h}(p)\}.
\end{aligned}
\]
We claim that
\[
\max_{p\in[0,1]}\{\ell_{\mathrm{Jessica}}(p)+\mathrm{h}(p)\}<1.244844.
\tag{\(*\)}\label{eq:star}
\]
In \cref{sec:3.3.2}, we define a deterministic program \(\mathcal{V}_{\mathrm{size}}\), the \emph{size-inequality verifier}, and \cref{thm:size-verifier} proves that if \(\mathcal{V}_{\mathrm{size}}\) accepts, then \eqref{eq:star} holds.

\begin{corollary}[Upper bound on \(\sigma(R_1)\)]\label{cor:sigma}
\(\sigma(R_1)<2^{1.244844}\).
\end{corollary}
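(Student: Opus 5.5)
The corollary follows by chaining results already stated; no new construction is needed. We apply \cref{thm:size-transfer} to the singleton collection \(\mathcal{C}=\{C_{\mathrm{J}}\}\), with \(M=R_1\) and \(\mathfrak{c}_0=\mathfrak{c}_1=2\). By \cref{prop:tree}, \(C_{\mathrm{J}}=C_{\mathrm{Jessica}}(7m_*,0)\) is a depth-2 linear circuit computing \(R_1^{\otimes 7m_*}\), so the theorem applies.

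For a singleton collection the convex hull is the single point \(P_{C_{\mathrm{J}}}(p,q)\). Since distributions on \(\{0,1\}\) are exactly \(\Ber(p)\), the inner min disappears and the outer max over \((p,q)\) splits into separate maxima over \(p\) and over \(q\). This gives the displayed identity
\[
\xi(\{C_{\mathrm{J}}\})=\max\Bigl\{\max_p\{f_{C_{\mathrm{J}}}(\Ber(p))+\mathrm{h}(p)\},\ \max_q\{g_{C_{\mathrm{J}}}(\Ber(q))+\mathrm{h}(q)\}\Bigr\}.
\]

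Next we bound both density functions by the same envelope. The bound \(f_{C_{\mathrm{J}}}(\Ber(p))\le\ell_{\mathrm{Jessica}}(p)\) is \cref{prop:envelope-props} with \(m=m_*\) and \(h=0\). For \(g_{C_{\mathrm{J}}}\), we use \cref{prop:transpose}, whose hypothesis holds because the Jessica raw tilts stay at least about \(1.33\cdot10^{-3}\) from half-integers. Together with Jessica being self-transpose, this gives \(C_{\mathrm{J}}^{\T}\cong C_{\mathrm{Jessica}}(7m_*,611)\). Transposition swaps input and output degrees, and the isomorphism only relabels middle gates, so \(g_{C_{\mathrm{J}}}(\Ber(q))=f_{C_{\mathrm{Jessica}}(7m_*,611)}(\Ber(q))\). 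This is at most \(\ell_{\mathrm{Jessica}}(q)\), again by \cref{prop:envelope-props}, now with \(h=611\). Hence \(\xi(\{C_{\mathrm{J}}\})\le\max_{p\in[0,1]}\{\ell_{\mathrm{Jessica}}(p)+\mathrm{h}(p)\}\).

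Finally, \eqref{eq:star} gives the strict inequality \(<1.244844\). It follows from \cref{thm:size-verifier} together with the fact that \(\mathcal{V}_{\mathrm{size}}\) accepts, which is established by running the program. Therefore \(\sigma(R_1)\le 2^{\xi(\{C_{\mathrm{J}}\})}<2^{1.244844}\).

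The only substantive step is \eqref{eq:star}. Its proof depends on the certificate and Lipschitz properties of \(\ell_{\mathrm{Jessica}}\) from \cref{sec:3} and on the machine verification. One delicate point is that \(\mathrm{h}\) is not Lipschitz at \(p\in\{0,1\}\), so the verifier must handle the endpoints separately, for instance by using \(\mathrm{h}(p)\le -p\log_2 p+p\log_2 e\) near \(0\). Within the corollary itself everything else is bookkeeping: checking the hypotheses of \cref{thm:size-transfer} and \cref{prop:transpose}.
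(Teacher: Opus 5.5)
Your proposal is correct and follows the paper's route exactly. The paper likewise reduces \(\xi(\{C_{\mathrm{J}}\})\) to \(\max_p\{\ell_{\mathrm{Jessica}}(p)+\mathrm{h}(p)\}\) in \cref{sec:2.3.3}, using \cref{prop:envelope-props} for \(f_{C_{\mathrm{J}}}\) and \cref{prop:transpose} plus Jessica's self-transposition for \(g_{C_{\mathrm{J}}}\), and then concludes with \cref{thm:size-verifier} and the accepting run of \(\mathcal{V}_{\mathrm{size}}\).

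Your closing worry about the endpoints is unnecessary. The verifier never uses a Lipschitz bound on \(\mathrm{h}\): on each subinterval \([a,b]\) it bounds \(\mathrm{h}\) by its value at \(\min\{\max\{1/2,a\},b\}\), using only that \(\mathrm{h}\) is monotone on each side of \(1/2\).
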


\begin{proof}
This follows from \cref{thm:size-verifier} together with the fact that we ran \(\mathcal{V}_{\mathrm{size}}\) and it accepts.
\end{proof}

For the five circuits \(C_0,\ldots,C_4\) defined in \cref{sec:2.2}, a direct computation gives
\[
\begin{aligned}
f_{C_0}(\Ber(p))&=0,&g_{C_0}(\Ber(q))&=1-q,\\
f_{C_1}(\Ber(p))&=1-p,&g_{C_1}(\Ber(q))&=0,\\
f_{C_2}(\Ber(p))&=p,&g_{C_2}(\Ber(q))&=q,\\
f_{C_3}(\Ber(p))&=p(1-p),&g_{C_3}(\Ber(q))&=\tfrac12(1-q^2),\\
f_{C_4}(\Ber(p))&=\tfrac12(1-p^2),&g_{C_4}(\Ber(q))&=q(1-q).
\end{aligned}
\]
For the three circuits \(C_{\mathrm{S}},C_{\mathrm{R}},C_{\mathrm{R}^{\T}}\), the bounds displayed above give
\[
\begin{aligned}
f_{C_{\mathrm{S}}}(\Ber(p))&\le\ell_{\mathrm{Sonetto}}(p),&
g_{C_{\mathrm{S}}}(\Ber(q))&\le\ell_{\mathrm{Sonetto}}(q),\\
f_{C_{\mathrm{R}}}(\Ber(p))&\le\ell_{\mathrm{Regulus}}(p),&
g_{C_{\mathrm{R}}}(\Ber(q))&\le\ell_{\mathrm{Regulus}^{\T}}(q),\\
f_{C_{\mathrm{R}^{\T}}}(\Ber(p))&\le\ell_{\mathrm{Regulus}^{\T}}(p),&
g_{C_{\mathrm{R}^{\T}}}(\Ber(q))&\le\ell_{\mathrm{Regulus}}(q).
\end{aligned}
\]
Let \(\mathcal{C}_8:=\{C_0,C_1,C_2,C_3,C_4,C_{\mathrm{S}},C_{\mathrm{R}},C_{\mathrm{R}^{\T}}\}\) be the collection of eight circuits used in the degree-transfer instantiation below.

Define eight points in the first quadrant of the Cartesian plane by
\[
\begin{aligned}
Q_0(p,q)&:=(0,\,1-q),\\
Q_1(p,q)&:=(1-p,\,0),\\
Q_2(p,q)&:=(p,\,q),\\
Q_3(p,q)&:=(p(1-p),\,\tfrac12(1-q^2)),\\
Q_4(p,q)&:=(\tfrac12(1-p^2),\,q(1-q)),\\
Q_5(p,q)&:=(\ell_{\mathrm{Sonetto}}(p),\,\ell_{\mathrm{Sonetto}}(q)),\\
Q_6(p,q)&:=(\ell_{\mathrm{Regulus}}(p),\,\ell_{\mathrm{Regulus}^{\T}}(q)),\\
Q_7(p,q)&:=(\ell_{\mathrm{Regulus}^{\T}}(p),\,\ell_{\mathrm{Regulus}}(q)),
\end{aligned}
\]
and set
\[
\mathbf{z}(p,q):=\min_{(a,b)\in\conv\{Q_0(p,q),\dots,Q_7(p,q)\}}\max\{a,b\}.
\]

We now instantiate the degree transfer theorem on the collection \(\mathcal{C}_8\). By \cref{thm:degree-transfer},
\[
\begin{aligned}
\delta(R_1)\le2^{\phi(\mathcal{C}_8)},\qquad\phi(\mathcal{C}_8)&=
\max_{(p,q)\in\Delta_{\{0,1\}}\times\Delta_{\{0,1\}}}\left\{
\min_{(a,b)\in\conv\{P_C(p,q):C\in\mathcal{C}_8\}}\max\{a,b\}\right\}\\
&\le\max_{(p,q)\in[0,1]^2}\mathbf{z}(p,q).
\end{aligned}
\]
We claim that
\[
\max_{(p,q)\in[0,1]^2}\mathbf{z}(p,q)<0.319866.
\tag{\(**\)}\label{eq:starstar}
\]
In \cref{sec:3.3.3}, we define a deterministic program \(\mathcal{V}_{\mathrm{degree}}\), the \emph{degree-inequality verifier}, and \cref{thm:degree-verifier} proves that if \(\mathcal{V}_{\mathrm{degree}}\) accepts, then \eqref{eq:starstar} holds.

\begin{corollary}[Upper bound on \(\delta(R_1)\)]\label{cor:delta}
\(\delta(R_1)<2^{0.319866}\).
\end{corollary}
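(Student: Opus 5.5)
The corollary follows the same pattern as \cref{cor:sigma}, chaining results already stated. By \cref{thm:degree-transfer} applied to \(M=R_1\) and the finite collection \(\mathcal{C}_8\), we have \(\delta(R_1)\le 2^{\phi(\mathcal{C}_8)}\). So it suffices to show \(\phi(\mathcal{C}_8)<0.319866\).

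First, I would reduce \(\phi(\mathcal{C}_8)\) to \(\max_{(p,q)}\mathbf{z}(p,q)\). Since \(\mathfrak{c}_0=\mathfrak{c}_1=2\), every \((p,q)\in\Delta_{\{0,1\}}\times\Delta_{\{0,1\}}\) is a pair of Bernoulli distributions.

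\begin{itemize}
\item For \(C_0,\dots,C_4\), the manifestation \(P_C(\Ber(p),\Ber(q))\) equals \(Q_i(p,q)\) exactly, by the direct degree computations listed above.
\item For \(C_{\mathrm{S}},C_{\mathrm{R}},C_{\mathrm{R}^\T}\), the manifestation is coordinatewise at most \(Q_5,Q_6,Q_7\). This uses the envelope bounds \(f\le\ell_{\mathrm{X}}\) from \cref{prop:envelope-props}, with \(m=m_*\) and \(h=0\). For the \(g\)-coordinates it uses \cref{prop:transpose}: the transpose maps \(C_{\mathrm{X}}(n,0)\) to \(C_{\mathrm{X}^\T}(n,H-1)\), so \(g_{C_{\mathrm{X}}(n,0)}=f_{C_{\mathrm{X}^\T}(n,H-1)}\le \ell_{\mathrm{X}^\T}\). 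This holds because the envelope bound covers every state \(h\), and the half-integer hypothesis was checked numerically.
\end{itemize}

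Next, I need monotonicity of the inner minimization. The map \((a,b)\mapsto\max\{a,b\}\) is coordinatewise nondecreasing. Therefore, if each point of a finite set is dominated coordinatewise by the corresponding point of another set, the minimum of \(\max\{a,b\}\) over the first set's convex hull is at most the minimum over the second's. To see this, take the same convex weights on the dominating points. This gives \(\phi(\mathcal{C}_8,p,q)\le\mathbf{z}(p,q)\) pointwise, and hence \(\phi(\mathcal{C}_8)\le\max_{[0,1]^2}\mathbf{z}\).

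Finally, I would establish \eqref{eq:starstar} via \cref{thm:degree-verifier}: if the program \(\mathcal{V}_{\mathrm{degree}}\) accepts, then \eqref{eq:starstar} holds. Running \(\mathcal{V}_{\mathrm{degree}}\) and observing acceptance then yields \(\phi(\mathcal{C}_8)<0.319866\), and so \(\delta(R_1)<2^{0.319866}\).

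The genuine difficulty is not in this chaining but inside \cref{thm:degree-verifier}. A finite computation must certify a bound on a continuous two-dimensional max–min. I expect this is done by evaluating \(\mathbf{z}\) on a grid and controlling the error between grid points. The error control would use the \(13\)-Lipschitz property of the \(\ell_{\mathrm{X}}\) and the explicit Lipschitz constants of \(Q_0,\dots,Q_4\); the map from points to \(\mathbf{z}\) is \(1\)-Lipschitz in the sup-norm of the input points. For the corollary itself, that work is taken as given.
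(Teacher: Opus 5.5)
Your proposal is correct and follows the paper's route: \cref{thm:degree-transfer} applied to \(\mathcal{C}_8\), then coordinatewise domination of the eight manifestations by \(Q_0,\dots,Q_7\) (via \cref{prop:envelope-props}, with \cref{prop:transpose} handling the \(g\)-coordinates), and finally \cref{thm:degree-verifier} together with the observed acceptance of \(\mathcal{V}_{\mathrm{degree}}\). Your same-weights argument for why coordinatewise domination gives \(\phi(\mathcal{C}_8)\le\max\mathbf{z}\) fills in a step the paper leaves implicit, and your guess at the verifier's Lipschitz-plus-grid mechanism matches its actual proof.
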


\begin{proof}
This follows from \cref{thm:degree-verifier} together with the fact that we ran \(\mathcal{V}_{\mathrm{degree}}\) and it accepts.
\end{proof}

\begin{remark}
Examining \(\ell_{\mathrm{Jessica}}(p)+\mathrm{h}(p)\) for \(p\in[0,1]\), and \(\mathbf{z}(p,q)\) for \((p,q)\in[0,1]^2\), reveals concrete bottlenecks for future optimization. To improve the upper bound on \(\sigma(R_1)\), one should design a circuit \(C\) minimizing \(\frac{f_C(\Ber(p_0))+g_C(\Ber(p_0))}{2}\) near \(p_0\approx0.37\). To improve the upper bound on \(\delta(R_1)\), one should design a circuit \(C\) satisfying \(\frac{f_C(\Ber(p_1))+g_C(\Ber(p_1))}{2}<p_1\) for \(p_1\) as small as possible, with the relevant range here being \(p_1\approx0.319\).
\end{remark}

Finally, we instantiate the size transfer theorem on the collection \(\{C'_{\mathsf{pull}},C'_{\mathsf{push}}\}\) and the degree transfer theorem on the collection \(\{C'_{\mathsf{pull}},C'_{(0)},\ldots,C'_{(c-1)}\}\), all defined in \cref{sec:2.2}.

\begin{theorem}[Upper bounds on \(\sigma(P_1^{(c)})\) and \(\delta(P_1^{(c)})\)]\label{thm:pm}
For every integer \(c\ge2\),
\[
\sigma(P_1^{(c)})\leq c\cdot\psi(\log_2c),
\qquad
\delta(P_1^{(c)})\leq c^{\frac{c-1}{(c-1)+c\log_2c}},
\]
where \(\psi(r)\) denotes the unique root \(x>1\) of the equation \(x=1+\frac{1}{x^r}\).
\end{theorem}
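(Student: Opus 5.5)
The plan is to apply \cref{thm:size-transfer} to \(\{C'_{\mathsf{pull}},C'_{\mathsf{push}}\}\) and \cref{thm:degree-transfer} to \(\{C'_{\mathsf{pull}},C'_{(0)},\ldots,C'_{(c-1)}\}\). In both cases the manifestations are read off from \cref{tab:pm-circuits}. Each case then reduces to a one-variable optimization in \(s:=q_c\), the mass that \(q\in\Delta_{[c+1]}\) puts on the \(*\) column. Throughout write \(r:=\log_2 c\).

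\emph{Size.} The two manifestation points are \((0,sr)\) for \(C'_{\mathsf{pull}}\) and \((1,0)\) for \(C'_{\mathsf{push}}\). Taking the mixture with weight \(t\) on push gives the hull point \((t,(1-t)sr)\), so
\[
\xi(\mathcal{C},p,q)\le\min_{t\in[0,1]}\max\{t+\mathrm{H}_c(p),\,(1-t)sr+\mathrm{H}_{c+1}(q)\}.
\]
I will bound \(\mathrm{H}_c(p)\le r\). With \(s\) fixed, the grouping rule gives \(\mathrm{H}_{c+1}(q)\le \mathrm{h}(s)+(1-s)r\). Substituting, the two terms inside the maximum become \(r+t\) and \(r+\mathrm{h}(s)-tsr\). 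Equalizing them at \(t=\mathrm{h}(s)/(1+sr)\) shows \(\xi\le r+\max_{s}\mathrm{h}(s)/(1+rs)\). This choice of \(t\) is at most \(1\) because \(\mathrm{h}(s)\le 1\); if it were larger, \(t=1\) would give an even better bound.

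Let \(\lambda^*:=\max_s \mathrm{h}(s)/(1+rs)\). Then \(\lambda^*\) is the unique \(\lambda\) with \(\max_s[\mathrm{h}(s)-\lambda(1+rs)]=0\). The Legendre-type identity \(\max_s[\mathrm{h}(s)-\mu s]=\log_2(1+2^{-\mu})\) turns this into \(\log_2(1+2^{-\lambda r})=\lambda\). Equivalently, \(x=2^{\lambda^*}\) satisfies \(x=1+x^{-r}\), so \(2^{\lambda^*}=\psi(r)\) and \(\sigma(P_1^{(c)})\le 2^{r+\lambda^*}=c\,\psi(\log_2 c)\).

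\emph{Degree.} Here I will not optimize over the whole hull. It suffices to use one point of it: the uniform average of the \(C'_{(i)}\) manifestations, mixed with the \(C'_{(\mathsf{pull})}\) point. The averaging is the key observation. Since \(\sum_i(1-p_i)=c-1\) and \(\sum_{i<c}q_i=1-s\), the average point is \((A,B):=\bigl(\tfrac{c-1}{c},\,\tfrac{r(1-s)}{c}\bigr)\), independently of \(p\) and of how \(q\) distributes its remaining mass. So \(\phi(\mathcal{C},p,q)\) is at most the min of \(\max\{a,b\}\) over the segment from \((0,S)\) to \((A,B)\), where \(S:=sr\).

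When \(A\ge B\), the segment crosses the diagonal and the value there is \(AS/(A-B+S)\). Explicitly this is
\[
F(s)=\frac{Ars}{A-r/c+rs(c+1)/c}.
\]
\(F\) is nondecreasing in \(s\) exactly when \(A-r/c\ge0\), i.e.\ \(c-1\ge\log_2 c\), which holds for all \(c\ge2\). Hence the worst case is \(s=1\), where \(F(1)=Ar/(A+r)=r(c-1)/((c-1)+cr)\). This is precisely the exponent of \(c\) claimed in the statement, giving \(\delta(P_1^{(c)})\le 2^{F(1)}=c^{\frac{c-1}{(c-1)+c\log_2 c}}\).

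The main obstacle is the leftover regime \(B>A\), which can only occur for small \(s\). There I would argue that the segment from \((0,S)\) to \((A,B)\) still contains a point with \(\max\{a,b\}\le F(1)\), using that \(S\) is small. If that fails, I would fall back on the full hull, using individual \(C'_{(i)}\) points with \(i\) chosen to maximize \(p_i\). Finally, the asymptotic forms \(c(1+\Theta(\log\log c/\log c))\) and \(2-\Theta(1/\log c)\) follow from elementary expansion of \(\psi(r)\) and of \(r(c-1)/((c-1)+cr)\).
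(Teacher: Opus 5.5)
Your proof is correct, and the regime \(B>A\) that you flag as the main obstacle in the degree part never occurs. For every \(s\in[0,1]\) we have \(B=\frac{r(1-s)}{c}\le\frac{r}{c}\le\frac{c-1}{c}=A\), by the same inequality \(\log_2 c\le c-1\) you already use for the monotonicity of \(F\). So neither fallback strategy is needed, and the degree argument is complete once you note this. The only leftover wrinkle is \(c=2\), \(s=0\), where your crossing formula reads \(0/0\); there the pull point is \((0,0)\), so the bound is trivial.

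\emph{Size bound.} Your route differs from the paper's.
\begin{itemize}
\item The paper fixes one \((p,q)\)-independent hull point, \((a,b)=(\log_2 x,(1-\log_2 x)q_c\log_2 c)\) with \(x=\psi(\log_2 c)\).
\item It checks both coordinates by nonnegativity of \(D_{\mathrm{KL}}(p\Vert p^*)\) with \(p^*\) uniform, and of \(D_{\mathrm{KL}}(q\Vert q^*)\) with \(q^*\propto(1,\ldots,1,c(x-1))\), using \(x-1=c^{-\log_2 x}\).
\item You instead optimize the mixing weight separately for each \(s\), reduce to \(\max_s \mathrm{h}(s)/(1+rs)\) via the grouping rule, and solve it with the Legendre identity.
\end{itemize}
Your route is longer, but it derives the equation \(x=1+x^{-r}\) rather than guessing it. The paper's \(q^*\) is exactly your maximizer: \(q^*_c=(x-1)/x=1/(1+2^{\lambda^* r})\), with the remaining mass spread uniformly. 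Moreover, every inequality you use is attained at uniform \(p\) and at \(q\) uniform on its first \(c\) coordinates, and your equalizing \(t\) is optimal. So your argument shows that \(\log_2 c+\log_2\psi(\log_2 c)\) equals \(\xi(\{C'_{\mathsf{pull}},C'_{\mathsf{push}}\})\) exactly, not merely bounds it.

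\emph{Degree bound.} Both proofs use the same hull points. The paper places the fixed weight \(\frac{c-1}{(c-1)+c\log_2 c}\) on \(C'_{\mathsf{pull}}\) and \(\frac{\log_2 c}{(c-1)+c\log_2 c}\) on each \(C'_{(i)}\), for all \((p,q)\). This is precisely your crossing point at the worst case \(s=1\). The left coordinate is then the constant \(\frac{(c-1)\log_2 c}{(c-1)+c\log_2 c}\), and the right coordinate is bounded by it in one line via \(\log_2 c\le c-1\). That avoids both your monotonicity computation in \(s\) and any case split.
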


\begin{proof}[Proof sketch]
Let \(x:=\psi(\log_2 c)\), so that \(x=1+\frac{1}{x^{\log_2 c}}\). For the upper bound on \(\sigma(P_1^{(c)})\), apply \cref{thm:size-transfer} to \(\{C'_{\mathsf{pull}},C'_{\mathsf{push}}\}\). At \((p,q)\), the two manifestation points are \((0,q_c\log_2 c)\) and \((1,0)\). We choose the point \((a,b)=(\log_2 x,(1-\log_2 x)q_c\log_2 c)\) on their line segment. Using the nonnegativity of \(D_{\mathrm{KL}}(p\Vert p^*)\), where \(p^*=(1/c,\ldots,1/c)\), and of \(D_{\mathrm{KL}}(q\Vert q^*)\), where \(q_i^*=1/(cx)\) for \(i<c\) and \(q_c^*=(x-1)/x\), together with the identity \(x-1=x^{-\log_2 c}=c^{-\log_2 x}\), we obtain \(a+\mathrm{H}_c(p)\le\log_2 c+\log_2 x\) and \(b+\mathrm{H}_{c+1}(q)\le\log_2 c+\log_2 x\). Hence \(\xi(\{C'_{\mathsf{pull}},C'_{\mathsf{push}}\})\le \log_2 c+\log_2 x\), and \cref{thm:size-transfer} gives \(\sigma(P_1^{(c)})\le 2^{\log_2 c+\log_2 x}=c\cdot\psi(\log_2 c)\).

For the upper bound on \(\delta(P_1^{(c)})\), apply \cref{thm:degree-transfer} to \(\{C'_{\mathsf{pull}},C'_{(0)},\ldots,C'_{(c-1)}\}\). At \((p,q)\), the \(c+1\) manifestation points are \((0,q_c\log_2 c)\) and \((1-p_i,q_i\log_2 c)\) for \(i=0,\ldots,c-1\). We choose the point of their convex hull obtained from the weights \(\frac{c-1}{(c-1)+c\log_2 c},\frac{\log_2 c}{(c-1)+c\log_2 c},\ldots,\frac{\log_2 c}{(c-1)+c\log_2 c}\). Using \(\sum_i(1-p_i)=c-1\), \(\sum_i q_i=1-q_c\), and \(\log_2 c\le c-1\), its left coordinate equals \(\frac{(c-1)\log_2 c}{(c-1)+c\log_2 c}\) and its right coordinate is \(\frac{\log_2 c}{(c-1)+c\log_2 c}\bigl((c-1)q_c+(\log_2 c)(1-q_c)\bigr)\le\frac{(c-1)\log_2 c}{(c-1)+c\log_2 c}\). Hence \(\phi(\{C'_{\mathsf{pull}},C'_{(0)},\ldots,C'_{(c-1)}\})\le \frac{(c-1)\log_2 c}{(c-1)+c\log_2 c}\), and \cref{thm:degree-transfer} gives \(\delta(P_1^{(c)})\le 2^{\frac{(c-1)\log_2 c}{(c-1)+c\log_2 c}}=c^{\frac{c-1}{(c-1)+c\log_2 c}}\). A full proof appears in \cref{sec:B.2}.
\end{proof}

\section{Lyapunov Exponents and Machine Verification}\label{sec:3}

Let \(\mathcal{M}\) be a finite collection of nonnegative matrices, and let \(\mu\) be a probability distribution on \(\mathcal{M}\). Consider a sequence of independent and identically distributed random matrices \(A_1, A_2, \ldots\) drawn from \(\mathcal{M}\) according to \(\mu\). The associated \emph{quenched Lyapunov exponent} is defined as the almost-sure limit
\[
\lambda=\lim_{m\to\infty}\frac1m\log_2\|A_mA_{m-1}\cdots A_1\|_1,\qquad\text{a.s.}
\]
The existence of this limit is well established: it is a direct consequence of the Furstenberg--Kesten theorem \cite{FK60}, and it can also be framed within the broader contexts of Kingman's subadditive ergodic theorem \cite{Kin73} and Oseledets' multiplicative ergodic theorem \cite{Ose68}. Despite this theoretical guarantee of existence, however, computing the exponent \(\lambda\) exactly remains notoriously difficult.

\subsection{Reducing Mean Log-Degree to a Lyapunov Exponent}\label{sec:3.1}

Fix a parameter set \(\mathrm{X}=(k,H,I,Z,\alpha,\beta)\) in the sense of \cref{sec:2.1}, and write \(I=(\mathbf{c},\boldsymbol{\iota})\).

For \(x\in\{0,1\}^k\) and \(u,v\in[H]\), define
\[
N_{u,v}(x):=\left|\left\{r:\left(U_{c_{\boldsymbol{\iota}(u)},r}\right)_x\neq0,\ \clamp_H\!\left(u+\Delta_{c_{\boldsymbol{\iota}(u)},r}\right)=v\right\}\right|.
\]
By the bit-permutation symmetry of the Sergeev decompositions, \(N_{u,v}(x)\) depends only on the Hamming weight \(\|x\|_1\). Hence, for each \(i=0,\ldots,k\), we obtain a matrix \(\mathsf{A}_i\in\mathbb{Z}_{\ge0}^{H\times H}\) by setting \(\mathsf{A}_i[u,v]:=N_{u,v}(x)\) for an arbitrary \(x\in\{0,1\}^k\) with \(\|x\|_1=i\). We call \(\mathsf{A}_0,\ldots,\mathsf{A}_k\) the \emph{transition matrices} of the circuit family generated by \(\mathrm{X}\).

\begin{lemma}\label{lem:transition}
The transition matrices \(\mathsf{A}_0,\ldots,\mathsf{A}_k\) are well defined.
\end{lemma}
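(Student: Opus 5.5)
The plan is to prove that \(N_{u,v}(x)=N_{u,v}(x')\) whenever \(\|x\|_1=\|x'\|_1\). Since any two \(k\)-bit strings of the same Hamming weight are related by a permutation of the \(k\) bit positions, it is enough to show that \(N_{u,v}\) is invariant under every coordinate permutation \(\pi\in S_k\), acting by \((x^\pi)_j:=x_{\pi(j)}\). Fix \(u\), and write \(c:=c_{\boldsymbol{\iota}(u)}\). Then \(N_{u,v}(x)\) counts the indices \(r\in[2^k]\) with \((U_{c,r})_x\neq0\) and \(\clamp_H(u+\Delta_{c,r})=v\).

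\emph{Step 1 (symmetry of the Sergeev decomposition).} From the definition of \(S_k(c)\) in \cref{sec:A.2}, I will extract the bit-permutation symmetry that the paper invokes. For every \(\pi\in S_k\) there should be a bijection \(\tau_\pi\) of the middle-gate index set \([2^k]\) such that, for all \(r\) and all \(x\in\{0,1\}^k\),
\[
(U_{c,\tau_\pi(r)})_{x}=(U_{c,r})_{x^{\pi}},\qquad (V_{c,\tau_\pi(r)})_{y}=(V_{c,r})_{y^{\pi}}.
\]
In other words, relabelling the coordinates of the tensor factors permutes the rank-\(1\) terms among themselves. I expect this to be the main obstacle, because everything rests on the explicit form of \(S_k(c)\). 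I would first check whether the identifier \(c\) controls a choice that depends only on Hamming-weight data; my guess is that each middle gate is indexed by a subset \(r\subseteq[k]\), with supports defined by comparisons of \(x,y\) against \(r\) together with the bit pattern of \(c\). If so, \(\tau_\pi\) is simply the induced action of \(\pi\) on subsets. If the construction is recursive, I would instead prove the symmetry by induction on \(k\), using invariance of each layer.

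\emph{Step 2 (tilts are invariant).} Since \(\pi\) preserves Hamming weight, the weight \((\gamma_0)^{k-\|x\|_1}(\gamma_1)^{\|x\|_1}\) is \(\pi\)-invariant. Hence \(\wnnz_{\gamma_0,\gamma_1}(w\circ\pi)=\wnnz_{\gamma_0,\gamma_1}(w)\) for every vector \(w\). Combining this with Step 1 gives \(\wnnz(U_{c,\tau_\pi(r)})=\wnnz(U_{c,r})\), and likewise for \(V\), with the respective parameters \((1-\alpha,\alpha)\) and \((1-\beta,\beta)\). Therefore \(\theta_{c,\tau_\pi(r)}=\theta_{c,r}\) and \(\Delta_{c,\tau_\pi(r)}=\Delta_{c,r}\).

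\emph{Step 3 (bijection of counted sets).} The map \(r\mapsto\tau_\pi(r)\) sends the set counted by \(N_{u,v}(x^\pi)\) bijectively onto the set counted by \(N_{u,v}(x)\). Indeed, the condition \((U_{c,r})_{x^\pi}\neq0\) is equivalent to \((U_{c,\tau_\pi(r)})_x\neq0\), and the clamp condition is unchanged by Step 2. Consequently \(N_{u,v}(x^\pi)=N_{u,v}(x)\), so \(\mathsf{A}_i[u,v]\) does not depend on which \(x\) of weight \(i\) is chosen. This holds for every \(u\) (each with its own identifier) and every \(v\), which proves that the transition matrices are well defined.
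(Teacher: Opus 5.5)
Your proposal is correct and follows essentially the same route as the paper. The paper's \(\tau_\pi\) sends the rank-\(1\) term of an \((\texttt{R},a,b)\) step indexed by \(S\) (resp.\ of a \((\texttt{C},a,b)\) step indexed by \(T\)) to the term of the same step indexed by \(\pi S\) (resp.\ \(\pi T\)). The equivariance you anticipate in Step 1 is then immediate, because the defining support conditions (\(|S|=a\), \(T\subseteq[k]\setminus S\), \(|T|\ge b\), and their column analogues) are invariant under relabeling coordinates. Your Steps 2 and 3 coincide with the paper's remaining argument: raw tilts and state offsets are preserved, and \(\tau_\pi\) gives a bijection between the two counted sets.
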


The proof of \cref{lem:transition} appears in \cref{sec:B.3}.

\begin{proposition}\label{prop:degree-formula}
For every \(m\ge1\), every \(h\in[H]\), and every input-gate index \(x\in\{0,1\}^{km}\), write \(x\) as \(m\) consecutive \(k\)-bit blocks \(x=(x^{(1)},\ldots,x^{(m)})\), and put \(i_t:=\|x^{(t)}\|_1\). Then
\[
L_{C_{\mathrm{X}}(km,h)}(x)=\left\|e_h^{\T}\mathsf{A}_{i_1}\cdots \mathsf{A}_{i_m}\right\|_1.
\]
\end{proposition}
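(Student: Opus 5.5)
We argue by induction on \(m\), proving the slightly stronger vector statement: for every \(h\in[H]\) and every \(x\in\{0,1\}^{km}\),
\[
L_{C_{\mathrm{X}}(km,h)}(x)=\sum_{v\in[H]}\bigl(e_h^{\T}\mathsf{A}_{i_1}\cdots\mathsf{A}_{i_m}\bigr)_v .
\]
Since all transition matrices are entrywise nonnegative, the right-hand side is exactly the \(\ell_1\)-norm.

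The first step is to unfold the definition of the circuit. Write \(C=(A,B)\) for a circuit and \((U,V^\T)\otimes(A',B')\) for the Kronecker product of a rank-\(1\) term with a circuit. The middle gates of \(C_{\mathrm{X}}(km,h)\) are the disjoint union, over \(r\in[2^k]\), of pairs \((r,g)\), where \(g\) is a middle gate of \(C_{\mathrm{X}}(k(m-1),h_r)\) and \(h_r:=\clamp_H(h+\Delta_{c_{\boldsymbol{\iota}(h)},r})\). The \(A\)-weight from input \(x=(x^{(1)},x')\) to the middle gate \((r,g)\) is
\[
(U_{c_{\boldsymbol{\iota}(h)},r})_{x^{(1)}}\cdot A'[x',g].
\]
We need the summation in the recursive definition to be a concatenation of middle layers rather than a merging of gates; this is how the computation-tree sum is meant, and it is the convention of \cref{sec:A.3}. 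Over \(\mathbb{Z}\) with entries in \(\{0,\pm1\}\), a product of two entries is nonzero iff both factors are nonzero, so no cancellation can occur. Hence
\[
L_{C_{\mathrm{X}}(km,h)}(x)=\sum_{r:\,(U_{c_{\boldsymbol{\iota}(h)},r})_{x^{(1)}}\neq0}L_{C_{\mathrm{X}}(k(m-1),h_r)}(x').
\]

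The second step groups this sum by the child state \(v=h_r\). By the definition of \(N_{u,v}\) and \cref{lem:transition}, the number of admissible \(r\) with \(h_r=v\) is \(N_{h,v}(x^{(1)})=\mathsf{A}_{i_1}[h,v]\). This gives the recursion
\[
L_{C_{\mathrm{X}}(km,h)}(x)=\sum_v\mathsf{A}_{i_1}[h,v]\,L_{C_{\mathrm{X}}(k(m-1),v)}(x').
\]
The base case is \(m=0\), where \(L_{C_{\mathrm{X}}(0,h)}=1\), which matches \(\|e_h^\T\|_1=1\) (alternatively one may start at \(m=1\) directly). The induction hypothesis turns \(L_{C_{\mathrm{X}}(k(m-1),v)}(x')\) into \(\|e_v^\T\mathsf{A}_{i_2}\cdots\mathsf{A}_{i_m}\|_1=e_v^\T\mathsf{A}_{i_2}\cdots\mathsf{A}_{i_m}\one\). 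Summing over \(v\) gives \(e_h^\T\mathsf{A}_{i_1}\cdots\mathsf{A}_{i_m}\one\), as required.

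The main obstacle is the bookkeeping in the first step. We must check that the Kronecker product of a rank-\(1\) term with a circuit produces one middle gate per pair \((r,g)\), with multiplicative weights, and that the input index splits as first block for the rank-\(1\) term and remaining blocks for the subcircuit, consistent with the ordering in the recursive definition. If the construction instead put the rank-\(1\) factor last, the product would read in reverse block order and we would induct on the last block. We would handle this by appealing to the formal definition of Kronecker products of circuits in \cref{sec:A.1} and \cref{sec:A.3}.
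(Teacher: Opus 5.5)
Your proof is correct and is essentially the paper's argument. The paper unrolls the computation tree and matches valid root-to-leaf paths with the terms of the expanded product $\sum_{u_1,\ldots,u_{m-1}}\mathsf{A}_{i_1}[h,u_1]\cdots\mathsf{A}_{i_m}[u_{m-1},v]$, while you obtain the same count recursively through $L_{C_{\mathrm{X}}(km,h)}(x)=\sum_v\mathsf{A}_{i_1}[h,v]\,L_{C_{\mathrm{X}}(k(m-1),v)}(x')$ and induction on $m$. Your block ordering (first block to the root's rank-$1$ term) agrees with the paper's Kronecker convention, so the hedge in your last paragraph is unnecessary.
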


\begin{proof}[Proof sketch]
Recall that \(C_{\mathrm{X}}(km,h)\) is described by a computation tree with \(m\) layers. Intuitively, the proposition follows from the following loop invariant: the \(v\)-th entry of the row vector \(e_h^{\T}\mathsf{A}_{i_1}\cdots\mathsf{A}_{i_m}\) equals the number of nodes in layer \(m\) with state \(v\) for which the product of the rank-\(1\) terms along the corresponding root-to-node path has a left vector that is nonzero at the input index \(x\). Summing these entries over all final states gives the degree of the input gate \(x\). A detailed proof appears in \cref{sec:B.3}.
\end{proof}

For \(p\in[0,1]\), let \(\Bin(k,p)\) denote the binomial distribution on \(\{0,\ldots,k\}\). Averaging \cref{prop:degree-formula} over \(X\sim\Ber(p)^{km}\) gives
\[
f_{C_{\mathrm{X}}(km,h)}(\Ber(p))=\frac1{km}\,\mathbb{E}_{I_1,\ldots,I_m\sim\Bin(k,p)}\!\left[\log_2\left\|e_h^{\T}\mathsf{A}_{I_1}\cdots \mathsf{A}_{I_m}\right\|_1\right],
\]
since the block Hamming weights \(I_1,\ldots,I_m\) are i.i.d.\ with law \(\Bin(k,p)\). Consequently, the \(m\to\infty\) limiting value of \(k\cdot f_{C_{\mathrm{X}}(km,h)}(\Ber(p))\) is at most the quenched Lyapunov exponent of the random matrix product drawn from \(\mathsf{A}_0,\ldots,\mathsf{A}_k\) with law \(\Bin(k,p)\).

For the four circuit families used in this paper, the transition matrices can be computed directly from the parameter sets; their dimensions and total numbers of nonzero entries are recorded in \cref{tab:transition}.

\begin{table}[H]
\centering
\small
\begin{tabular}{@{}lrr@{}}
\toprule
Family & Matrix dimension \(H\) & Total nonzeros in \(\mathsf{A}_0,\ldots,\mathsf{A}_7\) \\
\midrule
Jessica & \(612\) & \(13452\) \\
Sonetto & \(360\) & \(6726\) \\
Regulus & \(360\) & \(7383\) \\
\(\mathrm{Regulus}^{\T}\) & \(360\) & \(6772\) \\
\bottomrule
\end{tabular}
\caption{Dimensions and sparsity of the transition matrices of the four circuit families.}
\label{tab:transition}
\end{table}

\subsection{Toll-and-Potential Certificates and Lattice Envelopes}\label{sec:3.2}

\begin{definition}[Toll-and-potential certificates]\label{def:certificate}
A \emph{toll-and-potential certificate} for transition matrices \(\mathsf{A}_0,\ldots,\mathsf{A}_k\) is an ordered pair \((\mathfrak{a},\mathfrak{b})\) of arrays \(\mathfrak{a}=\{\mathfrak{a}_{i,j}\}_{0\le i,j\le k}\) and \(\mathfrak{b}=\{\mathfrak{b}_{i,u}\}_{0\le i\le k,\,u\in[H]}\). The pair is called an \emph{\(\eps\)-valid certificate} if, for every \(i=0,\ldots,k\) and every \(u\in[H]\),
\[
\sum_{j=0}^k\sum_{v=0}^{H-1}\mathsf{A}_j[u,v]\,2^{-\mathfrak{a}_{i,j}-\mathfrak{b}_{i,u}+\mathfrak{b}_{j,v}}\le1-\eps.
\]
\end{definition}

For \(p\in[0,1]\), write \(\Bin(k,p)_i:=\binom{k}{i}p^i(1-p)^{k-i}\), and define the \emph{binomial entropy}
\[
E_k(p):=-\sum_{i=0}^k\Bin(k,p)_i\log_2\Bin(k,p)_i,\qquad0\log_2 0:=0.
\]

\begin{definition}[Crude and refined evaluations]\label{def:evaluation}
For an array \(\mathfrak{a}=\{\mathfrak{a}_{i,j}\}_{0\le i,j\le k}\), define its \emph{crude evaluation} at \(p\) by
\[
\Phi_{\mathfrak{a}}(p):=\sum_{i=0}^k\sum_{j=0}^k\Bin(k,p)_i\Bin(k,p)_j\,\mathfrak{a}_{i,j},
\]
and its \emph{refined evaluation} at \(p\) by \(\widetilde\Phi_{\mathfrak{a}}(p):=\frac1k(\Phi_{\mathfrak{a}}(p)-E_k(p))\).

For \(0<\eta\le0.5\), define the \emph{\(\eta\)-clipped binomial entropy}
\[
\underline{E}_{k,\eta}(p):=\begin{cases}
E_k(\eta)\cdot\frac{p}{\eta},&0\le p\le\eta,\\
E_k(p),&\eta\le p\le1-\eta,\\
E_k(\eta)\cdot\frac{1-p}{\eta},&1-\eta\le p\le1,
\end{cases}
\]
and define the \emph{\(\eta\)-clipped refined evaluation} of \(\mathfrak{a}\) by \(\widetilde{\widetilde\Phi}_{\mathfrak{a},\eta}(p):=\frac1k(\Phi_{\mathfrak{a}}(p)-\underline{E}_{k,\eta}(p))\).
\end{definition}

Since \(p\mapsto E_k(p)\) is concave on \([0,1]\) \cite[\S3, Theorem 4]{SO81}, and since \(E_k(0)=E_k(1)=0\) and \(E_k(1-\eta)=E_k(\eta)\), concavity gives \(\underline{E}_{k,\eta}(p)\le E_k(p)\) for all \(p\in[0,1]\); consequently, \(\widetilde{\widetilde\Phi}_{\mathfrak{a},\eta}(p)\ge\widetilde\Phi_{\mathfrak{a}}(p)\).

\begin{theorem}[Certificate theorem]\label{thm:certificate}
Suppose that, for some \(0<\eps<1\), the pair \((\mathfrak{a},\mathfrak{b})\) is an \(\eps\)-valid toll-and-potential certificate for the transition matrices \(\mathsf{A}_0,\ldots,\mathsf{A}_k\) of the circuit family generated by \(\mathrm{X}\). Set
\[
\Gamma(\mathfrak{b}):=\max_{i,u}\mathfrak{b}_{i,u}-\min_{i,u}\mathfrak{b}_{i,u}.
\]
Then, for every \(m\ge1\), every \(h\in[H]\), and every \(p\in[0,1]\),
\[
f_{C_{\mathrm{X}}(km,h)}(\Ber(p))\le\widetilde\Phi_{\mathfrak{a}}(p)+\frac{\Gamma(\mathfrak{b})+m\log_2(1-\eps)}{km}.
\]
In particular, if \(m\ge\left\lceil\Gamma(\mathfrak{b})/(-\log_2(1-\eps))\right\rceil\), then \(f_{C_{\mathrm{X}}(km,h)}(\Ber(p))\le\widetilde\Phi_{\mathfrak{a}}(p)\le\widetilde{\widetilde\Phi}_{\mathfrak{a},\eta}(p)\) uniformly over \(h\) and \(p\), for every \(0<\eta\le0.5\).
\end{theorem}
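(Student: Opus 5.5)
The plan is a Gibbs-variational (Jensen) argument applied to the path expansion of \cref{prop:degree-formula}, with the potentials \(\mathfrak{b}\) telescoping along each path and the tolls \(\mathfrak{a}\) charged against the binomial law. Fix \(m\), \(h\), \(p\), and write \(\pi:=\Bin(k,p)\).

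\textbf{Step 1 (path expansion).} By \cref{prop:degree-formula}, for a sequence \(\mathbf{i}=(i_1,\ldots,i_m)\) of block weights,
\[
L(\mathbf{i}):=\bigl\|e_h^{\T}\mathsf{A}_{i_1}\cdots\mathsf{A}_{i_m}\bigr\|_1=\sum_{v_1,\ldots,v_m}\prod_{t=1}^m\mathsf{A}_{i_t}[v_{t-1},v_t],\qquad v_0:=h .
\]
Moreover \(L(\mathbf{i})\ge1\), since every input gate of a circuit computing \(R_{km}\) has positive degree. Hence \(f_{C_{\mathrm{X}}(km,h)}(\Ber(p))=\frac1{km}\mathbb{E}_{\mathbf{i}\sim\pi^m}[\log_2L(\mathbf{i})]\).

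\textbf{Step 2 (telescoping weighted sum).} Introduce an auxiliary index \(i_0\in\{0,\ldots,k\}\) and define
\[
W(i_0):=\sum_{i_1,\ldots,i_m}\ \sum_{v_1,\ldots,v_m}\prod_{t=1}^m\mathsf{A}_{i_t}[v_{t-1},v_t]\,2^{-\mathfrak{a}_{i_{t-1},i_t}-\mathfrak{b}_{i_{t-1},v_{t-1}}+\mathfrak{b}_{i_t,v_t}} .
\]
Sum out \((i_m,v_m)\) first. The \(\eps\)-validity condition, applied with \((i,u)=(i_{m-1},v_{m-1})\), bounds this innermost sum by \(1-\eps\). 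Inducting backward over \(t\) gives \(W(i_0)\le(1-\eps)^m\).

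The \(\mathfrak{b}\)-exponents telescope to \(\mathfrak{b}_{i_m,v_m}-\mathfrak{b}_{i_0,h}\ge-\Gamma(\mathfrak{b})\). Therefore
\[
\sum_{\mathbf{i}}L(\mathbf{i})\,2^{-\sum_{t=1}^m\mathfrak{a}_{i_{t-1},i_t}}\le2^{\Gamma(\mathfrak{b})}(1-\eps)^m\qquad\text{for every }i_0 .
\]

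\textbf{Step 3 (Jensen).} For fixed \(i_0\), Jensen's inequality (concavity of \(\log_2\)) over \(\mathbf{i}\sim\pi^m\) gives
\[
\mathbb{E}_{\mathbf{i}}\!\left[\log_2\frac{L(\mathbf{i})\,2^{-\sum_t\mathfrak{a}_{i_{t-1},i_t}}}{\pi^m(\mathbf{i})}\right]\le\log_2\sum_{\mathbf{i}\in\operatorname{supp}\pi^m}L(\mathbf{i})\,2^{-\sum_t\mathfrak{a}_{i_{t-1},i_t}}\le\Gamma(\mathfrak{b})+m\log_2(1-\eps).
\]
Restricting to the support of \(\pi^m\) only decreases the sum, so the second inequality follows from Step 2.

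Rearranging, and using \(-\mathbb{E}\log_2\pi^m(\mathbf{i})=mE_k(p)\), yields
\[
\mathbb{E}\log_2L\le\Gamma(\mathfrak{b})+m\log_2(1-\eps)-mE_k(p)+\mathbb{E}_{\mathbf{i}}\sum_t\mathfrak{a}_{i_{t-1},i_t}.
\]

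\textbf{Step 4 (randomizing \(i_0\)).} The main subtlety is that the \(t=1\) toll involves the deterministic \(i_0\) rather than a \(\pi\)-distributed index. Since \(L\) does not depend on \(i_0\), I would average the inequality of Step 3 over an independent \(i_0\sim\pi\). Then every consecutive pair \((i_{t-1},i_t)\) has law \(\pi\times\pi\), so the expected toll sum equals \(m\Phi_{\mathfrak{a}}(p)\). Dividing by \(km\) gives
\[
f_{C_{\mathrm{X}}(km,h)}(\Ber(p))\le\widetilde\Phi_{\mathfrak{a}}(p)+\frac{\Gamma(\mathfrak{b})+m\log_2(1-\eps)}{km}.
\]

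\textbf{Step 5 (the ``in particular'' clause).} If \(m\ge\Gamma(\mathfrak{b})/(-\log_2(1-\eps))\), then \(m\log_2(1-\eps)\le-\Gamma(\mathfrak{b})\), so the correction term is nonpositive. The final comparison \(\widetilde\Phi_{\mathfrak{a}}(p)\le\widetilde{\widetilde\Phi}_{\mathfrak{a},\eta}(p)\) is the remark following \cref{def:evaluation}.
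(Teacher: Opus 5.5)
Your proposal is correct and follows the paper's proof essentially step for step: iterate the validity condition backward with telescoping potentials, lower-bound the boundary potential factor by \(2^{-\Gamma(\mathfrak{b})}\), restrict to the support, and apply Jensen against the product binomial law. The only difference is how the auxiliary index \(i_0\) is handled: the paper picks \(i_0\) minimizing \(\nu_i(p)=\sum_j \Bin(k,p)_j\,\mathfrak{a}_{i,j}\), so that \(\nu_{i_0}(p)\le\Phi_{\mathfrak{a}}(p)\), whereas you average the inequality (valid for every \(i_0\)) over \(i_0\sim\Bin(k,p)\), which yields the same \(m\Phi_{\mathfrak{a}}(p)\) bound exactly.
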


The proof of \cref{thm:certificate} appears in \cref{sec:B.3}.

For a fixed \(p\), a smaller value of \(\Phi_{\mathfrak{a}}(p)\) yields a better upper bound. In fact, minimizing \(\Phi_{\mathfrak{a}}(p)\) subject to \((\mathfrak{a},\mathfrak{b})\) being a \(0\)-valid certificate is a convex minimization problem\footnote{The objective is linear in \(\mathfrak{a}\), and the logarithmic form of each validity constraint is a log-sum-exp inequality in affine functions of \((\mathfrak{a},\mathfrak{b})\).}. In the i.i.d.\ case, this convex program is the Lagrangian dual of the convex program of Gharavi and Anantharam \cite{GA05}\footnote{For Markovian products of nonnegative matrices, \cite{GA05} defined an auxiliary quantity \(\widehat\lambda\), proved \(\widehat\lambda\ge\lambda\), and showed that \(\widehat\lambda\) can be expressed as the optimum of a concave maximization problem over a convex polytope.}. We emphasize, however, that \cref{thm:certificate} is proved directly in \cref{sec:B.3}, and the correctness of \cite{GA05} is not needed for the correctness of this paper.

\begin{theorem}[Lattice envelopes]\label{thm:envelope}
Fix a parameter set \(\mathrm{X}\) and a finite lattice \(0=p_0<p_1<\cdots<p_L=1\). Fix \(\eta\) and \(\eps\) with \(0<\eta\le0.5\) and \(0<\eps<1\). Suppose that, for each interior lattice point \(p_r\), \(r=1,\ldots,L-1\), we are given an \(\eps\)-valid certificate \((\mathfrak{a}^{(r)},\mathfrak{b}^{(r)})\).

For notational convenience, introduce formal boundary symbols \(\mathfrak{a}^{(0)}\) and \(\mathfrak{a}^{(L)}\) by declaring
\[
\widetilde{\widetilde\Phi}_{\mathfrak{a}^{(0)},\eta}(p)=\widetilde{\widetilde\Phi}_{\mathfrak{a}^{(L)},\eta}(p):=+\infty\qquad\text{for every }p\in[0,1].
\]
Define \(\ell_{\mathrm{X}}:[0,1]\to\mathbb{R}_{\ge0}\) by
\[
\ell_{\mathrm{X}}(p):=\min\left\{\widetilde{\widetilde\Phi}_{\mathfrak{a}^{(r)},\eta}(p),\,\widetilde{\widetilde\Phi}_{\mathfrak{a}^{(r+1)},\eta}(p)\right\},\qquad p\in[p_r,p_{r+1}),\quad r=0,\ldots,L-1,
\]
where the last interval is understood to include its right endpoint \(p_L=1\). Let
\[
m_{\mathrm{X}}:=\max_{1\le r\le L-1}\left\lceil\frac{\Gamma(\mathfrak{b}^{(r)})}{-\log_2(1-\eps)}\right\rceil.
\]
Then, for every \(m\ge m_{\mathrm{X}}\), the inequality \(f_{C_{\mathrm{X}}(km,h)}(\Ber(p))\le\ell_{\mathrm{X}}(p)\) holds uniformly over \(h\) and \(p\).

Moreover, if at every interior lattice point \(p_r\), \(r=1,\ldots,L-1\), one has
\[
\widetilde{\widetilde\Phi}_{\mathfrak{a}^{(r)},\eta}(p_r)\le\min\left\{\widetilde{\widetilde\Phi}_{\mathfrak{a}^{(r-1)},\eta}(p_r),\,\widetilde{\widetilde\Phi}_{\mathfrak{a}^{(r+1)},\eta}(p_r)\right\},
\]
then \(\ell_{\mathrm{X}}\) is continuous on \([0,1]\). We refer to these inequalities as the \emph{adjacent-certificate conditions}.
\end{theorem}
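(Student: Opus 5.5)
The argument splits into two parts: the pointwise upper bound, which reduces to the certificate theorem, and the continuity statement, which reduces to the adjacent-certificate conditions.

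\emph{Upper bound.} Fix \(m\ge m_{\mathrm{X}}\), \(h\in[H]\), and \(p\in[0,1]\). Let \(r\) be the index with \(p\in[p_r,p_{r+1})\), or \(r=L-1\) when \(p=1\).

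Suppose first that \(L\ge2\). Since \(p_0=0\) and \(p_L=1\), at least one of \(r\), \(r+1\) lies in \(\{1,\dots,L-1\}\). Hence \(\ell_{\mathrm{X}}(p)\) equals \(\widetilde{\widetilde\Phi}_{\mathfrak{a}^{(s)},\eta}(p)\) for some genuine interior certificate index \(s\in\{r,r+1\}\cap\{1,\dots,L-1\}\). The formal symbols contribute \(+\infty\), so they never attain the minimum. In particular \(\ell_{\mathrm{X}}(p)\) is finite.

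For this \(s\), the definition of \(m_{\mathrm{X}}\) gives \(m\ge\lceil\Gamma(\mathfrak{b}^{(s)})/(-\log_2(1-\eps))\rceil\). The certificate theorem (\cref{thm:certificate}) applied to \((\mathfrak{a}^{(s)},\mathfrak{b}^{(s)})\) then yields
\[
f_{C_{\mathrm{X}}(km,h)}(\Ber(p))\le\widetilde{\widetilde\Phi}_{\mathfrak{a}^{(s)},\eta}(p).
\]
The same bound holds for every interior index, so \(f\) is at most the minimum over both candidates, which is \(\ell_{\mathrm{X}}(p)\).

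The degenerate case \(L=1\) has no interior points and gives \(\ell\equiv+\infty\). Either the statement is vacuous there, or one implicitly assumes \(L\ge2\); I would remark on this.

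Nonnegativity of \(\ell_{\mathrm{X}}\) follows because \(f\ge0\), since every degree is at least \(1\) as \(R_1\) has no zero row. More precisely, \(\ell\) dominates a nonnegative quantity. If literal codomain membership matters, it holds because \(\ell\ge f\ge0\).

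\emph{Continuity.} Each \(\widetilde{\widetilde\Phi}_{\mathfrak{a},\eta}\) is continuous on \([0,1]\). Indeed, \(\Phi_{\mathfrak a}\) is a polynomial in \(p\), and \(\underline{E}_{k,\eta}\) is continuous: it is piecewise defined with matching values at \(\eta\) and \(1-\eta\), using that \(E_k\) is continuous and \(E_k(1-\eta)=E_k(\eta)\). On each open interval \((p_r,p_{r+1})\), \(\ell_{\mathrm{X}}\) is a minimum of continuous functions, some possibly \(+\infty\). By the argument above, at least one of these is finite when \(L\ge2\), so \(\ell\) is continuous there.

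The main step is continuity at an interior lattice point \(p_r\). On the left, \(\ell\) is given by \(\min\{\widetilde{\widetilde\Phi}_{\mathfrak a^{(r-1)}},\widetilde{\widetilde\Phi}_{\mathfrak a^{(r)}}\}\), so its left limit at \(p_r\) equals this minimum evaluated at \(p_r\). On the right, at \(p_r\) itself, the value is \(\min\{\widetilde{\widetilde\Phi}_{\mathfrak a^{(r)}},\widetilde{\widetilde\Phi}_{\mathfrak a^{(r+1)}}\}(p_r)\). By the adjacent-certificate condition, both minima equal \(\widetilde{\widetilde\Phi}_{\mathfrak a^{(r)},\eta}(p_r)\), so the left limit, right limit, and value agree.

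At the endpoints \(0\) and \(1\), continuity is one-sided and follows from the interval argument, since on \([p_0,p_1)\) and \([p_{L-1},p_L]\) the function \(\ell\) equals the single finite function \(\widetilde{\widetilde\Phi}_{\mathfrak a^{(1)}}\), respectively \(\widetilde{\widetilde\Phi}_{\mathfrak a^{(L-1)}}\).

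I expect no serious obstacle. The only care points are the handling of the \(+\infty\) boundary symbols and the verification that \(\underline{E}_{k,\eta}\) is continuous.
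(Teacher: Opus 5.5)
Your proposal is correct and follows essentially the same route as the paper. You apply the certificate theorem (\cref{thm:certificate}) to every interior certificate, which is legitimate since \(m\ge m_{\mathrm{X}}\), and take the minimum over the two adjacent indices, where the \(+\infty\) boundary symbols never attain it. You then get continuity at each interior lattice point because the adjacent-certificate condition forces both one-sided minima to equal \(\widetilde{\widetilde\Phi}_{\mathfrak{a}^{(r)},\eta}(p_r)\). Your side remarks on the degenerate case \(L=1\), on nonnegativity of \(\ell_{\mathrm{X}}\), and on the continuity of \(\underline{E}_{k,\eta}\) are harmless additions that the paper leaves implicit.
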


The proof of \cref{thm:envelope} appears in \cref{sec:B.3}.

Finally, we define the four lattice envelopes \(\ell_{\mathrm{Jessica}}\), \(\ell_{\mathrm{Sonetto}}\), \(\ell_{\mathrm{Regulus}}\), and \(\ell_{\mathrm{Regulus}^{\T}}\) by applying \cref{thm:envelope} with \(\eta=0.01\) and \(\eps=10^{-8}\), using the family-specific lattices listed below and the certificates stored in the verification repository at \url{https://osf.io/yz4fr/overview?view_only=2f77f3a2c69e44a2bb57738e01223149}.

In the descriptions below, \(a\xrightarrow{\;\delta\;}b\) means that the lattice contains all of the points \(a,a+\delta,a+2\delta,\ldots,b\).

For Jessica, the lattice, consisting of \(L_\mathrm{Jessica}=290\) subintervals, is given by
\[
\begin{aligned}
0&\xrightarrow{\;0.1\;}0.3\xrightarrow{\;0.01\;}0.34\xrightarrow{\;0.001\;}0.36\xrightarrow{\;0.0001\;}0.3708\\
&\xrightarrow{\;0.00003\;}0.372\xrightarrow{\;0.0001\;}0.38\xrightarrow{\;0.001\;}0.4\xrightarrow{\;0.01\;}0.5\xrightarrow{\;0.1\;}1.
\end{aligned}
\]
For Sonetto, the lattice, consisting of \(L_\mathrm{Sonetto}=370\) subintervals, is given by
\[
0\xrightarrow{\;0.01\;}0.2\xrightarrow{\;0.001\;}0.5\xrightarrow{\;0.01\;}1.
\]
For Regulus, the lattice, consisting of \(L_\mathrm{Regulus}=208\) subintervals, is given by
\[
0\xrightarrow{\;0.01\;}0.2\xrightarrow{\;0.001\;}0.32\xrightarrow{\;0.01\;}1.
\]
For Regulus\(^\T\), the lattice, consisting of \(L_{\mathrm{Regulus}^{\T}}=208\) subintervals, is given by
\[
0\xrightarrow{\;0.01\;}0.32\xrightarrow{\;0.001\;}0.44\xrightarrow{\;0.01\;}1.
\]
In practice, for each circuit family, we modeled a convex program in CVXPY \cite{DB16, AVDB18}, compiled the model once, and used MOSEK's exponential-cone solver \cite{DA22} to solve it at each interior lattice point. Since the certificates themselves are stored as \texttt{.txt} files and verified directly, however, the method used to find them is not relevant to the correctness of this paper.

\subsection{Machine Verification of Our Upper Bounds}\label{sec:3.3}

This section explains how the two claims \eqref{eq:star} and \eqref{eq:starstar} used in \cref{sec:2.3.3} are verified. The verification uses three deterministic programs: the information inspector \(\mathcal{I}\), the size-inequality verifier \(\mathcal{V}_{\mathrm{size}}\), and the degree-inequality verifier \(\mathcal{V}_{\mathrm{degree}}\). We prove on paper that if these programs accept, then the two inequalities needed for \cref{cor:sigma,cor:delta} hold; we then run the programs and observe that they do accept.

\subsubsection{Lipschitz Continuity of the Lattice Envelopes}\label{sec:3.3.1}

Let \(\mathrm{X}\) be one of the four circuit families, and let \(\{(\mathfrak{a}^{(r)},\mathfrak{b}^{(r)})\}_{r=1}^{L-1}\) be the certificates defining \(\ell_{\mathrm{X}}\). Since \(k=7\) for all four parameter sets, we have \(\widetilde{\widetilde\Phi}_{\mathfrak{a}^{(r)},0.01}(p)=\frac17(\Phi_{\mathfrak{a}^{(r)}}(p)-\underline{E}_{7,0.01}(p))\), where
\[
\Phi_{\mathfrak{a}^{(r)}}(p)=\sum_{i=0}^7\sum_{j=0}^7\mathfrak{a}^{(r)}_{i,j}\binom7i\binom7j p^{i+j}(1-p)^{14-i-j}.
\]
Equivalently, in Bernstein form,
\[
\Phi_{\mathfrak{a}^{(r)}}(p)=\sum_{t=0}^{14}\overline{\mathfrak{a}}^{(r)}_t\binom{14}{t}p^t(1-p)^{14-t},\qquad
\overline{\mathfrak{a}}^{(r)}_t:=\sum_{i=\max(0,t-7)}^{\min(7,t)}\frac{\binom7i\binom7{t-i}}{\binom{14}{t}}\,\mathfrak{a}^{(r)}_{i,t-i}.
\]

\begin{proposition}[Lipschitz continuity]\label{prop:lipschitz}
For a fixed certificate, the function \(\widetilde{\widetilde\Phi}_{\mathfrak{a}^{(r)},0.01}\) is Lipschitz continuous on \([0,1]\), with Lipschitz constant
\[
\Lip^{(r)}:=\frac17\left(14\max_{0\le t<14}\bigl\lvert\overline{\mathfrak{a}}^{(r)}_{t+1}-\overline{\mathfrak{a}}^{(r)}_t\bigr\rvert+\frac{E_7(0.01)}{0.01}\right).
\]
Moreover, if the adjacent-certificate condition of \cref{thm:envelope} holds at every interior lattice point, then \(\ell_{\mathrm{X}}\) is Lipschitz continuous on \([0,1]\) with Lipschitz constant \(\Lip_{\mathrm{X}}:=\max_r\Lip^{(r)}\).
\end{proposition}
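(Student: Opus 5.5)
The plan is to split \(\widetilde{\widetilde\Phi}_{\mathfrak{a}^{(r)},0.01}=\frac17\bigl(\Phi_{\mathfrak{a}^{(r)}}-\underline{E}_{7,0.01}\bigr)\), bound the Lipschitz constant of each piece separately, and add them using \(\Lip(f-g)\le\Lip(f)+\Lip(g)\).

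\textbf{The polynomial part.} I would start from the Bernstein form of \(\Phi_{\mathfrak{a}^{(r)}}\). First I would check the stated coefficients. The product \(\binom7i\binom7jp^{i+j}(1-p)^{14-i-j}\) depends only on \(t=i+j\). Grouping the terms with the same \(t\) and dividing by \(\binom{14}{t}\) gives exactly \(\overline{\mathfrak{a}}^{(r)}_t\).

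Next I would use the standard derivative formula for a degree-\(n\) Bernstein polynomial with \(n=14\):
\[
\frac{d}{dp}\sum_{t=0}^{14}\overline{\mathfrak{a}}_t\binom{14}{t}p^t(1-p)^{14-t}=14\sum_{t=0}^{13}\bigl(\overline{\mathfrak{a}}_{t+1}-\overline{\mathfrak{a}}_t\bigr)\binom{13}{t}p^t(1-p)^{13-t}.
\]
The basis polynomials \(\binom{13}{t}p^t(1-p)^{13-t}\) are nonnegative and sum to \(1\) on \([0,1]\). So \(|\Phi'|\) is at most \(14\max_t|\overline{\mathfrak{a}}_{t+1}-\overline{\mathfrak{a}}_t|\), and the mean value theorem turns this into the Lipschitz bound for \(\Phi_{\mathfrak{a}^{(r)}}\).

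\textbf{The clipped entropy part.} The function \(\underline{E}_{7,\eta}\) is concave on \([0,1]\). It equals \(E_7\) on \([\eta,1-\eta]\), where \(E_7\) is concave by the cited result, and it is linear chords on the two end pieces. The pieces agree at \(\eta\) and \(1-\eta\), so the function is continuous.

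Concavity of the glued function needs one more fact: the slope of the chord \(E_7(\eta)/\eta\) must be at least the right derivative of \(E_7\) at \(\eta\). This holds because \(E_7\) is concave and \(E_7(0)=0\). The right end piece is handled symmetrically.

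A concave function's one-sided derivatives are monotone, so the derivative magnitude is maximized at the endpoints \(0\) and \(1\). There the slopes are \(\pm E_7(\eta)/\eta\). Hence \(\Lip(\underline{E}_{7,0.01})\le E_7(0.01)/0.01\).

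Adding the two bounds and dividing by \(7\) gives \(\Lip^{(r)}\).

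\textbf{The envelope \(\ell_{\mathrm{X}}\).} On each interval \([p_r,p_{r+1})\), \(\ell_{\mathrm{X}}\) is the minimum of at most two functions, each \(\Lip_{\mathrm{X}}\)-Lipschitz. The \(+\infty\) boundary symbols simply drop out, so on the first and last intervals there is only one function. A minimum of Lipschitz functions is Lipschitz with the same constant, and each piece extends continuously to its closed interval.

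By \cref{thm:envelope}, the adjacent-certificate conditions make \(\ell_{\mathrm{X}}\) continuous. A continuous function that is \(\Lip_{\mathrm{X}}\)-Lipschitz on each closed piece of a finite partition is \(\Lip_{\mathrm{X}}\)-Lipschitz globally. To see this, take \(p<p'\), split \([p,p']\) at the lattice points between them, and sum the increments using the triangle inequality.

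The main point needing care is the concavity of the clipped entropy at the gluing points; the other steps are routine.
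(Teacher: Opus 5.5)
Your proof is correct and follows the paper's argument. You use the same Bernstein-derivative bound for \(\Phi_{\mathfrak{a}^{(r)}}\), the same slope bound \(E_7(0.01)/0.01\) for the clipped entropy, and the same piecewise-minimum-plus-continuity splitting for \(\ell_{\mathrm{X}}\). The only cosmetic difference is in the entropy term: you prove the glued function \(\underline{E}_{7,\eta}\) is concave and read the bound off its endpoint slopes, whereas the paper bounds \(E_7'\) directly on \([\eta,1-\eta]\) using concavity, \(E_7(0)=0\), and symmetry.
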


The proof of \cref{prop:lipschitz} appears in \cref{sec:B.3}.

For each \(\mathrm{X}\in\{\mathrm{Jessica},\mathrm{Sonetto},\mathrm{Regulus},\mathrm{Regulus}^{\T}\}\), the information inspector \(\mathcal{I}\) performs the following checks. First, for every \(r\), \(i\), and \(u\), it computes \(S^{(r)}_{i,u}:=\sum_{j,v}\mathsf{A}_j[u,v]\,2^{-\mathfrak{a}^{(r)}_{i,j}-\mathfrak{b}^{(r)}_{i,u}+\mathfrak{b}^{(r)}_{j,v}}\). Second, for every \(r\) and \(t\), it computes the Bernstein coefficient \(\overline{\mathfrak{a}}^{(r)}_t\). Third, at every interior lattice point, it checks the adjacent-certificate condition of \cref{thm:envelope}. If any check fails, \(\mathcal{I}\) rejects immediately.

Running \(\mathcal{I}\) yields the following global information across all four circuit families:
\[
\begin{aligned}
\max_{r,i,u}S^{(r)}_{i,u} &\qquad& 0.9999999723\\
\max_{r,i,u}\mathfrak{b}^{(r)}_{i,u} && 94.3836\\
\min_{r,i,u}\mathfrak{b}^{(r)}_{i,u} && -89.7364\\
\max_{r,t}\bigl|\overline{\mathfrak{a}}^{(r)}_{t+1}-\overline{\mathfrak{a}}^{(r)}_t\bigr| && 3.354664\\
E_7(0.01) && 0.371591\\[4pt]
\text{All checks pass?} && \text{True}
\end{aligned}
\]
All numerical values above are truncated to the shown precision, that is, rounded toward zero; in particular, each true value is smaller than the displayed value increased by one unit in its last displayed digit.

\begin{proposition}\label{prop:envelope-props}
The following two statements hold.
\begin{enumerate}
\item For each \(\mathrm{X}\), the inequality \(f_{C_{\mathrm{X}}(7m,h)}(\Ber(p))\le\ell_{\mathrm{X}}(p)\) holds for all \(m\ge10^{11}\), all \(h\in[H]\), and all \(p\in[0,1]\).
\item For each \(\mathrm{X}\), the function \(\ell_{\mathrm{X}}\) is \(13\)-Lipschitz on \([0,1]\).
\end{enumerate}
\end{proposition}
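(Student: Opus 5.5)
Both parts follow by combining \cref{thm:envelope} and \cref{prop:lipschitz} with the numbers reported by the information inspector \(\mathcal{I}\). The only real content is checking that those truncated numbers give enough slack.

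For part 1, we apply \cref{thm:envelope} with \(\eta=0.01\) and \(\eps=10^{-8}\). Two things must be checked. First, every stored certificate must be \(\eps\)-valid. The inspector reports \(\max_{r,i,u}S^{(r)}_{i,u}<0.9999999724\), and this is at most \(1-10^{-8}=0.99999999\). Second, \(m_*=10^{11}\) must be at least \(m_{\mathrm{X}}\). For each certificate,
\[
\Gamma(\mathfrak{b}^{(r)})\le \max_{r,i,u}\mathfrak{b}^{(r)}_{i,u}-\min_{r,i,u}\mathfrak{b}^{(r)}_{i,u}<94.3837+89.7365<184.2 .
\]
Since \(-\log_2(1-\eps)\ge \eps/\ln 2>1.44\cdot10^{-8}\), each ratio \(\Gamma(\mathfrak{b}^{(r)})/(-\log_2(1-\eps))\) is below \(184.2/(1.44\cdot 10^{-8})<1.3\cdot10^{10}\). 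Hence \(m_{\mathrm{X}}\le 10^{11}\). \Cref{thm:envelope} then gives \(f_{C_{\mathrm{X}}(7m,h)}(\Ber(p))\le\ell_{\mathrm{X}}(p)\) for all \(m\ge 10^{11}\), all \(h\in[H]\), and all \(p\in[0,1]\). Here \(k=7\), so the circuit \(C_{\mathrm{X}}(km,h)\) is exactly \(C_{\mathrm{X}}(7m,h)\).

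For part 2, the inspector has checked the adjacent-certificate condition at every interior lattice point, since it accepted. So \cref{prop:lipschitz} applies, and \(\ell_{\mathrm{X}}\) is \(\Lip_{\mathrm{X}}\)-Lipschitz with \(\Lip_{\mathrm{X}}=\max_r\Lip^{(r)}\). Using the inspector's bounds \(\max_{r,t}|\overline{\mathfrak{a}}^{(r)}_{t+1}-\overline{\mathfrak{a}}^{(r)}_t|<3.354665\) and \(E_7(0.01)<0.371592\), we get
\[
\Lip^{(r)}<\tfrac17\bigl(14\cdot3.354665+37.1592\bigr)=\tfrac17(46.96531+37.1592)<\tfrac{84.125}{7}<12.02\le 13 .
\]

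Part 2 also needs \(\ell_{\mathrm{X}}\) to be finite, and this holds because of how the boundary symbols are set up. On the first subinterval \([p_0,p_1)\) the minimum in the definition of \(\ell_{\mathrm{X}}\) picks \(\mathfrak{a}^{(1)}\). On the last subinterval \([p_{L-1},p_L]\) it picks \(\mathfrak{a}^{(L-1)}\). So \(\ell_{\mathrm{X}}\) is finite everywhere, and piecewise it is one of finitely many Lipschitz functions.

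The main thing to be careful about is the direction of the truncation. Every displayed value is rounded toward zero, so each true value is below the displayed value plus one unit in its last digit. For the negative quantity \(\min\mathfrak{b}\), the bound \(\min\mathfrak{b}>-89.7365\) follows in the same way. All the margins computed above are wide, so nothing else is delicate.
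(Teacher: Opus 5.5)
Your proof is correct and follows the paper's argument essentially verbatim. You certify \(10^{-8}\)-validity and \(m_{\mathrm{X}}<10^{11}\) from the inspector's truncated outputs and invoke \cref{thm:envelope}, then bound \(\Lip^{(r)}<13\) via \cref{prop:lipschitz} using the inspector's adjacent-certificate check. The only differences are slightly tighter numerical margins (e.g.\ \(\Gamma<184.2\) and \(-\log_2(1-\eps)\ge\eps/\ln 2\), where the paper uses \(\Gamma<200\) and \(-\log_2(1-\eps)>\eps\)) and your harmless extra remark on the finiteness of \(\ell_{\mathrm{X}}\).
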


The proof of \cref{prop:envelope-props} appears in \cref{sec:B.3}.

\subsubsection{Machine Verification of Our Upper Bound on \texorpdfstring{\(\sigma(R_1)\)}{sigma(R1)}}\label{sec:3.3.2}

The size-inequality verifier \(\mathcal{V}_{\mathrm{size}}\) is a deterministic program that establishes, if it accepts, that
\[
\max_{p\in[0,1]}\{\ell_{\mathrm{Jessica}}(p)+\mathrm{h}(p)\}\le1.2448439.
\]
This is stronger than the size inequality \eqref{eq:star} of \cref{sec:2.3.3}.

The program uses the Jessica lattice \(0=p_0<p_1<\cdots<p_L=1\). It partitions each lattice interval \([p_i,p_{i+1}]\) into \(1000\) equal subintervals, with endpoints \(p_{i,j}:=p_i+(p_{i+1}-p_i)\cdot\frac{j}{1000}\) for \(0\le j\le1000\). For each subinterval \([a,b]\), it computes
\[
M_{[a,b]}:=\ell_{\mathrm{Jessica}}(b)+13(b-a)+\mathrm{h}\!\left(\min\{\max\{1/2,a\},b\}\right).
\]
The program accepts if \(M_{[a,b]}\le1.2448439\) for every such subinterval, and rejects otherwise.

\begin{theorem}[Soundness of the size-inequality verifier]\label{thm:size-verifier}
If \(\mathcal{V}_{\mathrm{size}}\) accepts, then the size inequality \eqref{eq:star} of \cref{sec:2.3.3} holds.
\end{theorem}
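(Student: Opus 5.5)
The plan is to show that every \(p\in[0,1]\) lies in some checked subinterval \([a,b]\), and that on that subinterval \(\ell_{\mathrm{Jessica}}(p)+\mathrm{h}(p)\le M_{[a,b]}\). Acceptance of \(\mathcal{V}_{\mathrm{size}}\) then gives \(\max_p\{\ell_{\mathrm{Jessica}}(p)+\mathrm{h}(p)\}\le 1.2448439<1.244844\), which is \eqref{eq:star}. The subintervals \([p_{i,j},p_{i,j+1}]\), taken over all lattice intervals \([p_i,p_{i+1}]\) of the Jessica lattice and all \(0\le j<1000\), cover \([0,1]\). So it suffices to fix one subinterval \([a,b]\) and bound the two summands separately there.

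For the envelope term I would invoke \cref{prop:envelope-props}(2): \(\ell_{\mathrm{Jessica}}\) is \(13\)-Lipschitz on \([0,1]\). Hence for \(p\in[a,b]\) we have \(\ell_{\mathrm{Jessica}}(p)\le\ell_{\mathrm{Jessica}}(b)+13(b-p)\le\ell_{\mathrm{Jessica}}(b)+13(b-a)\).

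For the entropy term I would use that \(\mathrm{h}\) is concave on \([0,1]\) with its unique maximizer at \(1/2\), so it is nondecreasing on \([0,1/2]\) and nonincreasing on \([1/2,1]\). The point \(p^\star:=\min\{\max\{1/2,a\},b\}\) is exactly the projection of \(1/2\) onto \([a,b]\), which gives three cases. If \(b\le1/2\), then \(p^\star=b\) and \(\mathrm{h}\) is increasing on \([a,b]\). If \(a\ge1/2\), then \(p^\star=a\) and \(\mathrm{h}\) is decreasing there. Otherwise \(p^\star=1/2\in[a,b]\) is the global maximizer. In all three cases \(\max_{p\in[a,b]}\mathrm{h}(p)=\mathrm{h}(p^\star)\). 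Adding the two bounds gives \(\ell_{\mathrm{Jessica}}(p)+\mathrm{h}(p)\le M_{[a,b]}\).

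The main obstacle is the gap between real arithmetic and what the program actually computes. The verifier evaluates \(\ell_{\mathrm{Jessica}}(b)\), the lattice endpoints \(p_{i,j}\), and \(\mathrm{h}\) in floating point, so the comparison \(M_{[a,b]}\le1.2448439\) holds only up to rounding. I would handle this with the slack between \(1.2448439\) and \(1.244844\), which is \(10^{-7}\). Argue that the accumulated error in evaluating the degree-\(14\) Bernstein polynomials (with coefficients of size \(O(100)\)), the clipped entropy, and \(\mathrm{h}\) in double precision is many orders of magnitude below \(10^{-7}\). Alternatively, stipulate that the program uses outward rounding or interval arithmetic. Either way, the computed acceptance certifies the real inequality with strict margin. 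A secondary point to check is that the computed endpoints still cover \([0,1]\). Each computed \([a,b]\) should be interpreted as the exact interval between the computed numbers; adjacent subintervals then share endpoints exactly, and the first and last endpoints are exactly \(0\) and \(1\).
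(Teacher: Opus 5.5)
Your proposal is correct and follows the paper's proof essentially verbatim. It uses the same three ingredients: the bound \(\ell_{\mathrm{Jessica}}(p)\le\ell_{\mathrm{Jessica}}(b)+13(b-a)\) from the \(13\)-Lipschitz property in \cref{prop:envelope-props}, the fact that \(\mathrm{h}\) increases then decreases so its maximum on \([a,b]\) is at \(\min\{\max\{1/2,a\},b\}\), and the covering of \([0,1]\) by the checked subintervals. Your extra discussion of floating-point rounding goes beyond the paper, which treats the verifier's arithmetic as exact. Absorbing rounding error into the \(10^{-7}\) slack gives the strict inequality \eqref{eq:star} rather than the bound \(\le1.2448439\), but \eqref{eq:star} is all the theorem needs.
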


The proof of \cref{thm:size-verifier} appears in \cref{sec:B.3}.

We ran \(\mathcal{V}_{\mathrm{size}}\) and observed that it accepts. Hence, by \cref{thm:size-verifier}, the inequality \eqref{eq:star} holds.

\Cref{fig:envelopes} shows the four lattice envelopes; it also plots \(\ell_{\mathrm{Jessica}}(p)+\mathrm{h}(p)-1\) against the target value \(1.2448439-1=0.2448439\).

\paperfigure{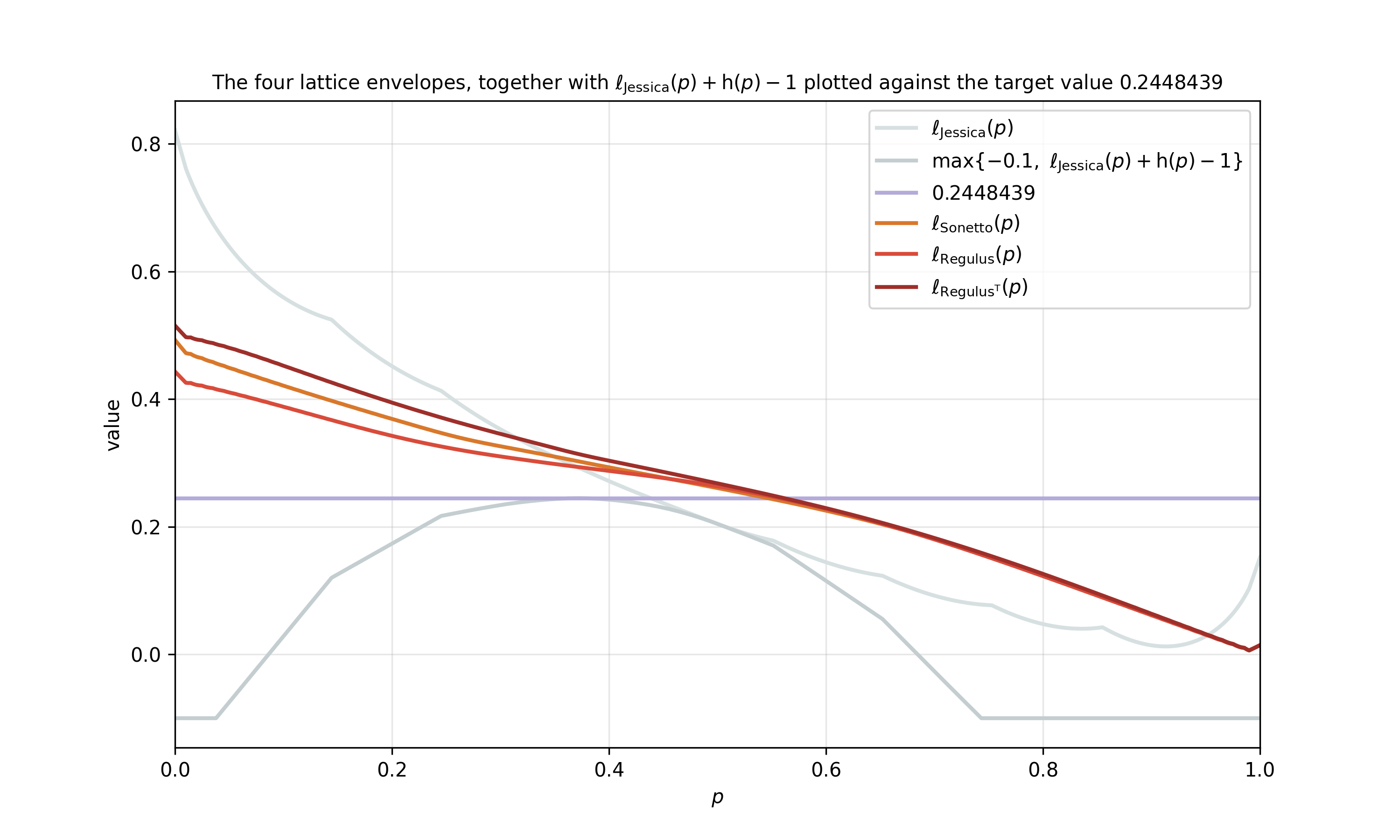}{The four lattice envelopes, together with \(\ell_{\mathrm{Jessica}}(p)+\mathrm{h}(p)-1\) plotted against the target value \(0.2448439\).}{0.8\textheight}{fig:envelopes}

\clearpage

\subsubsection{Machine Verification of Our Upper Bound on \texorpdfstring{\(\delta(R_1)\)}{delta(R1)}}\label{sec:3.3.3}

The degree-inequality verifier \(\mathcal{V}_{\mathrm{degree}}\) is a deterministic program that establishes, if it accepts, that
\[
\mathbf{z}(p,q)\le0.3198659\qquad\text{for all }p,q\in[0,1].
\]
This is stronger than the degree inequality \eqref{eq:starstar} of \cref{sec:2.3.3}.

The program defines a Boolean test on a rectangle \(R=[p_{\min},p_{\max}]\times[q_{\min},q_{\max}]\subseteq[0,1]^2\). At recursion depth \(d\), the test proceeds as follows.
\begin{enumerate}
\item Sample the \(101\times101\) grid \(\left(\frac{(100-i)p_{\min}+ip_{\max}}{100},\frac{(100-j)q_{\min}+jq_{\max}}{100}\right)\), for \(0\le i,j\le100\).
\item At each grid point \((p_i,q_j)\), form the eight points \(Q_{\mathrm{id}}(p_i,q_j)=(x_{\mathrm{id}},y_{\mathrm{id}})\) of \cref{sec:2.3.3}. Initialize \(Z_{i,j}:=\min_{\mathrm{id}}\max\{x_{\mathrm{id}},y_{\mathrm{id}}\}\). For every pair \(0\leq\mathrm{hi}<\mathrm{lo}\leq7\), if \((x_{\mathrm{hi}}-y_{\mathrm{hi}})(x_{\mathrm{lo}}-y_{\mathrm{lo}})<0\) and \((x_{\mathrm{hi}}-x_{\mathrm{lo}})(y_{\mathrm{hi}}-y_{\mathrm{lo}})<0\), set \(t_*:=\frac{x_{\mathrm{hi}}-y_{\mathrm{hi}}}{(x_{\mathrm{hi}}-y_{\mathrm{hi}})-(x_{\mathrm{lo}}-y_{\mathrm{lo}})}\) and update \(Z_{i,j}\leftarrow\min\{Z_{i,j},\,(1-t_*)x_{\mathrm{hi}}+t_*x_{\mathrm{lo}}\}\).
\item Return true if \(Z_{i,j}+13\max\left\{\frac{p_{\max}-p_{\min}}{200},\frac{q_{\max}-q_{\min}}{200}\right\}\le0.3198659\) at every grid point.
\item Otherwise, if \(d=30\), return false. If \(d\le29\), split \(R\) in half along the \(p\)-axis when \(d\) is even and along the \(q\)-axis when \(d\) is odd, apply the same test to both subrectangles at depth \(d+1\), and return true exactly when both recursive calls return true.
\end{enumerate}
The program \(\mathcal{V}_{\mathrm{degree}}\) accepts if this test returns true on \([0,1]^2\) at recursion depth \(0\).

\begin{theorem}[Soundness of the degree-inequality verifier]\label{thm:degree-verifier}
If \(\mathcal{V}_{\mathrm{degree}}\) accepts, then the degree inequality \eqref{eq:starstar} of \cref{sec:2.3.3} holds.
\end{theorem}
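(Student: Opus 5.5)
The plan is to show that acceptance certifies, for every point \((p,q)\in[0,1]^2\), that \(\mathbf{z}(p,q)\le0.3198659<0.319866\). Acceptance means that the recursive test returned true on \([0,1]^2\). Hence \([0,1]^2\) is covered by finitely many leaf rectangles \(R\), each of which passed step 3. It therefore suffices to take a passing rectangle \(R=[p_{\min},p_{\max}]\times[q_{\min},q_{\max}]\) and prove \(\mathbf{z}\le0.3198659\) on all of \(R\).

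\textbf{Step 1: the computed \(Z_{i,j}\) bounds \(\mathbf{z}\) from above at grid points.} Each candidate value used to update \(Z_{i,j}\) is \(\max\{a,b\}\) for some \((a,b)\in\conv\{Q_0,\dots,Q_7\}\). The initial value is \(\max\{x_{\mathrm{id}},y_{\mathrm{id}}\}\) at a vertex. For the pair update, the sign conditions give \(t_*\in(0,1)\): the quantity \(x-y\) changes sign along the segment from \(Q_{\mathrm{hi}}\) to \(Q_{\mathrm{lo}}\), so \(t_*\) is the parameter where the segment meets the diagonal \(a=b\). At that point \(\max\{a,b\}=(1-t_*)x_{\mathrm{hi}}+t_*x_{\mathrm{lo}}\), and the point lies in the hull. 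Hence \(\mathbf{z}(p_i,q_j)\le Z_{i,j}\) exactly. Floating-point issues are set aside; the slack \(0.3198659\) versus \(0.319866\) can absorb them.

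\textbf{Step 2: \(\mathbf{z}\) is \(13\)-Lipschitz in the \(\ell_\infty\) metric.} Each coordinate of each \(Q_{\mathrm{id}}\) depends on only one of \(p\) or \(q\), and is \(13\)-Lipschitz in that variable.
\begin{itemize}
\item For \(Q_5,Q_6,Q_7\) this follows from \cref{prop:envelope-props} (2).
\item For \(Q_0,\dots,Q_4\) the derivatives of the explicit polynomials (\(1-q\), \(p\), \(p(1-p)\), \(\tfrac12(1-q^2)\), and so on) are at most \(1\) in absolute value on \([0,1]\).
\end{itemize}
Now compare \((p,q)\) and \((p',q')\) with \(\max\{|p-p'|,|q-q'|\}\le\rho\). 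A convex combination of the \(Q_{\mathrm{id}}(p,q)\) with fixed weights moves by at most \(13\rho\) in each coordinate when the same weights are applied to the \(Q_{\mathrm{id}}(p',q')\). Taking the optimal weights for one point and reusing them at the other shows \(|\mathbf{z}(p,q)-\mathbf{z}(p',q')|\le13\rho\).

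\textbf{Step 3: extend from grid points to all of \(R\).} The \(101\times101\) grid has spacing \((p_{\max}-p_{\min})/100\) in \(p\) and \((q_{\max}-q_{\min})/100\) in \(q\). Every point of \(R\) is therefore within \(\ell_\infty\) distance \(\max\{(p_{\max}-p_{\min})/200,(q_{\max}-q_{\min})/200\}\) of some grid point, namely the nearest one in each coordinate. Combining Steps 1 and 2 with the passing condition of step 3 of the test gives \(\mathbf{z}\le0.3198659\) on \(R\). By the covering, this holds on \([0,1]^2\), which proves \eqref{eq:starstar}.

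The main obstacle is Step 2. The concern is whether the Lipschitz transfer from \(\ell_{\mathrm{X}}\) to the min-max over the hull is legitimate, since \(\ell_{\mathrm{X}}\) is only piecewise defined. Continuity is needed there, and it is supplied by the adjacent-certificate conditions already packaged into \cref{prop:envelope-props}.
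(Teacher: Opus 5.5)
Your proposal is correct and follows the paper's proof essentially step for step: you show $\mathbf{z}$ is $13$-Lipschitz in $\ell_\infty$ by reusing optimal convex weights, cover each passing rectangle by the grid with half-spacing, and note that every candidate behind $Z_{i,j}$ lies in the hull, so $\mathbf{z}(p_i,q_j)\le Z_{i,j}$; your explicit check that $t_*\in(0,1)$ and that the segment point lies on the diagonal, so its value is $(1-t_*)x_{\mathrm{hi}}+t_*x_{\mathrm{lo}}$, fills in a detail the paper leaves implicit. The only loose point is your claim that the gap between $0.3198659$ and $0.319866$ absorbs floating-point error, which is unjustified as stated but also unnecessary, since the paper treats the verifier's arithmetic as exact, as your argument does.
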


The proof of \cref{thm:degree-verifier} appears in \cref{sec:B.3}.

We ran \(\mathcal{V}_{\mathrm{degree}}\) and observed that it accepts. Hence, by \cref{thm:degree-verifier}, the inequality \eqref{eq:starstar} holds.

In this run, the verifier makes \(2981\) recursive calls, checks \(30409181\) grid points, and reaches maximum recursion depth \(d=24\). Among all grid points visited, the largest value of \(\mathbf{z}\) found is \(\mathbf{z}(0.3194995117,0.3202343750)\approx0.3198453660\). At this point, the best pair of coordinate points is \((Q_2,Q_5)\), corresponding to the two circuits \(C_2\) and \(C_{\mathrm{S}}\).

On a single CPU core of a personal computer, using Python with NumPy, the size-inequality verifier \(\mathcal{V}_{\mathrm{size}}\) of \cref{sec:3.3.2} finishes in about two seconds, while the degree-inequality verifier \(\mathcal{V}_{\mathrm{degree}}\) of \cref{sec:3.3.3} finishes in about half a minute. Both runs are fully deterministic: rerunning them reproduces exactly the same statistics and the same reported maximum.

\clearpage

\Cref{fig:z-full,fig:z-zoom} show the degree landscape \(\mathbf{z}(p,q)\) on \([0,1]^2\) and on \([0.25,0.4]^2\), respectively.

\paperfigure{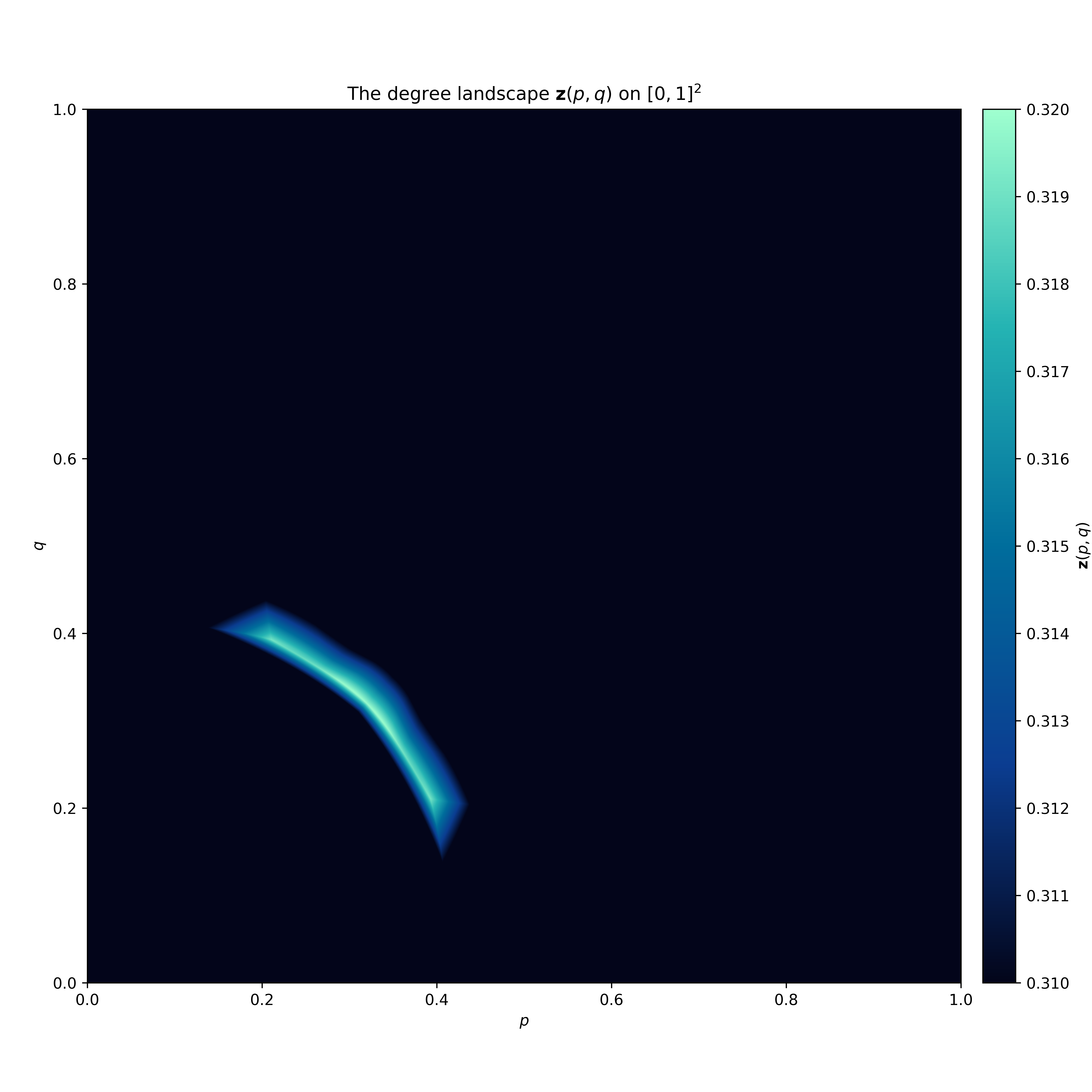}{The degree landscape \(\mathbf{z}(p,q)\) on \([0,1]^2\).}{0.4\textheight}{fig:z-full}

\paperfigure{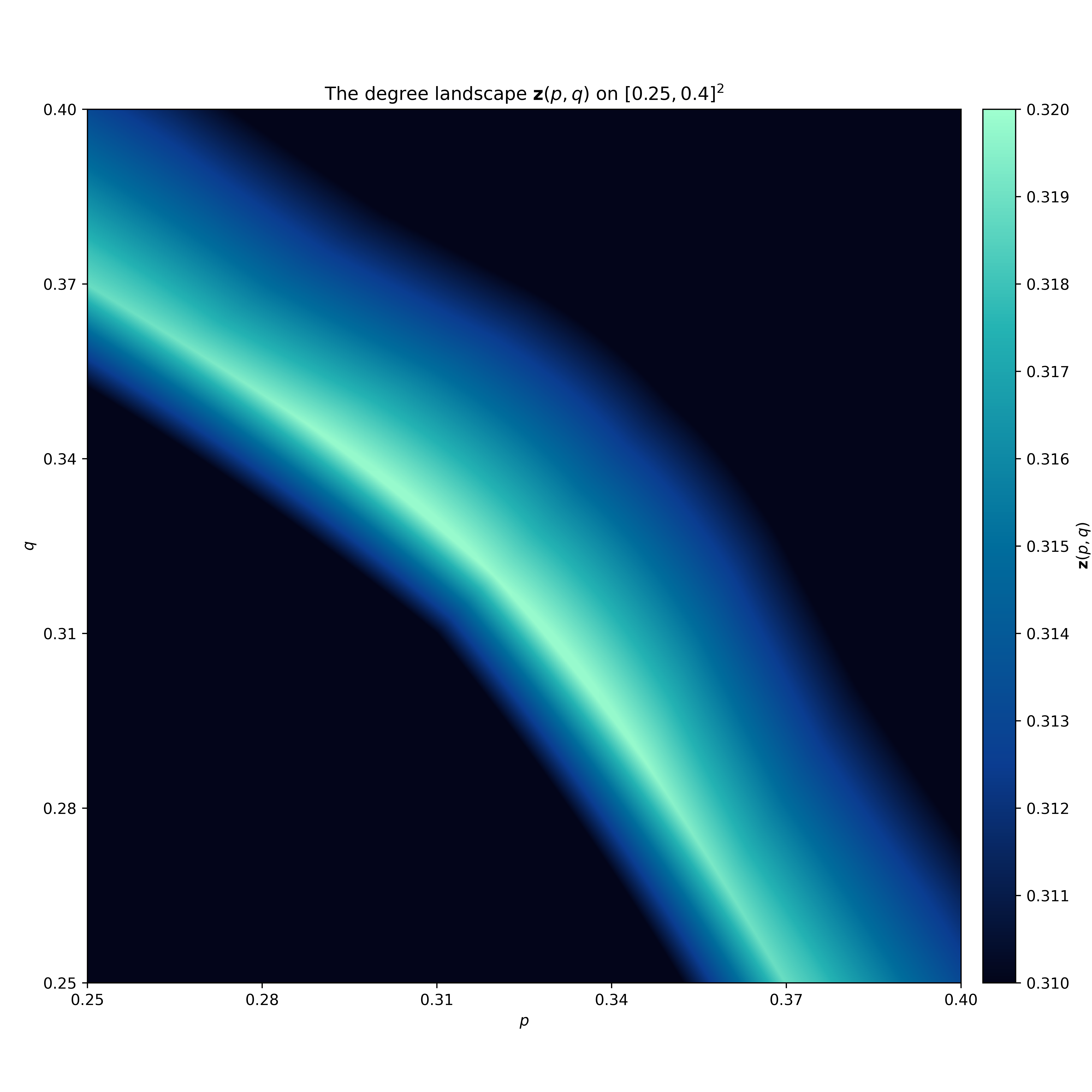}{The degree landscape \(\mathbf{z}(p,q)\) on \([0.25,0.4]^2\).}{0.4\textheight}{fig:z-zoom}

\clearpage

\Cref{fig:strategy-full,fig:strategy-zoom} show which pair of coordinate points attains \(\mathbf{z}(p,q)\) on \([0,1]^2\) and on \([0.25,0.4]^2\), respectively.

\paperfigure{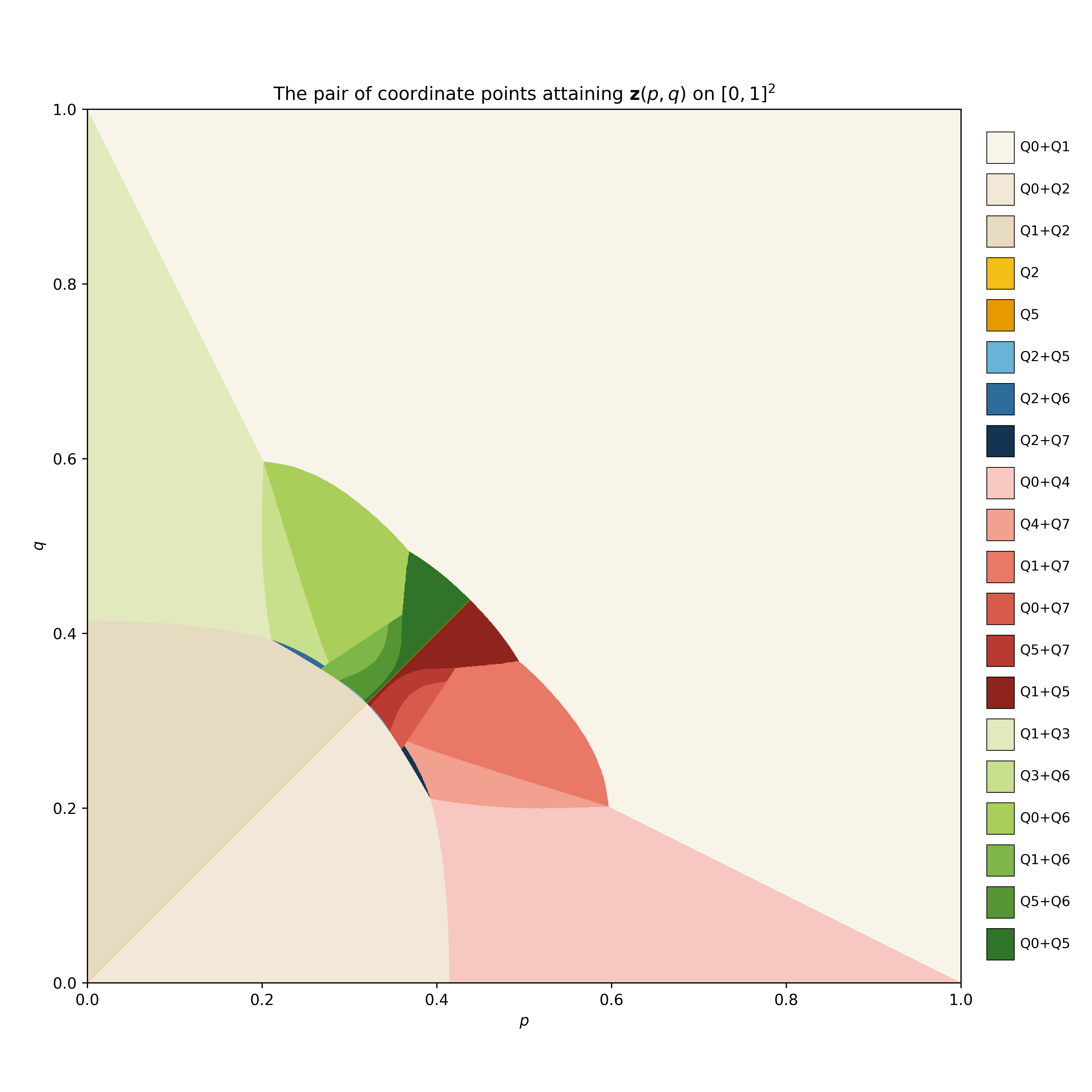}{The pair of coordinate points attaining \(\mathbf{z}(p,q)\) on \([0,1]^2\).}{0.4\textheight}{fig:strategy-full}

\paperfigure{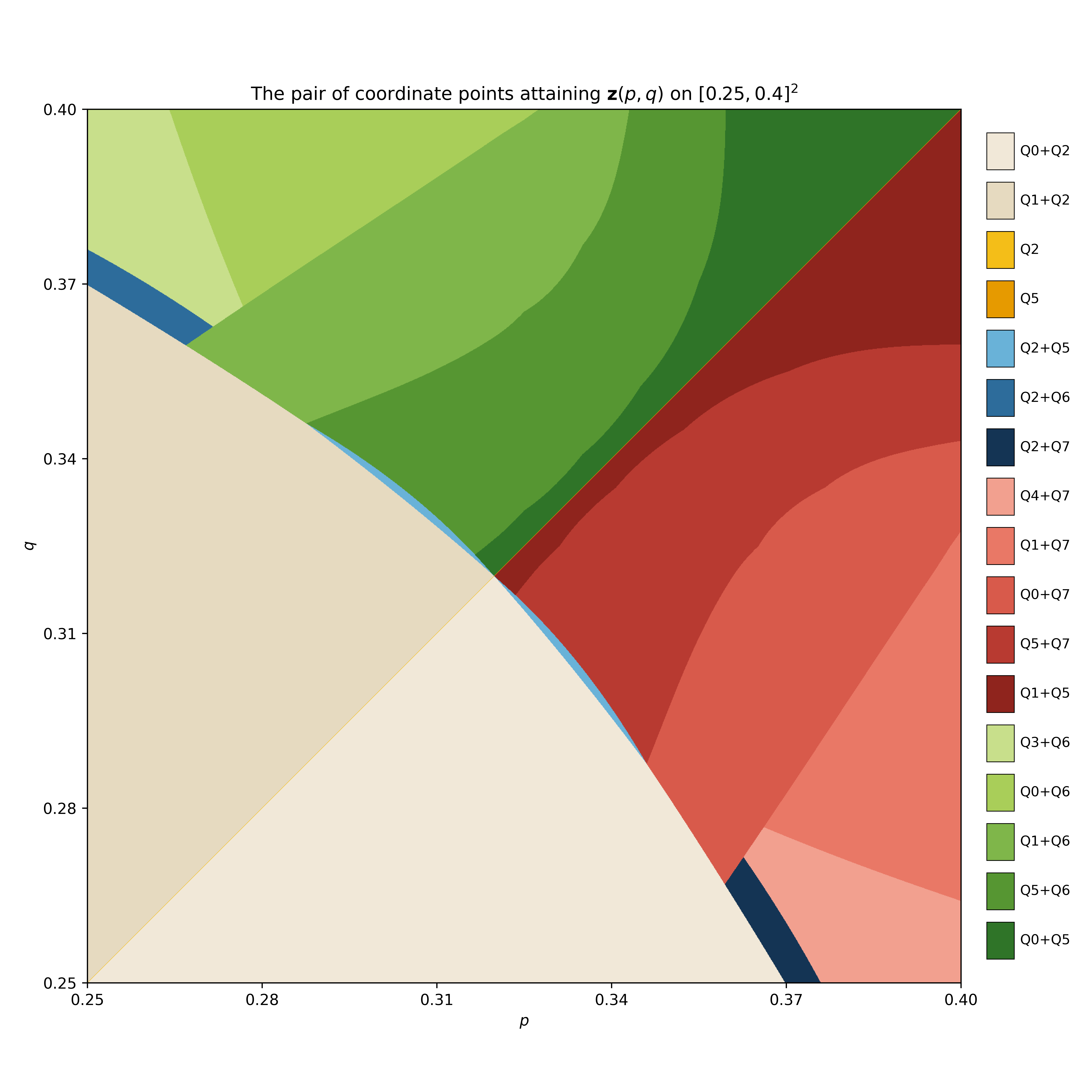}{The pair of coordinate points attaining \(\mathbf{z}(p,q)\) on \([0.25,0.4]^2\).}{0.4\textheight}{fig:strategy-zoom}

\appendix

\section{Preliminaries}\label{sec:A}

This appendix records the circuit model, the Sergeev decompositions, and the computation-tree language used throughout the paper.

\subsection{Circuit Model and Cost Measures}\label{sec:A.1}

In this paper, a \emph{depth-2 linear circuit} is an ordered pair \(C=(A,B)\) of matrices \(A\in\Mat_{n\times r}\) and \(B\in\Mat_{r\times m}\), and we say that \(C\) computes the \(n\times m\) matrix \(M=AB\). This definition corresponds to depth-2 linear circuits in the following sense:
\begin{itemize}
\item consider a circuit with three layers, an input layer, a middle layer, and an output layer, having \(n\), \(r\), and \(m\) gates, respectively;
\item each middle gate computes a fixed linear combination of the input gates;
\item each output gate computes a fixed linear combination of the middle gates;
\item the matrices \(A\) and \(B\) record the weights from the input gates to the middle gates and from the middle gates to the output gates, respectively;
\item since the input-output behavior of the circuit is exactly the linear transformation represented by \(M=AB\), saying that \(C\) computes \(M\) is justified.
\end{itemize}
In this convention, inputs are row vectors: an input \(x\) is transformed as \(x\mapsto xA\mapsto xAB\).

For a depth-2 linear circuit \(C=(A,B)\), its size is
\[
\mathrm{S}(C):=\nnz(A)+\nnz(B),
\]
and its maximum input-output degree is
\[
\mathrm{D}(C):=\max\left\{\max_x\nnz(A[x,*]),\,\max_y\nnz(B[*,y])\right\}.
\]
These are the conventions of \cite{JS13} and \cite[Section 3.2]{AL25}.

Let \(\mathcal{R}\) be a commutative semiring containing \(0\) and \(1\), and let \(\Gamma\subseteq\mathcal{R}\) be a multiplicatively closed subset containing \(0\) and \(1\). Let \(\mathsf{Circ}_{\mathcal{R},\Gamma}^{(2)}\) denote the set of all depth-2 linear circuits whose weights belong to \(\Gamma\).

For a matrix \(M\) with entries in \(\mathcal{R}\), define
\[
\mathrm{S}_{\mathcal{R},\Gamma}(M):=\min_{\substack{C=(A,B)\in\mathsf{Circ}_{\mathcal{R},\Gamma}^{(2)}\\ AB=M}}\mathrm{S}(C),
\qquad
\mathrm{D}_{\mathcal{R},\Gamma}(M):=\min_{\substack{C=(A,B)\in\mathsf{Circ}_{\mathcal{R},\Gamma}^{(2)}\\ AB=M}}\mathrm{D}(C),
\]
with the convention that each minimum is \(+\infty\) if no such circuit exists. Also define
\[
\sigma_{\mathcal{R},\Gamma}(M):=\lim_{n\to\infty}\mathrm{S}_{\mathcal{R},\Gamma}(M^{\otimes n})^{1/n},
\qquad
\delta_{\mathcal{R},\Gamma}(M):=\lim_{n\to\infty}\mathrm{D}_{\mathcal{R},\Gamma}(M^{\otimes n})^{1/n},
\]
whenever these limits exist.

By \cite[Theorem 3.1]{AL25}, if \(\mathbb{F}\) and \(\mathbb{K}\) are fields with \(\mathbb{K}\) a field extension of \(\mathbb{F}\), then
\[
\sigma_{\mathbb{F},\mathbb{F}}(M)=\sigma_{\mathbb{K},\mathbb{K}}(M),
\qquad
\delta_{\mathbb{F},\mathbb{F}}(M)=\delta_{\mathbb{K},\mathbb{K}}(M).
\]

In this paper, we use the restrictive model \(\mathcal{R}=\mathbb{Z}\) and \(\Gamma=\{0,\pm1\}\). Since our constructions use only coefficients in \(\{0,\pm1\}\), the corresponding upper bounds on
\[
\sigma_{\mathbb{F},\mathbb{F}}(R_1),\qquad
\delta_{\mathbb{F},\mathbb{F}}(R_1),\qquad
\sigma_{\mathbb{F},\mathbb{F}}(P_1^{(c)}),\qquad
\delta_{\mathbb{F},\mathbb{F}}(P_1^{(c)})
\]
also hold, with the same numerical constants, over an arbitrary field \(\mathbb{F}\): the elements \(0\), \(1\), and \(-1\) can be identified in \(\mathbb{F}\), although possibly \(1_{\mathbb{F}}=(-1)_{\mathbb{F}}\) in characteristic \(2\).

We define an equivalence relation \(\cong\) corresponding to relabeling the middle gates. Formally, two circuits \((A_0,B_0)\) and \((A_1,B_1)\) with \(r\) middle gates each are \emph{equivalent}, written
\[
(A_0,B_0)\cong(A_1,B_1),
\]
if and only if there exists an \(r\times r\) permutation matrix \(\Pi\) such that
\[
A_1=A_0\Pi,\qquad B_1=\Pi^{-1}B_0.
\]
It is straightforward to check that \(\cong\) is an equivalence relation. Let \(\widetilde{\mathsf{Circ}}_{\mathcal{R},\Gamma}^{(2)}\) denote the set of equivalence classes of \(\mathsf{Circ}_{\mathcal{R},\Gamma}^{(2)}\) under \(\cong\). From now on, we work in \(\widetilde{\mathsf{Circ}}_{\mathcal{R},\Gamma}^{(2)}\) unless stated otherwise.

We define two binary operations on depth-2 linear circuits: concatenation sum and Kronecker product.

\paragraph{Concatenation sum.}
If \(C=(A,B)\) and \(C'=(A',B')\), where \(A\) and \(A'\) have the same number of rows and \(B\) and \(B'\) have the same number of columns, then their \emph{concatenation sum} \(C+C'\) is defined as
\[
C+C':=\left(\begin{bmatrix}A&A'\end{bmatrix},\begin{bmatrix}B\\B'\end{bmatrix}\right).
\]
If \(C\) computes \(M\) and \(C'\) computes \(M'\), then \(C+C'\) computes \(M+M'\), since
\[
\begin{bmatrix}A&A'\end{bmatrix}\begin{bmatrix}B\\B'\end{bmatrix}=AB+A'B'=M+M'.
\]

\paragraph{Kronecker product.}
If \(C=(A,B)\) and \(C'=(A',B')\), then their \emph{Kronecker product} \(C\otimes C'\) is defined as
\[
C\otimes C':=(A\otimes A',\,B\otimes B').
\]
If \(C\) computes \(M\) and \(C'\) computes \(M'\), then \(C\otimes C'\) computes \(M\otimes M'\), since, by the mixed-product property of the Kronecker product,
\[
(A\otimes A')(B\otimes B')=(AB)\otimes(A'B')=M\otimes M'.
\]

Modulo the middle-gate equivalence relation \(\cong\), the concatenation sum is associative and commutative. The Kronecker product on circuits is associative but not commutative. Finally, \(\otimes\) distributes over \(+\):
\[
C\otimes(C'+C'')\cong(C\otimes C')+(C\otimes C''),
\]
since
\[
\left(A\otimes\begin{bmatrix}A'&A''\end{bmatrix},\,B\otimes\begin{bmatrix}B'\\B''\end{bmatrix}\right)\cong\left(\begin{bmatrix}A\otimes A'&A\otimes A''\end{bmatrix},\begin{bmatrix}B\otimes B'\\B\otimes B''\end{bmatrix}\right).
\]

We define one unary operation on depth-2 linear circuits: transposition.

\paragraph{Transposition.}
If \(C=(A,B)\), then its \emph{transposition} \(C^\T\) is defined as
\[
C^\T:=(B^\T,A^\T).
\]
If \(C\) computes \(M\), then \(C^\T\) computes \(M^\T\), since
\[
B^\T A^\T=(AB)^\T=M^\T.
\]

Each middle gate of a depth-2 linear circuit contributes one rank-\(1\) term: if the middle layer has \(r\) gates, then
\[
AB=\sum_{\ell=0}^{r-1}A[*,\ell]B[\ell,*].
\]
We call \((A[*,\ell],B[\ell,*])\) the \(\ell\)-th \emph{rank-\(1\) term} of the circuit. Depending on the context, we also regard it as a circuit with one middle gate, and we call \(A[*,\ell]B[\ell,*]\) the corresponding \emph{rank-\(1\) matrix}. Thus the circuit \((A,B)\) and the concatenation sum of its \(r\) rank-\(1\) terms are two equivalent viewpoints on the same object:
\[
(A,B)=\sum_{\ell=0}^{r-1}\,(A[*,\ell],B[\ell,*]).
\]
We switch between these viewpoints when discussing Sergeev decompositions and computation trees.

From now on, we fix \(\mathcal{R}\) and \(\Gamma\) and suppress them from the notation, writing \(\mathrm{S}(M)\), \(\mathrm{D}(M)\), \(\sigma(M)\), and \(\delta(M)\) for \(\mathrm{S}_{\mathcal{R},\Gamma}(M)\), \(\mathrm{D}_{\mathcal{R},\Gamma}(M)\), \(\sigma_{\mathcal{R},\Gamma}(M)\), and \(\delta_{\mathcal{R},\Gamma}(M)\), respectively. Furthermore, we restrict attention to the case in which \(\mathcal{R}\) has no zero divisors and \(M\) admits at least one depth-2 linear circuit over \((\mathcal{R},\Gamma)\); that is, there exists \(C\in\mathsf{Circ}_{\mathcal{R},\Gamma}^{(2)}\) computing \(M^{\otimes1}=M\). The matrices considered in this paper, namely \(R_1\) and \(P_1^{(c)}\), with \(\mathcal{R}=\mathbb{Z}\) and \(\Gamma=\{0,\pm1\}\), satisfy these assumptions.

Under these assumptions, both the size and the maximum input-output degree are submultiplicative under Kronecker products of matrices.

\begin{manuallemma}{A.1}\label{lem:A1}
Under the assumptions above, for every matrix \(M\) and all integers \(a,b\geq1\),
\[
\mathrm{S}(M^{\otimes(a+b)})\le\mathrm{S}(M^{\otimes a})\,\mathrm{S}(M^{\otimes b})
\qquad\text{and}\qquad
\mathrm{D}(M^{\otimes(a+b)})\le\mathrm{D}(M^{\otimes a})\,\mathrm{D}(M^{\otimes b}).
\]
\end{manuallemma}

\begin{proof}
For size, let \((A_a,B_a)\) and \((A_b,B_b)\) be size-optimal circuits for \(M^{\otimes a}\) and \(M^{\otimes b}\), respectively. Their Kronecker product \((A_a\otimes A_b,B_a\otimes B_b)\) computes \(M^{\otimes(a+b)}\), and since \(\mathcal{R}\) has no zero divisors, \(\nnz\) is multiplicative under Kronecker products. Therefore
\[
\begin{aligned}
\mathrm{S}(M^{\otimes(a+b)})
&\leq\mathrm{S}((A_a\otimes A_b,B_a\otimes B_b))\\
&=\nnz(A_a\otimes A_b)+\nnz(B_a\otimes B_b)\\
&=\nnz(A_a)\nnz(A_b)+\nnz(B_a)\nnz(B_b)\\
&\le(\nnz(A_a)+\nnz(B_a))(\nnz(A_b)+\nnz(B_b))\\
&=\mathrm{S}((A_a,B_a))\,\mathrm{S}((A_b,B_b))\\
&=\mathrm{S}(M^{\otimes a})\,\mathrm{S}(M^{\otimes b}).
\end{aligned}
\]
For the maximum input-output degree, let \((A_a,B_a)\) and \((A_b,B_b)\) be degree-optimal circuits for \(M^{\otimes a}\) and \(M^{\otimes b}\), respectively. Their Kronecker product \((A_a\otimes A_b,B_a\otimes B_b)\) computes \(M^{\otimes(a+b)}\). Therefore
\[
\begin{aligned}
\mathrm{D}(M^{\otimes(a+b)})
&\leq\mathrm{D}((A_a\otimes A_b,B_a\otimes B_b))\\
&=\max\Bigl\{\max_{(i,i')}\nnz((A_a\otimes A_b)[(i,i'),*]),\ \max_{(j,j')}\nnz((B_a\otimes B_b)[*,(j,j')])\Bigr\}\\
&=\max\Bigl\{\max_{i,i'}\nnz(A_a[i,*])\nnz(A_b[i',*]),\ \max_{j,j'}\nnz(B_a[*,j])\nnz(B_b[*,j'])\Bigr\}\\
&\leq\mathrm{D}((A_a,B_a))\,\mathrm{D}((A_b,B_b))\\
&=\mathrm{D}(M^{\otimes a})\,\mathrm{D}(M^{\otimes b}).
\end{aligned}
\]
This proves the claimed inequalities.
\end{proof}

If \(M\) is the zero matrix, we set \(\sigma(M)=\delta(M)=0\).

Now assume that \(M\) is nonzero. Then \(M^{\otimes n}\) is nonzero for every \(n\ge1\), so \(\mathrm{S}(M^{\otimes n})\) and \(\mathrm{D}(M^{\otimes n})\) are positive integers. By Lemma~A.1, the sequences
\[
\log_2\mathrm{S}(M^{\otimes n})\qquad\text{and}\qquad\log_2\mathrm{D}(M^{\otimes n})
\]
are subadditive, since the logarithm turns the multiplicative inequalities of Lemma~A.1 into additive ones. By Fekete's lemma \cite{Fek23},
\[
\lim_{n\to\infty}\frac1n\log_2\mathrm{S}(M^{\otimes n})=\inf_{n\ge1}\frac1n\log_2\mathrm{S}(M^{\otimes n})
\qquad\text{and}\qquad
\lim_{n\to\infty}\frac1n\log_2\mathrm{D}(M^{\otimes n})=\inf_{n\ge1}\frac1n\log_2\mathrm{D}(M^{\otimes n}).
\]
Equivalently, the limits
\[
\sigma(M):=\lim_{n\to\infty}\mathrm{S}(M^{\otimes n})^{1/n},
\qquad
\delta(M):=\lim_{n\to\infty}\mathrm{D}(M^{\otimes n})^{1/n}
\]
exist. For every nonzero matrix \(M\), we have \(\sigma(M),\delta(M)\ge1\).

\subsection{Sergeev Decompositions in Detail}\label{sec:A.2}

Sergeev decompositions (\cite{Ser22}; \cite[Section 5.1]{AL25}) are useful rank-\(1\) decompositions of the disjointness matrices; in this paper, they serve as the building blocks of the computation trees defined in \cref{sec:2.1}.

In this section, for convenience, we index the rows and columns of \(R_k=(R_1)^{\otimes k}\) by subsets of \([k]\), so that
\[
R_k(S,T)=\one[S\cap T=\varnothing].
\]

Fix \(k\ge1\) and an identifier \(c\in\{0,\ldots,2^k-1\}\). Define the word
\[
w(c):=(w_0,\ldots,w_{k-1},\texttt{R}),\qquad
w_i:=\begin{cases}
\texttt{R},&\text{the bit of }c\text{ at weight }2^{k-1-i}\text{ is }0,\\
\texttt{C},&\text{the bit of }c\text{ at weight }2^{k-1-i}\text{ is }1.
\end{cases}
\]
For instance, if \(k=7\) and \(c=43=(0101011)_2\), then \(w(c)=\texttt{RCRCRCCR}\). We refer to the final symbol \(\texttt{R}\) of \(w(c)\) as the \emph{sentinel}.

We define \(S_k(c)\) as the depth-2 linear circuit obtained by applying the following algorithm to the word \(w(c)\).
\begin{enumerate}
\item Initialize counters \(a=0\) and \(b=0\), and initialize an empty list of rank-\(1\) terms.
\item Read the word \(w(c)\) from left to right. For each symbol:
\begin{itemize}
\item if the symbol is \(\texttt{R}\), perform an \((\texttt{R},a,b)\) step, and then replace \(a\) by \(a+1\);
\item if the symbol is \(\texttt{C}\), perform a \((\texttt{C},a,b)\) step, and then replace \(b\) by \(b+1\).
\end{itemize}
\item The algorithm uses only the following two kinds of steps.
\begin{itemize}
\item An \((\texttt{R},a,b)\) step appends, for every subset \(S\subseteq[k]\) with \(|S|=a\), the rank-\(1\) term \((U,V^\T)\) to the list, where \(U\) is the indicator vector of the row \(S\), and \(V^\T\) is the indicator row vector of the set of columns
\[
\{T\subseteq[k]\setminus S:\ |T|\ge b\}.
\]
\item A \((\texttt{C},a,b)\) step appends, for every subset \(T\subseteq[k]\) with \(|T|=b\), the rank-\(1\) term \((U,V^\T)\) to the list, where \(V^\T\) is the indicator row vector of the column \(T\), and \(U\) is the indicator vector of the set of rows
\[
\{S\subseteq[k]\setminus T:\ |S|\ge a\}.
\]
\end{itemize}
\end{enumerate}
Finally, \(S_k(c)\) is defined as the concatenation sum of the \(2^k\) appended rank-\(1\) terms.

\begin{manualfact}{A.2.1}\label{fact:A21}
Every \(S_k(c)\) is a depth-2 linear circuit computing \(R_k\), with exactly \(2^k\) middle gates.
\end{manualfact}

\begin{proof}
Let \(k_0\) and \(k_1\) denote the numbers of non-sentinel \(\texttt{R}\) and \(\texttt{C}\) symbols in \(w(c)\), respectively, so that \(k_0+k_1=k\). The \(\texttt{R}\) steps have first counter \(a=0,\ldots,k_0\) (including the sentinel), and the \(\texttt{C}\) steps have second counter \(b=0,\ldots,k_1-1\). Therefore the number of appended rank-\(1\) terms is
\[
\sum_{i=0}^{k_0}\binom ki+\sum_{j=0}^{k_1-1}\binom kj=\sum_{i=0}^{k_0}\binom ki+\sum_{i=k_0+1}^{k}\binom ki=2^k,
\]
where the middle equality uses the symmetry \(\binom kj=\binom k{k-j}\).

Fix a pair \((S,T)\) with \(S,T\subseteq[k]\), and write \(s:=|S|\) and \(t:=|T|\).

If \(S\cap T\neq\varnothing\), then no appended rank-\(1\) term contributes to the entry \((S,T)\), since in every appended term each row set is disjoint from each column set. Hence the sum is \(0\) at this entry.

If \(S\cap T=\varnothing\), then a rank-\(1\) term containing the entry \((S,T)\) can be appended only in one of the following two types of steps:
\begin{enumerate}
\item an \((\texttt{R},a,b)\) step with \(a=s\) and \(b\leq t\);
\item a \((\texttt{C},a,b)\) step with \(a\leq s\) and \(b=t\).
\end{enumerate}
Indeed, for an \((\texttt{R},a,b)\) step to append a rank-\(1\) term containing \((S,T)\), the term must be the one indexed by the row \(S\), so \(a=s\) and \(b\le t\); for a \((\texttt{C},a,b)\) step, the term must be the one indexed by the column \(T\), so \(a\le s\) and \(b=t\).

Consider the monotone lattice path traced by the counters \((a,b)\) while scanning \(w(c)\): an \(\texttt{R}\) step moves right and a \(\texttt{C}\) step moves up. The path starts at \((0,0)\) and ends at \((k_0+1,k_1)\). Since \(S\cap T=\varnothing\), we have \(s+t\le k\), so the endpoint lies outside the rectangle
\[
\{0,\ldots,s\}\times\{0,\ldots,t\}.
\]
Since both counters only increase, the path leaves this rectangle exactly once. If it leaves by an \(\texttt{R}\) step, then the step has type \((\texttt{R},s,b)\) with \(b\le t\), and the term indexed by the row \(S\) contains \((S,T)\). If it leaves by a \(\texttt{C}\) step, then the step has type \((\texttt{C},a,t)\) with \(a\le s\), and the term indexed by the column \(T\) contains \((S,T)\). These are exactly the possible contributing steps, so exactly one appended rank-\(1\) term contributes to \((S,T)\). Hence the sum is \(1\) at this entry.
\end{proof}

In fact, the transpose of a Sergeev circuit is the Sergeev circuit of the complemented identifier. Set
\[
\overline{c}:=2^k-1-c.
\]
Since \(\overline{c}\) flips every bit of \(c\), one obtains \(w(\overline{c})\) from \(w(c)\) by swapping every non-sentinel \(\texttt{R}\) and \(\texttt{C}\).

\begin{manualfact}{A.2.2}\label{fact:A22}
For every \(k\ge1\) and every \(c\in\{0,\ldots,2^k-1\}\),
\[
S_k(c)^\T\cong S_k(\overline{c}).
\]
\end{manualfact}

\begin{proof}
We compare the appended rank-\(1\) terms. In the non-sentinel prefix, passing from \(c\) to \(\overline{c}\) exchanges \(\texttt{R}\) steps and \(\texttt{C}\) steps. Consequently, if a non-sentinel step of \(S_k(c)\) has counters \((a,b)\), then the corresponding step of \(S_k(\overline{c})\) has counters \((b,a)\).

An \((\texttt{R},a,b)\) step of \(S_k(c)\) appends, for each \(|S|=a\), the term
\[
\left(\one_{\{S\}},\ \one_{\{T\subseteq[k]\setminus S:\ |T|\ge b\}}\right).
\]
Its transpose is exactly the term appended by the corresponding \((\texttt{C},b,a)\) step of \(S_k(\overline{c})\). Similarly, every non-sentinel \((\texttt{C},a,b)\) step of \(S_k(c)\) transposes to the corresponding \((\texttt{R},b,a)\) step of \(S_k(\overline{c})\).

The only remaining step is the final sentinel \(\texttt{R}\). If the non-sentinel prefix of \(w(c)\) contains \(k_0\) symbols \(\texttt{R}\) and \(k_1\) symbols \(\texttt{C}\), then this sentinel appends the terms
\[
\left(\one_{\{S\}},\ \one_{\{[k]\setminus S\}}\right)\qquad (|S|=k_0).
\]
In \(S_k(\overline{c})\), the sentinel counters are \((k_1,k_0)\), so its sentinel terms are
\[
\left(\one_{\{T\}},\ \one_{\{[k]\setminus T\}}\right)\qquad (|T|=k_1).
\]
Taking \(T=[k]\setminus S\) gives exactly the transposed sentinel terms. Hence the two concatenation sums agree after transposition, up to relabeling middle gates.
\end{proof}

\subsection{From Computation Trees to Circuits}\label{sec:A.3}

In this section, we define computation trees \cite[Appendix B]{AL25} and use them to justify the recursively defined circuit families of \cref{sec:2.1}.

Fix a base matrix \(M\). A \emph{computation tree} for \(M^{\otimes N}\) is a finite rooted tree with the following data:
\begin{enumerate}
\item each internal node \(x\) has attached to it a finite depth-2 linear circuit \(C_x\) computing \(M^{\otimes n_x}\), where \(n_x\ge1\);
\item the children of \(x\) are in bijection with the rank-\(1\) terms of \(C_x\);
\item if a child corresponds to the rank-\(1\) term \(r\), then the edge to that child has \(r\) attached to it and has weight \(n_x\);
\item every root-to-leaf path has total edge weight \(N\).
\end{enumerate}
The \emph{weighted depth} of a node \(x\), denoted \(\operatorname{depth}(x)\), is the total edge weight on the path from the root to \(x\). The \emph{remaining weighted depth} at \(x\) is
\[
N_x:=N-\operatorname{depth}(x).
\]
The leaves are therefore exactly the nodes with \(N_x=0\).

If \(r=(U,V^\T)\) is a rank-\(1\) term, write \(\Mat(r):=UV^\T\). Define a depth-2 linear circuit \(\mathcal{C}(x)\) for each node \(x\) as follows. If \(x\) is a leaf, let
\[
\mathcal{C}(x):=([1],[1]),
\]
the trivial circuit computing \(M^{\otimes0}\). If \(x\) is an internal node, let
\[
\mathcal{C}(x):=\sum_{y\in\Ch(x)}r_{x,y}\otimes\mathcal{C}(y),
\]
where \(r_{x,y}\) is the rank-\(1\) term attached to the edge from \(x\) to \(y\).

By the distributivity of \(\otimes\) over \(+\), the circuit \(\mathcal{C}(x)\) is the concatenation sum, over all leaves below \(x\), of the Kronecker products of the rank-\(1\) terms attached to the edges along the corresponding paths.

\begin{manuallemma}{A.3.1}\label{lem:A31}
For every node \(x\) of a computation tree, the subtree circuit \(\mathcal{C}(x)\) computes \(M^{\otimes N_x}\). In particular, if \(\rho\) is the root, then \(\mathcal{C}(\rho)\) computes \(M^{\otimes N}\).
\end{manuallemma}

\begin{proof}
We prove the claim by induction from the leaves upward. If \(x\) is a leaf, then \(N_x=0\), and \(\mathcal{C}(x)=([1],[1])\) computes \(M^{\otimes0}\).

Now let \(x\) be an internal node and assume the claim for its children. Since the children of \(x\) are indexed by the rank-\(1\) terms of \(C_x\), we may write
\[
C_x=\sum_{y\in\Ch(x)}r_{x,y},
\]
and, because \(C_x\) computes \(M^{\otimes n_x}\),
\[
\sum_{y\in\Ch(x)}\Mat(r_{x,y})=M^{\otimes n_x}.
\]
For every child \(y\), the edge from \(x\) to \(y\) has weight \(n_x\), so \(N_y=N_x-n_x\). By the inductive hypothesis, \(\mathcal{C}(y)\) computes \(M^{\otimes(N_x-n_x)}\). Therefore \(\mathcal{C}(x)\) computes
\[
\begin{aligned}
\sum_{y\in\Ch(x)}\Mat(r_{x,y})\otimes M^{\otimes(N_x-n_x)}
&=\left(\sum_{y\in\Ch(x)}\Mat(r_{x,y})\right)\otimes M^{\otimes(N_x-n_x)}\\
&=M^{\otimes n_x}\otimes M^{\otimes(N_x-n_x)}=M^{\otimes N_x}.
\end{aligned}
\]
This completes the induction, and the statement for the root follows by taking \(x=\rho\).
\end{proof}

Finally, we connect computation trees with the time-homogeneous finite-state construction of \cref{sec:2.1}.

Fix a parameter set \(\mathrm{X}=(k,H,I,Z,\alpha,\beta)\) with \(I=(\mathbf{c},\boldsymbol{\iota})\), a multiple \(n\) of \(k\), and an initial state \(h\in[H]\). Define a computation tree for \(R_n\) with the following data:
\begin{enumerate}
\item each node carries a state in \([H]\), and the root has state \(h\);
\item each node of weighted depth less than \(n\) is internal; if such a node has state \(u\), then the Sergeev decomposition \(S_k(c_{\boldsymbol{\iota}(u)})\) is attached to it;
\item the children of a node with state \(u\) are in bijection with the rank-\(1\) terms \((U_{c_{\boldsymbol{\iota}(u)},r},V_{c_{\boldsymbol{\iota}(u)},r}^\T)\) of this Sergeev decomposition; the edge corresponding to the \(r\)-th term has this term attached to it and has weight \(k\);
\item the child corresponding to the \(r\)-th term has state \(\clamp_H(u+\Delta_{c_{\boldsymbol{\iota}(u)},r})\), and the nodes of weighted depth \(n\) are leaves.
\end{enumerate}
Since every edge has weight \(k\) and \(n\) is a multiple of \(k\), every root-to-leaf path has total edge weight \(n\).

\begin{proof}[Proof of \cref{prop:tree}]
In this computation tree, every internal node has a Sergeev decomposition \(S_k(c_{\boldsymbol{\iota}(u)})\) attached to it, and by Fact~A.2.1, every such \(S_k(c_{\boldsymbol{\iota}(u)})\) computes \(R_k\). Since every edge has weight \(k\) and every root-to-leaf path has total weight \(n\), Lemma~A.3.1 shows that the root circuit computes \(R_n\).

Unfolding the definition of the root circuit yields exactly the recurrence of \cref{sec:2.1}:
\[
C_{\mathrm{X}}(0,u)=([1],[1]),
\]
and, for every multiple \(n\) of \(k\) with \(n\ge k\),
\[
C_{\mathrm{X}}(n,u)=\sum_{r=0}^{2^k-1}(U_{c_i,r},V_{c_i,r}^\T)\otimes C_{\mathrm{X}}\!\left(n-k,\,\clamp_H(u+\Delta_{c_i,r})\right),\qquad\text{where }i=\boldsymbol{\iota}(u).
\]
Therefore \(C_{\mathrm{X}}(n,h)\) is the root circuit, and hence computes \(R_n\).
\end{proof}

\section{Deferred Proofs}\label{sec:B}

This appendix contains the proofs omitted from the main text.

\subsection{Proofs of the Theorems in \texorpdfstring{\Cref{sec:1}}{Section 1}}\label{sec:B.1}

\begin{proof}[Proof of \cref{thm:ov}]
If \(d=0\), then every pair is orthogonal and the answer is simply \(|U||V|\). We assume from now on that \(d\ge1\).

Write \(\delta:=\delta(R_1)\). By the definition of \(\delta(R_1)\), choose an integer \(b\ge1\) and a depth-2 linear circuit \(C=(A,B)\) computing \(R_b\) such that, with \(\Delta:=\mathrm{D}(C)\),
\[
\Delta\le(\delta+\eps/2)^b.
\tag{B.1}\label{eq:B1}
\]
Moreover, we may choose \(C\) so that every middle gate receives at least one nonzero-weight edge from an input gate. Let \(G\) be the number of middle gates of \(C\). For such a \(C\), we must have \(\Delta\ge2\) and \(G\le 2^b\Delta\)\footnote{We may first choose \(C\) arbitrarily and then remove all middle gates of \(C\) receiving no nonzero-weight edge from any input gate. This does not change the matrix computed by the circuit and does not increase \(\mathrm{D}(C)\). Now, since for any \(b\ge1\) the matrix \(R_b\) contains a copy of \(R_1\), and \(R_1\) cannot be computed by a depth-2 linear circuit with maximum input-output degree at most \(1\), we have \(\Delta\ge2\). Since each input gate has, by definition, at most \(\Delta\) edges to middle gates, we have \(G\le2^b\Delta\).}. Therefore, once \(\eps\) is fixed, the quantities \(b\), \(\Delta\), and \(G\) are bounded by constants.

Next, pad every vector in \(U\) and \(V\) with trailing zeros until the dimension is
\[
d'=bm,\qquad m=\lceil d/b\rceil.
\]
This does not change orthogonality. The \(m\)-fold Kronecker product \(C^{\otimes m}=(A^{\otimes m},B^{\otimes m})\) computes \(R_{d'}\). Its middle gates can be indexed by tuples \(g=(g_1,\ldots,g_m)\in[G]^m\). For \(i,j\in\{0,1\}^b\), define
\[
L(i):=\{g\in[G]:A[i,g]\ne0\},
\qquad
R(j):=\{g\in[G]:B[g,j]\ne0\},
\]
so that \(|L(i)|,|R(j)|\le\Delta\).

Write a padded vector \(x\in\{0,1\}^{d'}\) as \(x=(x^{(1)},\ldots,x^{(m)})\), where each \(x^{(t)}\in\{0,1\}^b\), and do the same for \(y\). For \(g=(g_1,\ldots,g_m)\), put
\[
a_x(g):=\prod_{t=1}^m A[x^{(t)},g_t],
\qquad
b_y(g):=\prod_{t=1}^m B[g_t,y^{(t)}].
\]
Since \(C^{\otimes m}\) computes \(R_{d'}\), we have \(R_{d'}[x,y]=\sum_{g\in[G]^m}a_x(g)b_y(g)\). Hence the answer, namely the desired number of orthogonal pairs, is
\[
\mathrm{ans}=\sum_{x\in U}\sum_{y\in V}R_{d'}[x,y]=\sum_{g\in[G]^m}\left(\sum_{x\in U}a_x(g)\right)\left(\sum_{y\in V}b_y(g)\right).
\tag{B.2}\label{eq:B2}
\]
If \(a_x(g)\ne0\), then \(g\in L(x^{(1)})\times\cdots\times L(x^{(m)})\); if \(b_y(g)\ne0\), then \(g\in R(y^{(1)})\times\cdots\times R(y^{(m)})\).

The remaining issue is to evaluate \eqref{eq:B2} without storing the two values \(\sum_{x\in U}a_x(g)\) and \(\sum_{y\in V}b_y(g)\) for every \(g\in[G]^m\). Choose integers \(r,s\ge0\) with \(r+s=m\), and write
\[
g=(p,q),\qquad p\in[G]^r,\quad q\in[G]^s.
\]
We call \(p\) the \emph{prefix} of \(g\) and \(q\) the \emph{suffix} of \(g\). For all sufficiently large \(n\), choose \(s\) so that
\[
G^s\le n,\qquad\text{and either }\Delta^s\ge\log_2n\text{ or }r=0.
\tag{B.3}\label{eq:B3}
\]
One can show that there exists \(n_0\), depending only on \(\eps\), such that whenever \(n\ge n_0\) such an \(s\) exists\footnote{One concrete choice is the following. Let \(s_0=\lceil\log_2\log_2n\rceil\). For all sufficiently large \(n\), depending only on \(G\), we have \(G^{s_0}\le n\); and since \(\Delta\ge2\), we also have \(\Delta^{s_0}\ge\log_2n\). If \(m\ge s_0\), take \(s=s_0\) and \(r=m-s\); otherwise take \(s=m\) and \(r=0\). This establishes \eqref{eq:B3}.}. If \(n<n_0\), we simply run the brute-force \(O(n^2d)\) algorithm; since the threshold depends only on \(\eps\), this is still \(O(n(\delta+\eps)^d)\), because \(n\) is then \(O(1)\) and \(\delta+\eps>1\).

Assume first that \(r>0\). For each \(x\in U\), create an iterator over
\[
L(x^{(1)})\times\cdots\times L(x^{(r)})
\]
in lexicographic order; its current item stores the vector \(x\), the current prefix \(p\), and the coefficient
\[
\alpha_x(p):=\prod_{t=1}^r A[x^{(t)},p_t].
\]
Similarly, for each \(y\in V\), create an iterator over
\[
R(y^{(1)})\times\cdots\times R(y^{(r)})
\]
in lexicographic order; its current item stores the vector \(y\), the current prefix \(p\), and the coefficient
\[
\beta_y(p):=\prod_{t=1}^r B[p_t,y^{(t)}].
\]
Each iterator uses \(O(d)\) space.

Maintain two min-heaps \(H_L\) and \(H_R\), which store iterators and are keyed by the iterators' current prefixes. Initially, put all iterators associated with vectors in \(U\) into \(H_L\) and all iterators associated with vectors in \(V\) into \(H_R\). We then maintain a global counter \(\mathrm{sum}\), initially \(0\), and merge the two sorted streams of prefixes:
\begin{itemize}
\item if the two heap minima have different prefixes, pop the smaller one, advance its iterator, and reinsert it unless it has ended;
\item if the two minima have the same prefix \(p\), pop all current iterator items with prefix \(p\) from both heaps; after popping each item, advance and reinsert its iterator if possible;
\item the items popped from \(H_L\) are exactly those with \(x\in U\) and \(p\in L(x^{(1)})\times\cdots\times L(x^{(r)})\), and the items popped from \(H_R\) are exactly those with \(y\in V\) and \(p\in R(y^{(1)})\times\cdots\times R(y^{(r)})\).
\end{itemize}
For this prefix \(p\), create two temporary maps \(T_L,T_R:[G]^s\to\mathbb{Z}\), initially zero. For each item popped from \(H_L\) and corresponding to \(x\), enumerate all suffixes
\[
q\in L(x^{(r+1)})\times\cdots\times L(x^{(m)})
\]
and perform the update
\[
T_L[q]\leftarrow T_L[q]+\alpha_x(p)\prod_{t=r+1}^m A[x^{(t)},q_{t-r}].
\]
For each item popped from \(H_R\) and corresponding to \(y\), enumerate all suffixes
\[
q\in R(y^{(r+1)})\times\cdots\times R(y^{(m)})
\]
and perform the update
\[
T_R[q]\leftarrow T_R[q]+\beta_y(p)\prod_{t=r+1}^m B[q_{t-r},y^{(t)}].
\]
Finally, traverse the nonzero entries of \(T_L\), add \(T_L[q]\,T_R[q]\) to the global counter \(\mathrm{sum}\), and clear both maps.

The case \(r=0\) is handled by the same procedure with the unique empty prefix, but without the two heaps: we simply build the two maps once, over all suffixes \(g\in[G]^m\), and take their dot product.

We now prove correctness. Fix a processed prefix \(p\). After the two temporary maps have been filled, for every \(q\in[G]^s\) we have
\[
T_L[q]=\sum_{x\in U}a_x(p,q),
\qquad
T_R[q]=\sum_{y\in V}b_y(p,q),
\]
where terms with zero coefficients contribute zero. Thus the amount added for the prefix \(p\) is
\[
\sum_{q\in[G]^s}\left(\sum_{x\in U}a_x(p,q)\right)\left(\sum_{y\in V}b_y(p,q)\right),
\]
which is precisely the contribution to \eqref{eq:B2} of all \(g=(p,q)\) with this prefix \(p\). The heap merge processes exactly those prefixes that occur on both sides, and all other prefixes contribute zero to \eqref{eq:B2}. Therefore the final value of \(\mathrm{sum}\) equals the right-hand side of \eqref{eq:B2}, which is the number of orthogonal pairs in \(U\times V\).

We implement the temporary maps as tries over the alphabet \([G]\). Each suffix \(q\in[G]^s\) has length \(s\le m=O(d)\), so every lookup, insertion, traversal step, or deletion costs \(O(d)\) time. Since \(G^s\le n\), the two tries contain at most \(O(n)\) keys at any moment. The two heaps contain at most \(n\) iterator items each, and for a fixed prefix, each iterator can be popped at most once from a given heap, so the temporary popped lists also have size \(O(n)\). As each stored item has length \(O(d)\), the total space usage is \(O(nd)\).

It remains to bound the running time. The suffix enumerations and trie operations cost
\[
O(nd\Delta^r\Delta^s)=O(nd\Delta^m).
\]
The heap operations and lexicographic prefix comparisons cost
\[
O(nd\Delta^r\log n).
\]
If \(\Delta^s\ge\log_2 n\), then this is \(O(nd\Delta^m)\); otherwise \(r=0\), and the heap part is absent from our implementation. Hence the total running time is \(O(nd\Delta^m)\).

Using \eqref{eq:B1} and \(bm\le d+b\), we obtain
\[
d\Delta^m\le d(\delta+\eps/2)^{bm}\le d(\delta+\eps/2)^{d+b}.
\tag{B.4}\label{eq:B4}
\]
Let
\[
\lambda:=\frac{\delta+\eps/2}{\delta+\eps}<1.
\]
Since \(d\lambda^d\le 1/(1-\lambda)=2(\delta+\eps)/\eps\) for every \(d\ge1\), the bound \eqref{eq:B4} is at most \(K(\delta+\eps)^d\) for a constant \(K\) depending only on \(\eps\). Therefore \(O(nd\Delta^m)=O(n(\delta+\eps)^d)\). This proves \cref{thm:ov}.
\end{proof}

\begin{proof}[Proof of \cref{thm:ds}]
We first prove the Subset Query statement. Again the case \(d=0\) is trivial, so assume \(d\ge1\). Following the identification of \cref{sec:1.3}, we represent a stored binary string by the indicator vector \(x\in\{0,1\}^d\) of its \(1\)-positions, and a query pattern in \(\{0,*\}^d\) by the indicator vector \(y\in\{0,1\}^d\) of its \(0\)-positions. Then \(x\) and \(y\) match if and only if \(\langle x,y\rangle=0\). Thus, if \(X\) denotes the current multiset of inserted strings, and the rows and columns of \(R_d\) are indexed by \(\{0,1\}^d\), the quantity requested by a query is
\[
\mathrm{ans}(y):=\sum_{x\in X}R_d[x,y].
\]

Choose \(b\), \(C=(A,B)\), \(\Delta\), \(G\), and \(m\) exactly as in the proof of \cref{thm:ov}, and pad stored strings by trailing \(0\)'s and query patterns by trailing \(*\)'s. In the vector \(y\) of zero-positions, this padding again appends zeros. With the same notation \(a_x(g)\) and \(b_y(g)\) as above, the circuit \(C^{\otimes m}\) gives
\[
\mathrm{ans}(y)=\sum_{g\in[G]^m}\left(\sum_{x\in X}a_x(g)\right)b_y(g).
\tag{B.5}\label{eq:B5}
\]

The data structure stores a trie-based map \(g\mapsto z_g\) for \(g\in[G]^m\), where
\[
z_g:=\sum_{x\in X}a_x(g),
\]
and missing keys have value \(0\). To insert \(x\), enumerate
\[
g\in L(x^{(1)})\times\cdots\times L(x^{(m)})
\]
and update \(z_g\leftarrow z_g+a_x(g)\). To answer a query \(y\), enumerate
\[
g\in R(y^{(1)})\times\cdots\times R(y^{(m)})
\]
and return the sum \(\sum_g z_g\,b_y(g)\) over the enumerated elements \(g\).

The invariant defining \(z_g\) holds initially and is preserved by each insertion. Therefore the query algorithm returns the right-hand side of \eqref{eq:B5}, which is exactly the number of currently stored strings matching the query pattern. Each insertion or query enumerates at most \(\Delta^m\) elements of \([G]^m\), and each trie operation costs \(O(m)=O(d)\) time. As in the final calculation of the proof of \cref{thm:ov},
\[
O(d\Delta^m)=O((\delta(R_1)+\eps)^d).
\]
This proves the claimed online deterministic data structure for Subset Query.

We next prove the \(c\)-ary Partial Match statement. The case \(d=0\) is again trivial, so assume \(d\ge1\). Fix an integer \(c\ge2\) and write \(\delta_c:=\delta(P_1^{(c)})\). As observed in \cref{sec:1.3}, the matrix \((P_1^{(c)})^{\otimes d}\) is precisely the incidence matrix of \(d\)-dimensional \(c\)-ary Partial Match. It remains to apply the same Kronecker-product data-structure argument to this base matrix.

Choose \(b\ge1\) and a depth-2 linear circuit \(C=(A,B)\) computing \((P_1^{(c)})^{\otimes b}\) such that
\[
\Delta:=\mathrm{D}(C)\le(\delta_c+\eps/2)^b.
\]
After deleting the middle gates not reached from the input layer, let \(G\) be the number of remaining middle gates. Since \(A\) has \(c^b\) rows, we have \(G\le c^b\Delta\), and \(b\), \(\Delta\), and \(G\) are constants depending only on \(c\) and \(\eps\).

Pad database strings by trailing \(0\)'s and query patterns by trailing \(*\)'s until the dimension is \(d'=bm\); the additional coordinates always contribute a factor \(1\) to the partial-match incidence value. Define \(L(i)\), \(R(j)\), \(a_x(g)\), and \(b_y(g)\) by the same formulas as before, now with
\[
i\in\{0,\ldots,c-1\}^b\quad\text{and}\quad j\in\{0,\ldots,c-1,*\}^b.
\]
The same trie-based data structure maintains
\[
z_g=\sum_{x\in X}a_x(g),
\]
updates these values on insertion, and answers a query \(y\) by returning the sum \(\sum_g z_g\,b_y(g)\) over the enumerated elements \(g\in R(y^{(1)})\times\cdots\times R(y^{(m)})\). Correctness follows from the same substitution into the expansion of \((P_1^{(c)})^{\otimes d'}[x,y]\) through \(C^{\otimes m}\).

Each operation costs \(O(d\Delta^m)\), which is \(O((\delta_c+\eps)^d)\) by the same absorption calculation as above. Substituting the upper bound on \(\delta(P_1^{(c)})\) from \cref{thm:pm} gives the displayed \(c\)-ary Partial Match running time.

Finally, the numerical ``in particular'' statements of \cref{thm:ds} follow by choosing \(\eps>0\) small enough. This completes the proof of \cref{thm:ds}.
\end{proof}

\subsection{Proofs of the Theorems in \texorpdfstring{\Cref{sec:2}}{Section 2}}\label{sec:B.2}

\begin{proof}[Proof of \cref{prop:transpose}]
Throughout this proof, circuit identities are understood modulo the middle-gate equivalence relation \(\cong\) of \cref{sec:A.1}.

Let \(\mathrm{X}=(k,H,I,Z,\alpha,\beta)\) and \(\mathrm{X}^\T=(k,H,I^\T,Z,\beta,\alpha)\). Fix \(h\in[H]\), and set \(\bar{h}:=H-1-h\) and \(i:=\boldsymbol{\iota}(h)\). Then \(\boldsymbol{\iota}^\T(\bar{h})=L-1-\boldsymbol{\iota}(H-1-\bar{h})=L-1-i\). Since \(\mathbf{c}^\T=(\overline{c_{L-1}},\ldots,\overline{c_0})\), it follows that \(I^\T(\bar{h})=S_k(\overline{c_i})\). By Fact~A.2.2, \(S_k(c_i)^\T\cong S_k(\overline{c_i})\). Relabel the middle gates of \(S_k(\overline{c_i})\) according to this equivalence, so that \(S_k(c_i)=\sum_r(U_{c_i,r},V_{c_i,r}^\T)\) and \(S_k(\overline{c_i})=\sum_r(V_{c_i,r},U_{c_i,r}^\T)\).

If the raw tilt of \((U_{c_i,r},V_{c_i,r}^\T)\) in \(\mathrm{X}\) is \(\theta_{c_i,r}\), then the raw tilt of the corresponding term \((V_{c_i,r},U_{c_i,r}^\T)\) in \(\mathrm{X}^\T\) is
\[
\log_Z\left(\frac{\wnnz_{1-\alpha,\alpha}(U_{c_i,r})}{\wnnz_{1-\beta,\beta}(V_{c_i,r})}\right)=-\theta_{c_i,r}.
\]
Because \(\theta_{c_i,r}\notin\mathbb{Z}+\frac12\), we have \(\left\lfloor-\theta_{c_i,r}+\frac12\right\rfloor=-\left\lfloor\theta_{c_i,r}+\frac12\right\rfloor\). Hence, if the state offset of \((U_{c_i,r},V_{c_i,r}^\T)\) is \(\Delta_{c_i,r}\), then the state offset of \((V_{c_i,r},U_{c_i,r}^\T)\) is \(-\Delta_{c_i,r}\).

Moreover, for every integer \(\Delta\) and every \(h\in[H]\), the definition of \(\clamp_H\) gives
\[
H-1-\clamp_H(h+\Delta)=\clamp_H(H-1-h-\Delta).
\]
Indeed, if \(h+\Delta<0\), then both sides equal \(H-1\); if \(0\le h+\Delta\le H-1\), then both sides equal \(H-1-h-\Delta\); and if \(h+\Delta>H-1\), then both sides equal \(0\).

We therefore obtain the following key fact: for every \(h\in[H]\) and every \(r\), if the rank-\(1\) term indexed by \(r\) in \(I(h)\) sends the state \(h\) to \(h_r:=\clamp_H(h+\Delta_{c_i,r})\), then the corresponding rank-\(1\) term indexed by \(r\) in \(I^\T(\bar{h})\) sends the state \(\bar{h}\) to \(H-1-h_r\).

Using this fact, we prove the first identity of \cref{prop:transpose} by induction on \(n/k\). For \(n=0\), both sides equal the trivial circuit \(([1],[1])\). Now suppose \(n\ge k\). By the recurrence defining \(C_{\mathrm{X}}\) and the inductive hypothesis,
\[
\begin{aligned}
C_{\mathrm{X}}(n,h)^\T
&\cong\left(\sum_{r=0}^{2^k-1}(U_{c_i,r},V_{c_i,r}^\T)\otimes C_{\mathrm{X}}(n-k,h_r)\right)^{\!\T}\\
&\cong\sum_{r=0}^{2^k-1}(V_{c_i,r},U_{c_i,r}^\T)\otimes C_{\mathrm{X}}(n-k,h_r)^\T\\
&\cong\sum_{r=0}^{2^k-1}(V_{c_i,r},U_{c_i,r}^\T)\otimes C_{\mathrm{X}^\T}(n-k,H-1-h_r).
\end{aligned}
\]
On the other hand, the recurrence defining \(C_{\mathrm{X}^\T}\), together with the key fact above, gives
\[
C_{\mathrm{X}^\T}(n,\bar{h})\cong\sum_{r=0}^{2^k-1}(V_{c_i,r},U_{c_i,r}^\T)\otimes C_{\mathrm{X}^\T}(n-k,H-1-h_r).
\]
Therefore \(C_{\mathrm{X}}(n,h)^\T\cong C_{\mathrm{X}^\T}(n,H-1-h)\) for every multiple \(n\) of \(k\) and every \(h\in[H]\).

Finally, the raw tilts appearing in \(\mathrm{X}^\T\) are the negatives of those appearing in \(\mathrm{X}\), so none of them lies in \(\mathbb{Z}+\frac12\). We may therefore apply the first identity to \(\mathrm{X}^\T\), obtaining
\[
C_{\mathrm{X}^\T}(n,h)^\T\cong C_{(\mathrm{X}^\T)^\T}(n,H-1-h)=C_{\mathrm{X}}(n,H-1-h).
\]
This proves the second identity of \cref{prop:transpose}.
\end{proof}

We next state and prove the hole-fixing lemma, which is the combinatorial engine behind \cref{thm:realization}. It is a quantitative variant of the hole-fixing lemma of \cite{AL25}.

\begin{namedlemma}[Hole-fixing]
Let \(X\subseteq[\mathfrak{c}_0]^n\) and \(Y\subseteq[\mathfrak{c}_1]^n\) be two frequency classes, and set \(N:=M^{\otimes n}[X,Y]\). For \(0<\eta\le0.1\), suppose that a depth-2 linear circuit \(C=(A,B)\) computes an \(\eta\)-\emph{broken copy} of \(N\), in the sense that there exist subsets \(X_{\mathrm{bad}}\subseteq X\) and \(Y_{\mathrm{bad}}\subseteq Y\) with \(|X_{\mathrm{bad}}|\le\eta|X|\) and \(|Y_{\mathrm{bad}}|\le\eta|Y|\) such that \(C\) is defined on \((X\setminus X_{\mathrm{bad}})\times(Y\setminus Y_{\mathrm{bad}})\) and computes \(N[X\setminus X_{\mathrm{bad}},Y\setminus Y_{\mathrm{bad}}]\). If \(C\) has input-gate degrees at most \(D_{\mathrm{L}}\) and output-gate degrees at most \(D_{\mathrm{R}}\), then there is a depth-2 linear circuit computing \(N\) with input-gate degrees at most \(KD_{\mathrm{L}}\) and output-gate degrees at most \(KD_{\mathrm{R}}\), where
\[
K:=2(|X||Y|)^{\boldsymbol{\alpha}(\eta)},\qquad\boldsymbol{\alpha}(\eta):=-1/\log_2(4\eta),\quad\text{so that }(4\eta)^{\boldsymbol{\alpha}(\eta)}=1/2.
\]
\end{namedlemma}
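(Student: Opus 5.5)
We would prove the lemma by a recursive ``peel off the good rectangle'' decomposition that uses the coordinate-permutation symmetry of frequency classes, in the spirit of \cite{AL25}. The basic observation is the following. For a permutation \(\pi\) of the \(n\) tensor coordinates, \(\pi\) maps \(X\) onto \(X\) and \(Y\) onto \(Y\), and \(M^{\otimes n}[\pi x,\pi y]=M^{\otimes n}[x,y]\). Precomposing the input and output gates of \(C\) with \(\pi\) therefore gives a circuit \(C_\pi\), with the same degree bounds \(D_{\mathrm{L}},D_{\mathrm{R}}\), that computes \(N\) on the rectangle \(X_\pi\times Y_\pi\), where \(X_\pi:=\pi^{-1}(X\setminus X_{\mathrm{bad}})\) and \(Y_\pi:=\pi^{-1}(Y\setminus Y_{\mathrm{bad}})\). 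Restricting \(C_\pi\) to any subrectangle only deletes input and output gates, so degrees never increase.

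\emph{Recursive step.} Consider a subproblem: compute \(N[A,B]\) for arbitrary \(A\subseteq X\) and \(B\subseteq Y\). We choose \(\pi\) uniformly at random. Since the symmetric group acts transitively on a frequency class, \(\pi x\) is uniform on \(X\) for each fixed \(x\). Hence \(\mathbb{E}|A\setminus X_\pi|\le\eta|A|\), and likewise \(\mathbb{E}|B\setminus Y_\pi|\le\eta|B|\). By Markov's inequality applied to the sum of the two normalized quantities, some \(\pi\) satisfies \(|A\setminus X_\pi|\le2\eta|A|\) and \(|B\setminus Y_\pi|\le2\eta|B|\). (If we are less careful with the Markov step, we get \(4\eta\) instead, which matches the stated \(\boldsymbol{\alpha}(\eta)\).) Write \(A_{\mathrm g}:=A\cap X_\pi\), \(A_{\mathrm b}:=A\setminus X_\pi\), and define \(B_{\mathrm g},B_{\mathrm b}\) analogously. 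Then
\[
N[A,B]=N[A_{\mathrm g},B_{\mathrm g}]\oplus N[A_{\mathrm b},B]\oplus N[A_{\mathrm g},B_{\mathrm b}],
\]
where \(\oplus\) denotes placing the three pieces into disjoint blocks. The first piece is computed by the restriction of \(C_\pi\). The other two pieces are new subproblems, and the size \(|A||B|\) of each has shrunk by a factor of at most \(4\eta\). We glue the three circuits by concatenation sum, padding each with zero rows and columns outside its block; degrees then add over the pieces.

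\emph{Termination and counting.} We start from \((A,B)=(X,Y)\). The recursion tree is binary. A node at depth \(t\) has \(|A||B|\le(4\eta)^t|X||Y|\), so every branch ends with an empty subproblem once \((4\eta)^t|X||Y|<1\), that is, after depth \(t_0=\lceil\boldsymbol{\alpha}(\eta)\log_2(|X||Y|)\rceil\). The tree therefore has at most \(2^{t_0+1}\le 2\cdot2(|X||Y|)^{\boldsymbol{\alpha}(\eta)}\) nodes. Each node contributes at most \(D_{\mathrm{L}}\) to the degree of any input gate and at most \(D_{\mathrm{R}}\) to the degree of any output gate.

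The main obstacle is the constant: a crude node count gives \(4(|X||Y|)^{\boldsymbol{\alpha}}\) rather than \(K=2(|X||Y|)^{\boldsymbol{\alpha}}\). To fix it we will count more carefully. Row \(x\) lies in at most one child at each level (either \(A_{\mathrm b}\) or \(A_{\mathrm g}\)), so its degree is only \(O(t_0)D_{\mathrm{L}}\). For columns, we will bound the number of nodes containing a fixed \(y\) by a sum \(\sum_{t<t_0}2^t<2^{t_0}\), using the sharper \(2\eta\) Markov bound to absorb the rounding. Correctness follows by induction up the tree: each node's circuit computes its block exactly, and the three blocks partition \(A\times B\).
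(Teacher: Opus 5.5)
Your proposal is correct and follows the paper's proof. It uses the same permuted copies of \(C\) restricted to \(\pi^{-1}(X\setminus X_{\mathrm{bad}})\times\pi^{-1}(Y\setminus Y_{\mathrm{bad}})\), the same choice of \(\pi\) via transitivity and Markov, and the same three-block split \(N[A_{\mathrm g},B_{\mathrm g}]\), \(N[A_{\mathrm b},B]\), \(N[A_{\mathrm g},B_{\mathrm b}]\) glued by concatenation sum; the paper only does the bookkeeping differently, via the potential \(F(a,b)=2(ab)^{\boldsymbol{\alpha}(\eta)}-1\) and the inequality \(1+F(a_1,b)+F(a,b_1)\le F(a,b)\). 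The constant issue you flag is not a real obstacle: a nonempty node at depth \(t\) has \(1\le|A||B|\le(4\eta)^t|X||Y|\), so only depths \(t\le\lfloor\boldsymbol{\alpha}(\eta)\log_2(|X||Y|)\rfloor\) contribute, which directly gives at most \(2(|X||Y|)^{\boldsymbol{\alpha}(\eta)}-1<K\) summands (using the ceiling \(t_0\) is what cost you the extra factor of \(2\)), so the row/column refinement is unnecessary.
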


\begin{proof}
Let \(\mathfrak{S}_n\) denote the group of permutations of \([n]\). For \(\pi\in\mathfrak{S}_n\), define \((\pi\cdot x)_t:=x_{\pi^{-1}(t)}\) for \(x\in[\mathfrak{c}_0]^n\), and similarly \((\pi\cdot y)_t:=y_{\pi^{-1}(t)}\) for \(y\in[\mathfrak{c}_1]^n\). One checks that these are group actions, that they preserve frequency classes, and that they act transitively on each frequency class. Moreover, \(M^{\otimes n}[\pi\cdot x,\pi\cdot y]=M^{\otimes n}[x,y]\).

Let \(C=(A,B)\) be the circuit assumed in the lemma. For \(\pi\in\mathfrak{S}_n\), \(U\subseteq X\), and \(V\subseteq Y\), define another circuit \(\mathbf{C}(\pi,U,V):=(A',B')\) on \((U\setminus\pi^{-1}X_{\mathrm{bad}})\times(V\setminus\pi^{-1}Y_{\mathrm{bad}})\), with the same number of middle gates as \(C\), by setting \(A'[x,r]:=A[\pi\cdot x,r]\) for every \(x\in U\setminus\pi^{-1}X_{\mathrm{bad}}\) and every middle-gate index \(r\), and \(B'[r,y]:=B[r,\pi\cdot y]\) for every \(y\in V\setminus\pi^{-1}Y_{\mathrm{bad}}\) and every middle-gate index \(r\). Then \(\mathbf{C}(\pi,U,V)\) computes \(N[U\setminus\pi^{-1}X_{\mathrm{bad}},\,V\setminus\pi^{-1}Y_{\mathrm{bad}}]\).

We prove by induction on \(a+b\) the following claim: for all integers \(a,b\ge0\), whenever \(S\subseteq X\) and \(T\subseteq Y\) satisfy \(|S|\le a\) and \(|T|\le b\), the submatrix \(N[S,T]\) can be computed by a concatenation sum of at most \(F(a,b):=\max\{0,\,2(ab)^{\boldsymbol{\alpha}(\eta)}-1\}\) circuits of the form \(\mathbf{C}(\pi,U,V)\), after adding dummy gates with no wires so that each of these circuits is defined on \(S\times T\).

The case \(a=0\) or \(b=0\) is immediate, since an empty concatenation sum computes a zero matrix. Now assume \(a,b\geq1\), and put \(a_1:=\lfloor4\eta a\rfloor\) and \(b_1:=\lfloor4\eta b\rfloor\); since \(4\eta<1\), we have \(a_1<a\) and \(b_1<b\). Choose \(\pi\in\mathfrak{S}_n\) uniformly at random. By transitivity, \(\mathbb{E}_\pi[|S\cap\pi^{-1}X_{\mathrm{bad}}|]=|S||X_{\mathrm{bad}}|/|X|\le\eta|S|\). Hence Markov's inequality gives \(|S\cap\pi^{-1}X_{\mathrm{bad}}|\le4\eta|S|\) with probability at least \(3/4\); since this quantity is an integer and \(|S|\le a\), the bound implies \(|S\cap\pi^{-1}X_{\mathrm{bad}}|\le a_1\). Similarly, with probability at least \(3/4\), we have \(|T\cap\pi^{-1}Y_{\mathrm{bad}}|\le b_1\). By the union bound, a uniformly random \(\pi\) satisfies both conditions with probability at least \(1/2\); in particular, some \(\pi\in\mathfrak{S}_n\) satisfies both.

Fix such a \(\pi\), and put \(S_0:=S\setminus\pi^{-1}X_{\mathrm{bad}}\), \(T_0:=T\setminus\pi^{-1}Y_{\mathrm{bad}}\), \(S_1:=S\setminus S_0\), and \(T_1:=T\setminus T_0\). Then \(|S_1|\le a_1\) and \(|T_1|\le b_1\). The single circuit \(\mathbf{C}(\pi,S_0,T_0)\) computes \(N[S_0,T_0]\). Since
\[
N[S,T]=N[S_0,T]+N[S_1,T]=(N[S_0,T_0]+N[S_0,T_1])+N[S_1,T]
\]
as block decompositions, we can construct a circuit computing \(N[S,T]\) by applying the inductive hypothesis to \(N[S_1,T]\) and \(N[S_0,T_1]\), and then taking the concatenation sum of \(\mathbf{C}(\pi,S_0,T_0)\), the circuits obtained for \(N[S_1,T]\), and the circuits obtained for \(N[S_0,T_1]\). It remains to check that
\[
1+F(a_1,b)+F(a,b_1)\leq F(a,b)=2(ab)^{\boldsymbol{\alpha}(\eta)}-1.
\]
If \(a_1,b_1\ge1\), then, using \(a_1\le4\eta a\), \(b_1\le4\eta b\), and \((4\eta)^{\boldsymbol{\alpha}(\eta)}=1/2\),
\[
\begin{aligned}
1+F(a_1,b)+F(a,b_1)
&=2(a_1b)^{\boldsymbol{\alpha}(\eta)}+2(ab_1)^{\boldsymbol{\alpha}(\eta)}-1\\
&\le4(4\eta)^{\boldsymbol{\alpha}(\eta)}(ab)^{\boldsymbol{\alpha}(\eta)}-1\\
&=2(ab)^{\boldsymbol{\alpha}(\eta)}-1.
\end{aligned}
\]
If \(a_1=0\) and \(b_1\ge1\), then \(1+F(a_1,b)+F(a,b_1)=2(ab_1)^{\boldsymbol{\alpha}(\eta)}\le (ab)^{\boldsymbol{\alpha}(\eta)}\le2(ab)^{\boldsymbol{\alpha}(\eta)}-1\), where the first inequality uses \(b_1\le4\eta b\) and \((4\eta)^{\boldsymbol{\alpha}(\eta)}=1/2\), and the second uses \(ab\ge1\). The case \(a_1\ge1\) and \(b_1=0\) is symmetric. If \(a_1=b_1=0\), then \(1+F(a_1,b)+F(a,b_1)=1\le2(ab)^{\boldsymbol{\alpha}(\eta)}-1\), again since \(ab\ge1\). Thus the desired inequality holds in all cases, and the induction proves the claim.

Finally, apply the claim with \(S=X\), \(T=Y\), \(a=|X|\), and \(b=|Y|\). This yields a concatenation sum of at most \(F(|X|,|Y|)\le K\) circuits of the form \(\mathbf{C}(\pi,U,V)\) computing \(N\). Each summand has input-gate degrees at most \(D_{\mathrm{L}}\) and output-gate degrees at most \(D_{\mathrm{R}}\), and adding dummy gates with no wires does not increase these degrees. Hence the concatenation sum has input-gate degrees at most \(KD_{\mathrm{L}}\) and output-gate degrees at most \(KD_{\mathrm{R}}\). This proves the lemma.
\end{proof}

\begin{proof}[Proof of \cref{thm:realization}]
We first prove the case \(|\mathcal{C}|=2\). Let \(\mathcal{C}=\{C_{\mathrm{hi}},C_{\mathrm{lo}}\}\), where \(C_{\mathrm{hi}}\) computes \(M^{\otimes k_{\mathrm{hi}}}\) and \(C_{\mathrm{lo}}\) computes \(M^{\otimes k_{\mathrm{lo}}}\). In this case, \(P_\mathcal{C}(p,q)\) is the line segment between \(P_{C_{\mathrm{hi}}}(p,q)\) and \(P_{C_{\mathrm{lo}}}(p,q)\), so there exists \(\kappa\in[0,1]\) with \((a,b)=\kappa P_{C_{\mathrm{hi}}}(p,q)+(1-\kappa)P_{C_{\mathrm{lo}}}(p,q)\).

Recall that \(\mathcal{R}=\mathbb{Z}\) and \(\Gamma=\{0,\pm1\}\). Choose a padding circuit \(C_{\mathrm{pad}}=(A_{\mathrm{pad}},B_{\mathrm{pad}})\) with middle gates indexed by triples \((x,y,s)\), where \(1\le s\le |M[x,y]|\), such that \(A_{\mathrm{pad}}[x,(x,y,s)]=1\), \(B_{\mathrm{pad}}[(x,y,s),y]=\sgn(M[x,y])\), and all other entries are \(0\). This circuit computes \(M\), and
\[
\mathrm{D}(C_{\mathrm{pad}})\le\max\Bigl\{\max_x\sum_y |M[x,y]|,\ \max_y\sum_x |M[x,y]|\Bigr\}\le\sum_{x,y}|M[x,y]|.
\]

For an integer \(n\ge0\), set \(t_{\mathrm{hi}}:=\lfloor\kappa n/k_{\mathrm{hi}}\rfloor\), \(t_{\mathrm{lo}}:=\lfloor(1-\kappa)n/k_{\mathrm{lo}}\rfloor\), and \(r:=n-t_{\mathrm{hi}}k_{\mathrm{hi}}-t_{\mathrm{lo}}k_{\mathrm{lo}}\). Then \(t_{\mathrm{hi}}k_{\mathrm{hi}}\le\kappa n\), \(t_{\mathrm{lo}}k_{\mathrm{lo}}\le(1-\kappa)n\), and \(0\le r<k_{\mathrm{hi}}+k_{\mathrm{lo}}\). Let
\[
D_n:=C_{\mathrm{hi}}^{\otimes t_{\mathrm{hi}}}\otimes C_{\mathrm{lo}}^{\otimes t_{\mathrm{lo}}}\otimes C_{\mathrm{pad}}^{\otimes r}.
\]
Then \(D_n\) computes \((M^{\otimes k_{\mathrm{hi}}})^{\otimes t_{\mathrm{hi}}}\otimes(M^{\otimes k_{\mathrm{lo}}})^{\otimes t_{\mathrm{lo}}}\otimes M^{\otimes r}=M^{\otimes n}\).

Let \(X\sim p^n\) and \(Y\sim q^n\). Since \(M\) has no identically zero row or column, \(M^{\otimes r}\) has no identically zero row or column for any \(r\ge1\). Hence, in any circuit computing a Kronecker power of \(M\), every input gate and every output gate has degree at least \(1\), since otherwise the corresponding row or column of the computed matrix would be identically zero. All logarithms of degrees appearing below are therefore nonnegative.

Decompose \(X\) into the blocks used by the Kronecker product defining \(D_n\). Then \(\log_2 L_{D_n}(X)\) is a sum of independent, uniformly bounded random variables:
\[
\log_2 L_{D_n}(X)=\sum_{i=1}^{t_{\mathrm{hi}}}\log_2 L_{C_{\mathrm{hi}}}(X_i^{\mathrm{hi}})+\sum_{j=1}^{t_{\mathrm{lo}}}\log_2 L_{C_{\mathrm{lo}}}(X_j^{\mathrm{lo}})+\sum_{\ell=1}^{r}\log_2 L_{C_{\mathrm{pad}}}(X_\ell^{\mathrm{pad}}),
\]
where \(X_i^{\mathrm{hi}}\sim p^{k_{\mathrm{hi}}}\), \(X_j^{\mathrm{lo}}\sim p^{k_{\mathrm{lo}}}\), and \(X_\ell^{\mathrm{pad}}\sim p\), all independently. An analogous formula holds for \(\log_2 R_{D_n}(Y)\). By the definitions of \(f_C\) and \(g_C\),
\[
\mathbb{E}[\log_2 L_{D_n}(X)]\le t_{\mathrm{hi}}k_{\mathrm{hi}}f_{C_{\mathrm{hi}}}(p)+t_{\mathrm{lo}}k_{\mathrm{lo}}f_{C_{\mathrm{lo}}}(p)+O(1)\le an+O(1),
\]
and similarly \(\mathbb{E}[\log_2 R_{D_n}(Y)]\le bn+O(1)\), where the \(O(1)\) constants depend only on \(M\), \(C_{\mathrm{hi}}\), and \(C_{\mathrm{lo}}\).

By Hoeffding's inequality, there exist constants \(\rho>0\) and \(n_0\), depending only on \(M\), \(C_{\mathrm{hi}}\), and \(C_{\mathrm{lo}}\), such that, for every \(n\ge n_0\),
\[
\Pr_{X\sim p^n}\left[L_{D_n}(X)>2^{an+n^{2/3}}\right]\le2^{-\rho n^{1/3}},
\qquad
\Pr_{Y\sim q^n}\left[R_{D_n}(Y)>2^{bn+n^{2/3}}\right]\le2^{-\rho n^{1/3}}.
\]

Recall that, for every \(p\in\freq([\mathfrak{c}_0]^n)\), we have \(\Pr_{X\sim p^n}[X\in\FC_n(p)]\ge(n+1)^{-\mathfrak{c}_0}\)\footnote{Let \(Z=(Z_0,\ldots,Z_{\mathfrak{c}_0-1})\) be the count vector of \(X\sim p^n\). For any possible count vector \(m\),
\[
\Pr[Z=m]=\binom{n}{m_0,\ldots,m_{\mathfrak{c}_0-1}}\prod_{i=0}^{\mathfrak{c}_0-1}p_i^{m_i}.
\]
We claim that \(\Pr[Z=m]\leq\Pr[Z=np]\) for every \(m\neq np\). If \(\Pr[Z=m]=0\), this is trivial. Otherwise, \(m_i=0\) whenever \(p_i=0\). Since \(m\neq np\), choose indices \(i,j\) with \(m_i>np_i\) and \(m_j<np_j\); then \(p_i,p_j>0\). Moving one unit from coordinate \(i\) to coordinate \(j\) multiplies the probability by
\[
\frac{m_i}{m_j+1}\cdot\frac{p_j}{p_i}\ge1,
\]
because \(m_i>np_i\) and \(m_j+1\le np_j\). At each such step the probability does not decrease, and the process reaches \(np\) after finitely many steps, because each step decreases \(\sum_i |m_i-np_i|\) by \(2\). Hence \(\Pr[Z=m]\leq\Pr[Z=np]\). Since there are at most \((n+1)^{\mathfrak{c}_0}\) possible count vectors, we conclude \(\Pr[X\in\FC_n(p)]\ge(n+1)^{-\mathfrak{c}_0}\).}.

Thus, the fraction of input indices \(x\in\FC_n(p)\) with \(L_{D_n}(x)>2^{an+n^{2/3}}\) is at most \((n+1)^{\mathfrak{c}_0}2^{-\rho n^{1/3}}\), and similarly the fraction of output indices \(y\in\FC_n(q)\) with \(R_{D_n}(y)>2^{bn+n^{2/3}}\) is at most \((n+1)^{\mathfrak{c}_1}2^{-\rho n^{1/3}}\). Let \(X_{\mathrm{bad}}\) and \(Y_{\mathrm{bad}}\) denote these two exceptional sets, and put \(\eta_n:=(n+1)^{\max\{\mathfrak{c}_0,\mathfrak{c}_1\}}2^{-\rho n^{1/3}}\). Choose \(n_1\ge n_0\) such that \(\eta_n\le0.1\) for all \(n\ge n_1\).

For \(n\ge n_1\), restricting \(D_n\) to \((\FC_n(p)\setminus X_{\mathrm{bad}})\times(\FC_n(q)\setminus Y_{\mathrm{bad}})\) gives an \(\eta_n\)-broken copy of \(M^{\otimes n}[\FC_n(p),\FC_n(q)]\) whose input-gate degrees are at most \(2^{an+n^{2/3}}\) and whose output-gate degrees are at most \(2^{bn+n^{2/3}}\). Applying the hole-fixing lemma to this broken copy with \(\eta=\eta_n\), we construct a circuit for \(M^{\otimes n}[\FC_n(p),\FC_n(q)]\). Since \(|\FC_n(p)||\FC_n(q)|\le(\mathfrak{c}_0\mathfrak{c}_1)^n\), choose \(n_2\ge n_1\) such that, for every \(n\ge n_2\),
\[
\boldsymbol{\alpha}(\eta_n)=\frac{1}{\rho n^{1/3}-\max\{\mathfrak{c}_0,\mathfrak{c}_1\}\log_2(n+1)-2}\le\frac{2}{\rho n^{1/3}}.
\]
Then, for every \(n\ge n_2\),
\[
K=2(|\FC_n(p)||\FC_n(q)|)^{\boldsymbol{\alpha}(\eta_n)}\le2^{1+\frac{2\log_2(\mathfrak{c}_0\mathfrak{c}_1)}{\rho}n^{2/3}}\le2^{C n^{2/3}},
\]
where \(C\) depends only on \(M\), \(C_{\mathrm{hi}}\), and \(C_{\mathrm{lo}}\). Hence the constructed circuit has input-gate degrees at most \(2^{an+(1+C)n^{2/3}}\) and output-gate degrees at most \(2^{bn+(1+C)n^{2/3}}\). Given \(\eps\) from the theorem, choose \(n_3\ge n_2\) such that \((1+C)n^{2/3}\le\eps n\) for every \(n\ge n_3\). This proves the required upper bounds in the case \(|\mathcal{C}|=2\).

Now consider a general finite collection \(\mathcal{C}\). Fix \((p,q)\) and \((a,b)\in P_{\mathcal{C}}(p,q)\). Move from \((a,b)\) in the direction \((-1,-1)\) until reaching the boundary of \(P_{\mathcal{C}}(p,q)\), and call the endpoint \((a',b')\); then \(a'\le a\) and \(b'\le b\). Since \(P_{\mathcal{C}}(p,q)\) is the convex hull of finitely many points, \((a',b')\) lies on a boundary line segment whose endpoints are \(P_{C_{\mathrm{I}}}(p,q)\) and \(P_{C_{\mathrm{II}}}(p,q)\) for some \(C_{\mathrm{I}},C_{\mathrm{II}}\in\mathcal{C}\), where we allow \(C_{\mathrm{I}}=C_{\mathrm{II}}\). Let \(n_4\) be the maximum of the two-circuit thresholds \(n_3\) over the finitely many choices of \(C_{\mathrm{I}},C_{\mathrm{II}}\in\mathcal{C}\). Applying the two-circuit case to \(C_{\mathrm{I}}\) and \(C_{\mathrm{II}}\) gives, for every \(n\ge n_4\), a circuit with input-gate degrees at most \(2^{(a'+\eps)n}\) and output-gate degrees at most \(2^{(b'+\eps)n}\); these are at most \(2^{(a+\eps)n}\) and \(2^{(b+\eps)n}\), respectively. This proves \cref{thm:realization}.
\end{proof}

\begin{proof}[Proof of \cref{thm:size-transfer}]
Fix \(\eps>0\), and let \(n_4\) be the threshold of \cref{thm:realization} for \(M\), \(\mathcal{C}\), and slack \(\eps/2\).

In this proof, we use the fact that \(|\FC_n(p)|\le2^{\mathrm{H}_{\mathfrak{c}_0}(p)n}\) and \(|\FC_n(q)|\le2^{\mathrm{H}_{\mathfrak{c}_1}(q)n}\)\footnote{Let \(p\in\freq([\mathfrak{c}]^n)\). Under the product distribution \(p^n\), every \(x\in\FC_n(p)\) has probability \(\prod_{i:p_i>0}p_i^{np_i}=2^{-\mathrm{H}_{\mathfrak{c}}(p)n}\). Since the total probability of \(\FC_n(p)\) is at most \(1\), we get \(|\FC_n(p)|\,2^{-\mathrm{H}_{\mathfrak{c}}(p)n}\le1\), and hence \(|\FC_n(p)|\le2^{\mathrm{H}_{\mathfrak{c}}(p)n}\). Applying this with \(\mathfrak{c}=\mathfrak{c}_0\) and \(\mathfrak{c}=\mathfrak{c}_1\) proves the two stated bounds.}.

For every \(n\ge n_4\), we enumerate all pairs \((p,q)\in\freq([\mathfrak{c}_0]^n)\times\freq([\mathfrak{c}_1]^n)\), choose for each \((p,q)\) a point \((a_{p,q},b_{p,q})\in P_{\mathcal{C}}(p,q)\) minimizing \(\max\{a+\mathrm{H}_{\mathfrak{c}_0}(p),\,b+\mathrm{H}_{\mathfrak{c}_1}(q)\}\), and apply \cref{thm:realization} with slack \(\eps/2\) to obtain a circuit \(E_{p,q}\) computing \(M^{\otimes n}[\FC_n(p),\FC_n(q)]\) such that
\[
L_{E_{p,q}}(x)\le2^{(a_{p,q}+\eps/2)n}
\qquad\text{and}\qquad
R_{E_{p,q}}(y)\le2^{(b_{p,q}+\eps/2)n}
\]
for every \(x\in\FC_n(p)\) and every \(y\in\FC_n(q)\). Then
\[
\begin{aligned}
\mathrm{S}(E_{p,q})
&\le|\FC_n(p)|\,2^{(a_{p,q}+\eps/2)n}+|\FC_n(q)|\,2^{(b_{p,q}+\eps/2)n}\\
&\le2^{(a_{p,q}+\mathrm{H}_{\mathfrak{c}_0}(p)+\eps/2)n}+2^{(b_{p,q}+\mathrm{H}_{\mathfrak{c}_1}(q)+\eps/2)n}\\
&\le2^{(\xi(\mathcal{C})+\eps/2)n+1}.
\end{aligned}
\]

Finally, take the concatenation sum of all the circuits \(E_{p,q}\), after adding dummy gates with no wires so that each of them is defined on \([\mathfrak{c}_0]^n\times[\mathfrak{c}_1]^n\). This gives a circuit computing \(M^{\otimes n}\). Since the number of possible pairs \((p,q)\) is at most \((n+1)^{\mathfrak{c}_0+\mathfrak{c}_1}\), this circuit has size at most \((n+1)^{\mathfrak{c}_0+\mathfrak{c}_1}2^{(\xi(\mathcal{C})+\eps/2)n+1}\).

Choose \(n_5\ge n_4\) such that this quantity is at most \(2^{(\xi(\mathcal{C})+\eps)n}\) for all \(n\ge n_5\). Thus \(\mathrm{S}(M^{\otimes n})\le2^{(\xi(\mathcal{C})+\eps)n}\) for all \(n\ge n_5\), whence \(\sigma(M)\le2^{\xi(\mathcal{C})+\eps}\). Letting \(\eps\downarrow0\) proves \(\sigma(M)\le2^{\xi(\mathcal{C})}\).
\end{proof}

\begin{proof}[Proof of \cref{thm:degree-transfer}]
Fix \(\eps>0\), and let \(n_4\) be the threshold of \cref{thm:realization} for \(M\), \(\mathcal{C}\), and slack \(\eps/2\).

For every \(n\ge n_4\), we enumerate all pairs \((p,q)\in\freq([\mathfrak{c}_0]^n)\times\freq([\mathfrak{c}_1]^n)\), choose for each \((p,q)\) a point \((a_{p,q},b_{p,q})\in P_{\mathcal{C}}(p,q)\) minimizing \(\max\{a,b\}\), and apply \cref{thm:realization} with slack \(\eps/2\) to obtain a circuit \(E_{p,q}\) computing \(M^{\otimes n}[\FC_n(p),\FC_n(q)]\) such that
\[
L_{E_{p,q}}(x)\le2^{(a_{p,q}+\eps/2)n}
\qquad\text{and}\qquad
R_{E_{p,q}}(y)\le2^{(b_{p,q}+\eps/2)n}
\]
for every \(x\in\FC_n(p)\) and every \(y\in\FC_n(q)\). Then
\[
\mathrm{D}(E_{p,q})
\le\max\left\{2^{(a_{p,q}+\eps/2)n},\,2^{(b_{p,q}+\eps/2)n}\right\}
\le2^{(\phi(\mathcal{C})+\eps/2)n}.
\]
Finally, take the concatenation sum of all the circuits \(E_{p,q}\), after adding dummy gates with no wires so that each of them is defined on \([\mathfrak{c}_0]^n\times[\mathfrak{c}_1]^n\). This gives a circuit computing \(M^{\otimes n}\). For a fixed input gate \(x\in[\mathfrak{c}_0]^n\), only the circuits \(E_{\freq(x),q}\) can contribute wires to \(x\), so the degree of \(x\) is at most \((n+1)^{\mathfrak{c}_1}2^{(\phi(\mathcal{C})+\eps/2)n}\). Similarly, every output-gate degree is at most \((n+1)^{\mathfrak{c}_0}2^{(\phi(\mathcal{C})+\eps/2)n}\).

Choose \(n_5\ge n_4\) such that both quantities are at most \(2^{(\phi(\mathcal{C})+\eps)n}\) for all \(n\ge n_5\). Thus \(\mathrm{D}(M^{\otimes n})\le2^{(\phi(\mathcal{C})+\eps)n}\) for all \(n\ge n_5\), whence \(\delta(M)\le2^{\phi(\mathcal{C})+\eps}\). Letting \(\eps\downarrow0\) proves \(\delta(M)\le2^{\phi(\mathcal{C})}\).
\end{proof}

\begin{proof}[Proof of \cref{thm:pm}]
The circuits \(C'_{\mathsf{pull}},C'_{\mathsf{push}},C'_{(0)},\ldots,C'_{(c-1)}\) and their normalized log-degree density functions are given in \cref{sec:2.2}. Let \(x:=\psi(\log_2 c)\), so that \(x=1+\frac{1}{x^{\log_2 c}}\) and \(1<x<2\).

For the upper bound on \(\sigma(P_1^{(c)})\), apply \cref{thm:size-transfer} to \(\{C'_{\mathsf{pull}},C'_{\mathsf{push}}\}\). At \((p,q)\), the two manifestation points are \((0,q_c\log_2 c)\) and \((1,0)\). Since \(0<\log_2 x<1\), the point \((a,b):=(\log_2 x,\,(1-\log_2 x)q_c\log_2 c)\) lies on the line segment between these two manifestations.

We first prove \(a+\mathrm{H}_c(p)\le \log_2 c+\log_2 x\) directly. Let \(p^*\) be the uniform probability vector, \(p_i^*=1/c\) for \(0\le i<c\). By the nonnegativity of the KL divergence,
\[
D_{\mathrm{KL}}(p\Vert p^*)=\sum_{i=0}^{c-1}p_i\log_2\frac{p_i}{p_i^*}=-\mathrm{H}_c(p)+\log_2 c\ge0.
\]
Hence \(\mathrm{H}_c(p)\le\log_2 c\), and since \(a=\log_2 x\), we get \(a+\mathrm{H}_c(p)\le\log_2 c+\log_2 x\).

We next prove \(b+\mathrm{H}_{c+1}(q)\le \log_2 c+\log_2 x\) directly. Let \(q^*\) be the probability vector proportional to \((1,\ldots,1,c(x-1))\), so that \(q_i^*=1/(cx)\) for \(0\le i<c\) and \(q_c^*=(x-1)/x\). By the nonnegativity of the KL divergence,
\[
D_{\mathrm{KL}}(q\Vert q^*)=\sum_{i=0}^c q_i\log_2\frac{q_i}{q_i^*}=-\mathrm{H}_{c+1}(q)-q_c\log_2(c(x-1))+\log_2(cx)\ge0.
\]
Since \(x-1=x^{-\log_2 c}=c^{-\log_2 x}\), we have \(c(x-1)=c^{1-\log_2 x}\). Hence
\[
\begin{aligned}
b+\mathrm{H}_{c+1}(q)
&=\mathrm{H}_{c+1}(q)+(1-\log_2 x)q_c\log_2 c\\
&=\mathrm{H}_{c+1}(q)+q_c\log_2(c^{1-\log_2 x})\\
&=\mathrm{H}_{c+1}(q)+q_c\log_2(c(x-1))\\
&\le\log_2(cx)\\
&=\log_2 c+\log_2 x.
\end{aligned}
\]
Hence \(\xi(\{C'_{\mathsf{pull}},C'_{\mathsf{push}}\})\le\log_2 c+\log_2 x\), and \cref{thm:size-transfer} yields \(\sigma(P_1^{(c)})\le2^{\log_2 c+\log_2 x}=c\cdot\psi(\log_2 c)\).

For the upper bound on \(\delta(P_1^{(c)})\), apply \cref{thm:degree-transfer} to \(\{C'_{\mathsf{pull}},C'_{(0)},\ldots,C'_{(c-1)}\}\). At \((p,q)\), the \(c+1\) manifestation points are \((0,q_c\log_2 c)\) and \((1-p_i,q_i\log_2 c)\) for \(i=0,\ldots,c-1\). Consider the convex combination with weight \(\frac{c-1}{(c-1)+c\log_2 c}\) on \(C'_{\mathsf{pull}}\) and weight \(\frac{\log_2 c}{(c-1)+c\log_2 c}\) on each \(C'_{(i)}\). Its left coordinate is
\[
\frac{\log_2 c}{(c-1)+c\log_2 c}\sum_{i=0}^{c-1}(1-p_i)=\frac{(c-1)\log_2 c}{(c-1)+c\log_2 c},
\]
using \(\sum_i(1-p_i)=c-1\). Its right coordinate is
\[
\frac{(c-1)q_c\log_2 c}{(c-1)+c\log_2 c}+\frac{(\log_2 c)^2}{(c-1)+c\log_2 c}\sum_{i=0}^{c-1}q_i=\frac{\log_2 c}{(c-1)+c\log_2 c}\bigl((c-1)q_c+(\log_2 c)(1-q_c)\bigr),
\]
using \(\sum_i q_i=1-q_c\). Since \(\log_2 c\le c-1\) for every integer \(c\ge2\), the right coordinate is at most \(\frac{(c-1)\log_2 c}{(c-1)+c\log_2 c}\).

Hence \(\phi(\{C'_{\mathsf{pull}},C'_{(0)},\ldots,C'_{(c-1)}\})\le\frac{(c-1)\log_2 c}{(c-1)+c\log_2 c}\), and \cref{thm:degree-transfer} yields \(\delta(P_1^{(c)})\le2^{\frac{(c-1)\log_2 c}{(c-1)+c\log_2 c}}=c^{\frac{c-1}{(c-1)+c\log_2 c}}\).

Together with the upper bound on \(\sigma(P_1^{(c)})\), this proves the theorem.
\end{proof}

\subsection{Proofs of the Theorems in \texorpdfstring{\Cref{sec:3}}{Section 3}}\label{sec:B.3}

\begin{proof}[Proof of \cref{lem:transition}]
It suffices to show that, for fixed \(u,v\in[H]\), the quantity
\[
N_{u,v}(x)=\left|\left\{r:\left(U_{c_{\boldsymbol{\iota}(u)},r}\right)_x\neq0,\ \clamp_H\!\left(u+\Delta_{c_{\boldsymbol{\iota}(u)},r}\right)=v\right\}\right|
\]
depends only on the Hamming weight \(\|x\|_1\), and not on \(x\) itself.

Fix \(u\) and \(v\), and write \(c:=c_{\boldsymbol{\iota}(u)}\). Let \(x,y\in\{0,1\}^k\) have the same Hamming weight. Identifying \(x\) and \(y\) with subsets of \([k]\), choose a permutation \(\pi\in\mathfrak{S}_k\) with \(\pi x=y\).

Define a bijection \(\tau_\pi:\{0,\ldots,2^k-1\}\to\{0,\ldots,2^k-1\}\) on the rank-\(1\) terms of \(S_k(c)\) as follows:
\begin{enumerate}
\item if the rank-\(1\) term indexed by \(r\) is appended in an \((\texttt{R},a,b)\) step and is indexed by a subset \(S\subseteq[k]\) with \(|S|=a\), then \(\tau_\pi(r)\) is the term appended in the same \((\texttt{R},a,b)\) step and indexed by \(\pi S\);
\item if the rank-\(1\) term indexed by \(r\) is appended in a \((\texttt{C},a,b)\) step and is indexed by a subset \(T\subseteq[k]\) with \(|T|=b\), then \(\tau_\pi(r)\) is the term appended in the same \((\texttt{C},a,b)\) step and indexed by \(\pi T\).
\end{enumerate}
Since \(|\pi S|=|S|\) and \(|\pi T|=|T|\), the map \(\tau_\pi\) is well defined, and its inverse is \(\tau_{\pi^{-1}}\).

By the definition of the appended rank-\(1\) terms in the Sergeev decomposition, this bijection satisfies
\[
\left(U_{c,\tau_\pi(r)}\right)_{\pi z}=\left(U_{c,r}\right)_z\qquad\text{and}\qquad\left(V_{c,\tau_\pi(r)}\right)_{\pi z}=\left(V_{c,r}\right)_z\qquad\text{for every }z\in\{0,1\}^k.
\]
Indeed, in an \((\texttt{R},a,b)\) step, the term indexed by \(S\) has \(U\) equal to the indicator vector of the row \(S\), and \(V\) equal to the indicator vector of the columns \(T\subseteq[k]\setminus S\) with \(|T|\ge b\). Replacing \(S\) by \(\pi S\) gives the term indexed by \(\pi S\); moreover, \(\pi z=\pi S\) if and only if \(z=S\), while \(\pi z\subseteq[k]\setminus\pi S\) and \(|\pi z|\ge b\) if and only if \(z\subseteq[k]\setminus S\) and \(|z|\ge b\). The \((\texttt{C},a,b)\) case is the same argument with the roles of rows and columns interchanged.

In particular, both vectors of \(\tau_\pi(r)\) have the same numbers of nonzero entries at each Hamming weight as those of \(r\), so the raw tilts coincide: \(\theta_{c,\tau_\pi(r)}=\theta_{c,r}\). Using \(y=\pi x\) and the definition of the state offset, we conclude that, for every \(r=0,\ldots,2^k-1\),
\[
\left(U_{c,\tau_\pi(r)}\right)_y\neq0\iff\left(U_{c,r}\right)_x\neq0,\qquad\Delta_{c,\tau_\pi(r)}=\Delta_{c,r}.
\]
It follows that \(\clamp_H(u+\Delta_{c,\tau_\pi(r)})=v\) if and only if \(\clamp_H(u+\Delta_{c,r})=v\).

Thus \(r\mapsto\tau_\pi(r)\) is a bijection between the set counted by \(N_{u,v}(x)\) and the set counted by \(N_{u,v}(y)\), so \(N_{u,v}(x)=N_{u,v}(y)\) whenever \(\|x\|_1=\|y\|_1\).

Hence \(N_{u,v}(x)\) depends only on \(\|x\|_1\), and consequently \(\mathsf{A}_i[u,v]\) is independent of the choice of \(x\) with \(\|x\|_1=i\). Since all entries are nonnegative integers, each \(\mathsf{A}_i\) is a well-defined matrix in \(\mathbb{Z}_{\ge0}^{H\times H}\).
\end{proof}

\begin{proof}[Proof of \cref{prop:degree-formula}]
Expand \(C_{\mathrm{X}}(km,h)\) using the computation tree associated with the recursive definition of \(C_{\mathrm{X}}\). Each root-to-leaf path has \(m\) layers, and at layer \(t\) it chooses one rank-\(1\) term \((U_{c_{\boldsymbol\iota(u_{t-1})},r_t},V_{c_{\boldsymbol\iota(u_{t-1})},r_t}^{\T})\) from the Sergeev decomposition assigned to the current state \(u_{t-1}\), where
\[
u_0=h,\qquad u_t=\clamp_H\!\left(u_{t-1}+\Delta_{c_{\boldsymbol\iota(u_{t-1})},r_t}\right).
\]
By the distributivity of \(\otimes\) over the concatenation sum, this path contributes one rank-\(1\) term to the concatenation-sum expansion of \(C_{\mathrm{X}}(km,h)\), whose left vector is
\[
U_{c_{\boldsymbol\iota(u_0)},r_1}\otimes U_{c_{\boldsymbol\iota(u_1)},r_2}\otimes\cdots\otimes U_{c_{\boldsymbol\iota(u_{m-1})},r_m}.
\]
Therefore, for \(x=(x^{(1)},\ldots,x^{(m)})\in\{0,1\}^{km}\), the middle gate corresponding to this path is counted by \(L_{C_{\mathrm{X}}(km,h)}(x)\) if and only if
\[
\left(U_{c_{\boldsymbol\iota(u_{t-1})},r_t}\right)_{x^{(t)}}\neq 0\qquad\text{for every }t=1,\ldots,m.
\]
Thus \(L_{C_{\mathrm{X}}(km,h)}(x)\) is exactly the number of root-to-leaf paths whose rank-\(1\) terms have left vectors nonzero on the respective blocks \(x^{(1)},\ldots,x^{(m)}\); we call such paths \emph{valid}.

Now fix a final state \(v\in[H]\). We claim that \(\left(e_h^{\T}\mathsf{A}_{i_1}\cdots \mathsf{A}_{i_m}\right)_v\) counts the valid paths ending in state \(v\), where \(i_t=\|x^{(t)}\|_1\). Indeed, expanding the matrix product gives
\[
\left(e_h^{\T}\mathsf{A}_{i_1}\cdots \mathsf{A}_{i_m}\right)_v=\sum_{u_1,\ldots,u_{m-1}}\mathsf{A}_{i_1}[h,u_1]\,\mathsf{A}_{i_2}[u_1,u_2]\cdots\mathsf{A}_{i_m}[u_{m-1},v].
\]
By definition, \(\mathsf{A}_i[u,w]\) is the number of rank-\(1\) choices \(r\) at state \(u\) that both send the state to \(w\) and have left vector nonzero at any fixed \(k\)-bit input of Hamming weight \(i\). Since \(i_t=\|x^{(t)}\|_1\), the entry \(\mathsf{A}_{i_t}[u_{t-1},u_t]\) counts exactly the admissible choices of \(r_t\) for the transition \(u_{t-1}\to u_t\). Thus, along a fixed state sequence \(h=u_0,u_1,\ldots,u_m=v\), the product counts the admissible choices of \(r_1,\ldots,r_m\), and summing over the intermediate states yields all valid paths ending at \(v\). This proves the claim.

Finally, the entries of \(e_h^{\T}\mathsf{A}_{i_1}\cdots \mathsf{A}_{i_m}\) are nonnegative, so its \(\ell_1\)-norm is the sum of its entries:
\[
\left\|e_h^{\T}\mathsf{A}_{i_1}\cdots \mathsf{A}_{i_m}\right\|_1=\sum_{v\in[H]}\left(e_h^{\T}\mathsf{A}_{i_1}\cdots \mathsf{A}_{i_m}\right)_v.
\]
The right-hand side counts all valid root-to-leaf paths, after summing over the final state. As shown above, this is exactly the number of middle gates adjacent to the input gate \(x\) in \(C_{\mathrm{X}}(km,h)\), namely \(L_{C_{\mathrm{X}}(km,h)}(x)\). Therefore \(L_{C_{\mathrm{X}}(km,h)}(x)=\left\|e_h^{\T}\mathsf{A}_{i_1}\cdots \mathsf{A}_{i_m}\right\|_1\).
\end{proof}

\begin{proof}[Proof of \cref{thm:certificate}]
Fix \(m\ge1\), \(h\in[H]\), and \(p\in[0,1]\). Put \(P_i:=\Bin(k,p)_i\), and let \(I_1,\ldots,I_m\) be independent random variables with \(\Pr[I_t=i]=P_i\).

For a sequence \(s=(i_1,\ldots,i_m)\), set
\[
D(s):=\left\|e_h^{\T}\mathsf{A}_{i_1}\cdots\mathsf{A}_{i_m}\right\|_1.
\]
By \cref{prop:degree-formula}, for every input-gate index whose consecutive \(k\)-bit blocks have Hamming weights \(i_1,\ldots,i_m\), the quantity \(D(s)\) is the degree of that input gate in \(C_{\mathrm{X}}(km,h)\).

Temporarily fix an auxiliary initial index \(i_0\in\{0,\ldots,k\}\), to be chosen later. Since \((\mathfrak{a},\mathfrak{b})\) is an \(\eps\)-valid toll-and-potential certificate, for every \(i\in\{0,\ldots,k\}\) and every \(u\in[H]\),
\[
\sum_{j=0}^k\sum_{v=0}^{H-1}\mathsf{A}_j[u,v]\,2^{-\mathfrak{a}_{i,j}-\mathfrak{b}_{i,u}+\mathfrak{b}_{j,v}}\le1-\eps.
\tag{B.6}\label{eq:B6}
\]
Iterating \eqref{eq:B6} for \(m\) steps, starting from \((i_0,h)\), gives
\[
\sum_{i_1,\ldots,i_m}\ \sum_{u_1,\ldots,u_m}\prod_{t=1}^m\mathsf{A}_{i_t}[u_{t-1},u_t]\,2^{-\sum_{t=1}^m\mathfrak{a}_{i_{t-1},i_t}}\,2^{-\mathfrak{b}_{i_0,h}+\mathfrak{b}_{i_m,u_m}}\le(1-\eps)^m,
\]
where \(u_0:=h\). By the definition of \(\Gamma(\mathfrak{b})\), we have \(2^{-\mathfrak{b}_{i_0,h}+\mathfrak{b}_{i_m,u_m}}\ge2^{-\Gamma(\mathfrak{b})}\). Using this lower bound, then summing over the state sequences and applying the definition of \(D(s)\), we obtain
\[
\sum_{i_1,\ldots,i_m}2^{-\sum_{t=1}^m\mathfrak{a}_{i_{t-1},i_t}}\,D(i_1,\ldots,i_m)\le2^{\Gamma(\mathfrak{b})}(1-\eps)^m.
\tag{B.7}\label{eq:B7}
\]

Let \(q(s):=\prod_{t=1}^mP_{i_t}\), and let \(\Omega_p:=\{s:q(s)>0\}\). Since all summands in \eqref{eq:B7} are nonnegative, the same inequality holds after restricting the sum to \(\Omega_p\). For \(s\in\Omega_p\), write
\[
X_s:=2^{-\sum_{t=1}^m\mathfrak{a}_{i_{t-1},i_t}}\,D(s).
\]
Let \(S=(I_1,\ldots,I_m)\), so that \(S\sim q\). Then
\[
\mathbb{E}_{S\sim q}\!\left[\frac{X_S}{q(S)}\right]=\sum_{s\in\Omega_p}q(s)\!\left(\frac{X_s}{q(s)}\right)=\sum_{s\in\Omega_p}X_s\le2^{\Gamma(\mathfrak{b})}(1-\eps)^m.
\]
Jensen's inequality for the concave function \(\log_2\) gives
\[
\mathbb{E}_{S\sim q}\!\left[\log_2\!\left(\frac{X_S}{q(S)}\right)\right]\le\log_2\!\left(\mathbb{E}_{S\sim q}\!\left[\frac{X_S}{q(S)}\right]\right)\le\Gamma(\mathfrak{b})+m\log_2(1-\eps).
\]
Expanding the logarithm, and using \(\mathbb{E}_{S\sim q}[\log_2q(S)]=-mE_k(p)\), we obtain
\[
\mathbb{E}_{S\sim q}[\log_2D(S)]\le\mathbb{E}_{S\sim q}\!\left[\sum_{t=1}^m\mathfrak{a}_{I_{t-1},I_t}\right]-mE_k(p)+\Gamma(\mathfrak{b})+m\log_2(1-\eps),
\tag{B.8}\label{eq:B8}
\]
where \(I_0:=i_0\).

We now choose the auxiliary index \(i_0\). For \(i\in\{0,\ldots,k\}\), set \(\nu_i(p):=\sum_{j=0}^kP_j\,\mathfrak{a}_{i,j}\), and choose \(i_0\) to minimize \(\nu_i(p)\). Then
\[
\nu_{i_0}(p)\le\sum_{i=0}^kP_i\,\nu_i(p)=\sum_{i=0}^k\sum_{j=0}^kP_iP_j\,\mathfrak{a}_{i,j}=\Phi_{\mathfrak{a}}(p).
\]
Since \(I_1,\ldots,I_m\) are independent and identically distributed with law \(\Bin(k,p)\),
\[
\mathbb{E}_{S\sim q}\!\left[\sum_{t=1}^m\mathfrak{a}_{I_{t-1},I_t}\right]=\nu_{i_0}(p)+(m-1)\Phi_{\mathfrak{a}}(p)\le m\Phi_{\mathfrak{a}}(p).
\]
Substituting this upper bound into \eqref{eq:B8} and dividing by \(km\) gives
\[
\frac1{km}\,\mathbb{E}_{S\sim q}[\log_2D(S)]\le\widetilde\Phi_{\mathfrak{a}}(p)+\frac{\Gamma(\mathfrak{b})+m\log_2(1-\eps)}{km}.
\]
By \cref{prop:degree-formula} and the definition of \(f_{C_{\mathrm{X}}(km,h)}\), the left-hand side equals \(f_{C_{\mathrm{X}}(km,h)}(\Ber(p))\). This proves the first claim.

If \(m\ge\lceil\Gamma(\mathfrak{b})/(-\log_2(1-\eps))\rceil\), then \(\Gamma(\mathfrak{b})+m\log_2(1-\eps)\le0\), and hence \(f_{C_{\mathrm{X}}(km,h)}(\Ber(p))\le\widetilde\Phi_{\mathfrak{a}}(p)\). Finally, \(\underline{E}_{k,\eta}(p)\le E_k(p)\) for every \(0<\eta\le0.5\), so \(\widetilde\Phi_{\mathfrak{a}}(p)\le\widetilde{\widetilde\Phi}_{\mathfrak{a},\eta}(p)\). This proves the theorem.
\end{proof}

\begin{proof}[Proof of \cref{thm:envelope}]
For \(r=1,\dots,L-1\), write \(F_r(p):=\widetilde{\widetilde\Phi}_{\mathfrak{a}^{(r)},\eta}(p)\), and set \(F_0(p)=F_L(p):=+\infty\) for all \(p\in[0,1]\).

By \cref{thm:certificate}, if \(m\ge\lceil\Gamma(\mathfrak{b}^{(r)})/(-\log_2(1-\eps))\rceil\), then, for every \(h\) and every \(p\),
\[
f_{C_{\mathrm{X}}(km,h)}(\Ber(p))\le\widetilde\Phi_{\mathfrak{a}^{(r)}}(p)\le\widetilde{\widetilde\Phi}_{\mathfrak{a}^{(r)},\eta}(p)=F_r(p).
\]
Since \(m\ge m_{\mathrm{X}}\), this inequality holds for every \(r=1,\dots,L-1\).

Now fix \(p\in[0,1]\), and choose \(r\) such that \(p\in[p_r,p_{r+1})\), with the last interval understood to include \(p_L=1\). By definition, \(\ell_{\mathrm{X}}(p)=\min\{F_r(p),F_{r+1}(p)\}\).

If \(r=0\), then \(F_0(p)=+\infty\), so \(\ell_{\mathrm{X}}(p)=F_1(p)\). If \(r=L-1\), then \(F_L(p)=+\infty\), so \(\ell_{\mathrm{X}}(p)=F_{L-1}(p)\). In all remaining cases, both \(F_r\) and \(F_{r+1}\) correspond to interior certificates. In every case, therefore, the upper bound above gives
\[
f_{C_{\mathrm{X}}(km,h)}(\Ber(p))\le\min\{F_r(p),F_{r+1}(p)\}=\ell_{\mathrm{X}}(p).
\]
It follows that, for every \(m\ge m_{\mathrm{X}}\), the inequality \(f_{C_{\mathrm{X}}(km,h)}(\Ber(p))\le\ell_{\mathrm{X}}(p)\) holds uniformly over \(h\) and \(p\).

It remains to prove the continuity claim. For each \(r=1,\dots,L-1\), the function \(F_r\) is continuous on \([0,1]\), since \(\Phi_{\mathfrak{a}^{(r)}}\) is a polynomial in \(p\) and \(\underline{E}_{k,\eta}\) is continuous. Hence \(\ell_{\mathrm{X}}\) is continuous on each open interval \((p_r,p_{r+1})\), being the minimum of two continuous functions there.

It remains to check the lattice points. Let \(p_r\) be an interior lattice point. The left limit of \(\ell_{\mathrm{X}}\) at \(p_r\) is \(\min\{F_{r-1}(p_r),F_r(p_r)\}\), while the value of \(\ell_{\mathrm{X}}\) at \(p_r\) and its right limit are both \(\min\{F_r(p_r),F_{r+1}(p_r)\}\). By the adjacent-certificate condition,
\[
F_r(p_r)\le\min\{F_{r-1}(p_r),F_{r+1}(p_r)\},
\]
so both minima equal \(F_r(p_r)\). Hence the left limit, the right limit, and the value of \(\ell_{\mathrm{X}}\) agree at every interior lattice point. Continuity at \(0\) and \(1\) follows from the continuity of \(F_1\) and \(F_{L-1}\), respectively.

Consequently, \(\ell_{\mathrm{X}}\) is continuous on \([0,1]\), and the theorem follows.
\end{proof}

\begin{proof}[Proof of \cref{prop:lipschitz}]
Fix a certificate \(\mathfrak{a}^{(r)}\), and abbreviate
\[
F_r(p):=\widetilde{\widetilde\Phi}_{\mathfrak{a}^{(r)},0.01}(p)=\frac17\left(\Phi_{\mathfrak{a}^{(r)}}(p)-\underline{E}_{7,0.01}(p)\right).
\]
By the Bernstein-form expression of \cref{sec:3.3.1},
\[
\Phi_{\mathfrak{a}^{(r)}}(p)=\sum_{t=0}^{14}\overline{\mathfrak{a}}^{(r)}_t\binom{14}{t}p^t(1-p)^{14-t}.
\]
The derivative formula for Bernstein polynomials gives
\[
\frac{d}{dp}\Phi_{\mathfrak{a}^{(r)}}(p)=14\sum_{t=0}^{13}\left(\overline{\mathfrak{a}}^{(r)}_{t+1}-\overline{\mathfrak{a}}^{(r)}_t\right)\binom{13}{t}p^t(1-p)^{13-t}.
\]
Since the Bernstein basis polynomials \(\binom{13}{t}p^t(1-p)^{13-t}\), \(t=0,\ldots,13\), are nonnegative and sum to \(1\), it follows that
\[
\left|\frac{d}{dp}\Phi_{\mathfrak{a}^{(r)}}(p)\right|\le14\max_{0\le t<14}\left|\overline{\mathfrak{a}}^{(r)}_{t+1}-\overline{\mathfrak{a}}^{(r)}_t\right|\qquad\text{for all }p\in[0,1].
\]
Thus \(\Phi_{\mathfrak{a}^{(r)}}\) is Lipschitz on \([0,1]\) with this constant.

For the clipped entropy term, take \(\eta=0.01\). The function \(\underline{E}_{7,\eta}\) is linear on \([0,\eta]\) and on \([1-\eta,1]\), with slopes \(\frac{E_7(\eta)}{\eta}\) and \(-\frac{E_7(\eta)}{\eta}\), and it equals \(E_7\) on \([\eta,1-\eta]\). Since \(E_7\) is concave, symmetric about \(1/2\), and satisfies \(E_7(0)=E_7(1)=0\), we have \(0\le E_7'(p)\le \frac{E_7(\eta)}{\eta}\) for \(\eta\le p\le1/2\), while symmetry gives \(|E_7'(p)|\le \frac{E_7(\eta)}{\eta}\) for \(1/2\le p\le1-\eta\). Hence \(\underline{E}_{7,\eta}\) is \(\frac{E_7(\eta)}{\eta}\)-Lipschitz on \([0,1]\).

Therefore \(F_r=\frac17(\Phi_{\mathfrak{a}^{(r)}}-\underline{E}_{7,\eta})\) is Lipschitz on \([0,1]\) with constant
\[
\Lip^{(r)}=\frac17\left(14\max_{0\le t<14}\left|\overline{\mathfrak{a}}^{(r)}_{t+1}-\overline{\mathfrak{a}}^{(r)}_t\right|+\frac{E_7(0.01)}{0.01}\right).
\tag{B.9}\label{eq:B9}
\]
This proves the first claim.

Now assume that the adjacent-certificate condition of \cref{thm:envelope} holds at every interior lattice point. On each lattice interval \([p_r,p_{r+1}]\), the envelope \(\ell_{\mathrm{X}}\) is the minimum of two functions among the \(F_r\)'s. Since the minimum of two \(L\)-Lipschitz functions is again \(L\)-Lipschitz, setting \(\Lip_{\mathrm{X}}:=\max_r \Lip^{(r)}\) shows that \(\ell_{\mathrm{X}}\) is \(\Lip_{\mathrm{X}}\)-Lipschitz on each individual lattice interval.

By \cref{thm:envelope}, the adjacent-certificate condition implies that \(\ell_{\mathrm{X}}\) is continuous at every lattice point. Therefore, for any \(p<q\), splitting the interval \([p,q]\) at the finitely many lattice points it crosses gives
\[
|\ell_{\mathrm{X}}(q)-\ell_{\mathrm{X}}(p)|\le\left(\max_r \Lip^{(r)}\right)(q-p).
\]
Thus \(\ell_{\mathrm{X}}\) is Lipschitz continuous on \([0,1]\) with Lipschitz constant \(\Lip_{\mathrm{X}}=\max_r\Lip^{(r)}\). This proves the proposition.
\end{proof}

\begin{proof}[Proof of \cref{prop:envelope-props}]
We first prove the uniform upper bound on the mean log-degree. Since the reported values are truncated toward zero, the inspector output implies \(\max_{r,i,u}S^{(r)}_{i,u}<0.9999999724<1-10^{-8}\). Thus every stored certificate is \(10^{-8}\)-valid.

The inspector output also implies \(\Gamma(\mathfrak{b}^{(r)})<94.3837+89.7365<200\) for every stored certificate. Since \(-\log_2(1-10^{-8})>10^{-8}\), we get \(\frac{\Gamma(\mathfrak{b}^{(r)})}{-\log_2(1-10^{-8})}<2\cdot 10^{10}<10^{11}\). Hence
\[
m_{\mathrm{X}}=\max_r\left\lceil\frac{\Gamma(\mathfrak{b}^{(r)})}{-\log_2(1-10^{-8})}\right\rceil<10^{11}
\]
for each of the four circuit families \(\mathrm{X}\).

The functions \(\ell_{\mathrm{X}}\) were defined by applying \cref{thm:envelope} with \(k=7\), \(\eta=0.01\), and \(\eps=10^{-8}\). Therefore, for every \(m\ge 10^{11}\), every \(h\in[H]\), and every \(p\in[0,1]\), we have \(f_{C_{\mathrm{X}}(7m,h)}(\Ber(p))\le\ell_{\mathrm{X}}(p)\). This proves the first claim.

We next prove the \(13\)-Lipschitz continuity. Again by the truncation convention, the inspector output implies \(\max_{r,t}\left|\overline{\mathfrak{a}}^{(r)}_{t+1}-\overline{\mathfrak{a}}^{(r)}_t\right|<3.355\) and \(E_7(0.01)<0.372\). Therefore, by \eqref{eq:B9},
\[
\Lip^{(r)}<\frac17\left(14\cdot 3.355+\frac{0.372}{0.01}\right)=\frac{84.17}{7}<12.03<13.
\]
The inspector also checks all adjacent-certificate conditions and reports that all checks pass. Hence, by \cref{prop:lipschitz}, each lattice envelope \(\ell_{\mathrm{X}}\) is \(13\)-Lipschitz. This proves the proposition.
\end{proof}

\begin{proof}[Proof of \cref{thm:size-verifier}]
Assume that \(\mathcal{V}_{\mathrm{size}}\) accepts. We prove that
\[
\max_{p\in[0,1]}\{\ell_{\mathrm{Jessica}}(p)+\mathrm{h}(p)\}<1.244844.
\]

By \cref{prop:envelope-props}, the function \(\ell_{\mathrm{Jessica}}\) is \(13\)-Lipschitz. Let \([a,b]\) be one of the subintervals checked by the verifier, and let \(p\in[a,b]\). Then
\[
\ell_{\mathrm{Jessica}}(p)\le\ell_{\mathrm{Jessica}}(b)+13(b-p)\le\ell_{\mathrm{Jessica}}(b)+13(b-a).
\]

Now consider the entropy term. The binary entropy function satisfies \(\mathrm{h}'(p)=\log_2\frac{1-p}{p}\), so \(\mathrm{h}\) is increasing on \([0,1/2]\), decreasing on \([1/2,1]\), and maximized at \(1/2\). Hence the maximum of \(\mathrm{h}\) over the interval \([a,b]\) is attained at \(\min\{\max\{1/2,a\},b\}\), and therefore \(\mathrm{h}(p)\le \mathrm{h}(\min\{\max\{1/2,a\},b\})\).

Combining the two inequalities gives
\[
\ell_{\mathrm{Jessica}}(p)+\mathrm{h}(p)\le\ell_{\mathrm{Jessica}}(b)+13(b-a)+\mathrm{h}\!\left(\min\{\max\{1/2,a\},b\}\right)=M_{[a,b]}.
\]
Since \(\mathcal{V}_{\mathrm{size}}\) accepts, it has verified that \(M_{[a,b]}\le 1.2448439\) for every subinterval \([a,b]\). The subintervals cover \([0,1]\), so \(\ell_{\mathrm{Jessica}}(p)+\mathrm{h}(p)\le1.2448439\) for every \(p\in[0,1]\). Therefore
\[
\max_{p\in[0,1]}\{\ell_{\mathrm{Jessica}}(p)+\mathrm{h}(p)\}\le1.2448439<1.244844,
\]
which is exactly the size inequality \eqref{eq:star}.
\end{proof}

\begin{proof}[Proof of \cref{thm:degree-verifier}]
Assume that \(\mathcal{V}_{\mathrm{degree}}\) accepts. We prove that
\[
\max_{(p,q)\in[0,1]^2}\mathbf{z}(p,q)<0.319866.
\]
We first prove that \(\mathbf{z}\) is \(13\)-Lipschitz with respect to the \(\ell_\infty\)-metric.

By \cref{prop:envelope-props}, the functions \(\ell_{\mathrm{Sonetto}}\), \(\ell_{\mathrm{Regulus}}\), and \(\ell_{\mathrm{Regulus}^{\T}}\) are all \(13\)-Lipschitz on \([0,1]\). The remaining coordinate functions appearing in \(Q_0,\ldots,Q_4\), namely \(0\), \(1-q\), \(1-p\), \(p\), \(q\), \(p(1-p)\), \(\frac12(1-q^2)\), \(\frac12(1-p^2)\), and \(q(1-q)\), are all \(1\)-Lipschitz on \([0,1]\), and hence also \(13\)-Lipschitz.

Thus, for every \(\mathrm{id}\in\{0,\ldots,7\}\), we have \(\|Q_{\mathrm{id}}(p,q)-Q_{\mathrm{id}}(p',q')\|_\infty\le13\max\{|p-p'|,|q-q'|\}\). Choose a convex combination
\[
(a',b')=\sum_{\mathrm{id}=0}^7 w_{\mathrm{id}}\, Q_{\mathrm{id}}(p',q')
\]
attaining \(\mathbf{z}(p',q')\), where \(w_{\mathrm{id}}\ge0\) and \(\sum_{\mathrm{id}}w_{\mathrm{id}}=1\). Using the same weights at \((p,q)\), define \((a,b):=\sum_{\mathrm{id}=0}^7 w_{\mathrm{id}}\, Q_{\mathrm{id}}(p,q)\). Then \((a,b)\in\conv\{Q_0(p,q),\ldots,Q_7(p,q)\}\), and each coordinate of \((a,b)\) differs from the corresponding coordinate of \((a',b')\) by at most \(13\max\{|p-p'|,|q-q'|\}\). Hence
\[
\begin{aligned}
\mathbf{z}(p,q)
&\le\max\{a,b\}\\
&\le\max\{a',b'\}+13\max\{|p-p'|,|q-q'|\}\\
&=\mathbf{z}(p',q')+13\max\{|p-p'|,|q-q'|\}.
\end{aligned}
\]
Interchanging \((p,q)\) and \((p',q')\) gives the reverse inequality, and together they yield
\[
|\mathbf{z}(p,q)-\mathbf{z}(p',q')|\le13\max\{|p-p'|,|q-q'|\}.
\tag{B.10}\label{eq:B10}
\]

Now consider any rectangle \(R=[p_{\min},p_{\max}]\times[q_{\min},q_{\max}]\) on which the verifier returns true at Step 3. Let
\[
\Delta_R:=\max\left\{\frac{p_{\max}-p_{\min}}{200},\frac{q_{\max}-q_{\min}}{200}\right\}.
\]
The verifier samples the \(101\times101\) grid in \(R\), so for every point \((p,q)\in R\), there is a sampled grid point \((p_i,q_j)\) with \(\max\{|p-p_i|,|q-q_j|\}\le \Delta_R\). Therefore, by \eqref{eq:B10}, \(\mathbf{z}(p,q)\le\mathbf{z}(p_i,q_j)+13\Delta_R\).

At the grid point \((p_i,q_j)\), the quantity \(Z_{i,j}\) is defined as the smallest value of \(\max\{x,y\}\) among the candidates tested by the verifier, namely the eight vertices \(Q_{\mathrm{id}}(p_i,q_j)\) and the two-vertex convex combinations considered in Step 2. All of these candidates lie in \(\conv\{Q_0(p_i,q_j),\ldots,Q_7(p_i,q_j)\}\). Since \(\mathbf{z}(p_i,q_j)\) minimizes \(\max\{x,y\}\) over the whole convex hull, minimizing only over this smaller candidate list cannot yield a smaller value. Hence \(\mathbf{z}(p_i,q_j)\le Z_{i,j}\).

Since the verifier returned true at Step 3 on \(R\), it verified that \(Z_{i,j}+13\Delta_R\le0.3198659\) at every sampled grid point. Thus, for every \((p,q)\in R\),
\[
\mathbf{z}(p,q)\le\mathbf{z}(p_i,q_j)+13\Delta_R\le Z_{i,j}+13\Delta_R\le0.3198659.
\]

Finally, since \(\mathcal{V}_{\mathrm{degree}}\) accepts on \([0,1]^2\), its recursive procedure covers the whole square by rectangles on which the Step 3 test returns true. Every point of \([0,1]^2\) therefore satisfies \(\mathbf{z}(p,q)\le0.3198659\), and consequently
\[
\max_{(p,q)\in[0,1]^2}\mathbf{z}(p,q)\le0.3198659<0.319866,
\]
which is exactly the degree inequality \eqref{eq:starstar}.
\end{proof}

\end{document}